\newtheorem{definition}{Definition}
\newtheorem{theorem}{Theorem}
\newtheorem{lemma}{Lemma}
\newtheorem{corollary}{Corollary}
\newtheorem{observation}{Observation}
\newtheorem{proposition}{Proposition}
\theoremstyle{remark}{\newtheorem{remark}{Remark}}
\newcolumntype{L}[1]{>{\raggedright\let\newline\\\arraybackslash\hspace{0pt}}m{#1}}
\newcolumntype{C}[1]{>{\centering\let\newline\\\arraybackslash\hspace{0pt}}m{#1}}
\newcolumntype{R}[1]{>{\raggedleft\let\newline\\\arraybackslash\hspace{0pt}}m{#1}}
\definecolor{Gray1}{gray}{0.7}
\definecolor{Gray2}{gray}{0.75}
\definecolor{Gray3}{gray}{0.80}
\definecolor{Gray4}{gray}{0.85}
\definecolor{Gray5}{gray}{0.87}
\definecolor{Gray6}{gray}{0.90}
\definecolor{Gray7}{gray}{0.92}
\definecolor{Gray8}{gray}{0.95}
\definecolor{Gray9}{gray}{0.97}
\definecolor{Gray10}{gray}{1}
\newcolumntype{[}{!{\vrule width 1pt}}
\title{Fault Tolerant Network Constructors
	\thanks{All authors were supported by the EEE/CS initiative NeST. The last author was also supported by the Leverhulme Research Centre for Functional Materials Design.
This work was partially supported by the EPSRC Grant EP/P02002X/1 on Algorithmic Aspects of Temporal Graphs.\newline
\textit{Email addresses:} \href{mailto:Othon.Michail@liverpool.ac.uk}{Othon.Michail@liverpool.ac.uk} (Othon Michail), \href{mailto:P.Spirakis@liverpool.ac.uk}{P.Spirakis@liverpool.ac.uk} (Paul G. Spirakis), \href{mailto:Michail.Theofilatos@liverpool.ac.uk}{Michail.Theofilatos@liverpool.ac.uk} (Michail Theofilatos)}
}
\author[1]{Othon Michail}
\author[1,2]{Paul G. Spirakis}
\author[1]{Michail Theofilatos}
\affil[1]{Department of Computer Science, University of Liverpool, UK} \affil[2]{Computer Engineering and Informatics Department, University of Patras, Greece}
\date{}
\begin{document}
\maketitle

\begin{abstract} \normalsize
	In this work, we consider adversarial crash faults of nodes in the network constructors model $[$Michail and Spirakis, 2016$]$.
	We first show that, without further assumptions, the class of graph languages that can be (stably) constructed under crash faults is non-empty but small.
	In particular, if an unbounded number of crash faults may occur, we prove that (i) the only constructible graph language is that of spanning cliques and (ii) a strong impossibility result holds even if the size of the graphs that the protocol outputs in populations of size $n$ need only grow with $n$ (the remaining nodes being \emph{waste}). 
	When there is a finite upper bound $f$ on the number of faults, we show that it is impossible to construct any \emph{non-hereditary} graph language and leave as an interesting open problem the \emph{hereditary case}. On the positive side, by relaxing our requirements we prove that: (i) permitting linear waste enables to construct on $n/(2f)-f$ nodes, any graph language that is constructible in the fault-free case,
	(ii) \emph{partial constructibility} (i.e., not having to generate all graphs in the language) allows the construction of a large class of graph languages.
	We then extend the original model with a minimal form of \emph{fault notifications}. 
	Our main result here is a \emph{fault-tolerant universal constructor}: We develop a fault-tolerant protocol for \emph{spanning line} and use it to simulate a linear-space Turing Machine $M$. This allows a fault-tolerant construction of any graph accepted by $M$ in linear space, with waste $min\{n/2 + f(n), \; n\}$, where $f(n)$ is the number of faults in the execution.	
	We then prove that increasing the permissible waste to $min\{2n/3 + f(n), \; n\}$ allows the construction of graphs accepted by an $O(n^2)$-space Turing Machine, which is asymptotically the maximum simulation space that we can hope for in this model.
	Finally, we show that logarithmic local memories can be exploited for a no-waste fault-tolerant simulation of any such protocol.
\end{abstract}

\noindent
\textbf{Keywords:} network construction; distributed protocol; self stabilization; fault tolerant protocol; dynamic graph formation; population; fairness; self-organization;\newline

\section{Introduction and Related Work}
\label{introduction}

In this work, we address the issue of the dynamic formation of graphs under faults. We do this in a minimal setting, that is, a population of agents running \textit{Population Protocols} that can additionally activate/deactivate links when they meet. This model, called \textit{Network Constructors}, was introduced in \cite{MS16a}, and is based on the \textit{Population Protocol} (PP) model \cite{AADFP06,AAER07} and the \textit{Mediated Population Protocol} (MPP) model \cite{MCS11-2}.
We are interested in answering questions like the following:
If one or more faults can affect the formation process, can we always re-stabilize to a correct graph, and if not, what is the class of graph languages for which there exists a fault-tolerant protocol?
What are the additional minimal assumptions that we need to make in order to find fault-tolerant protocols for a bigger class of languages?


Population Protocols run on networks that consist of computational entities called \textit{agents}. One of the challenging characteristics is that the agents have no control over the schedule of interactions with each other.
In a population of $n$ agents, repeatedly a pair of agents is chosen to interact. During an interaction their states are updated based on their previous states.
In general, the interactions are scheduled by a \textit{fair scheduler}. When the execution time of a protocol needs to be examined, a typical fair scheduler is the one that selects interactions uniformly at random.

Network Constructors (and its geometric variant \cite{Mi18}) is a theoretical model that may be viewed as a minimal model for programmable matter operating in a dynamic environment \cite{MS17}.
Programmable matter refers to any type of matter that can \textit{algorithmically} transform its physical properties, for example shape and connectivity. The transformation is the result of executing an underlying program, which can be either a centralized algorithm or a distributed protocol stored in the material itself.
There is a wide range of applications, spanning from distributed robotic systems \cite{GKR10}, to smart materials, and many theoretical models (see, e.g., \cite{DDGRS14,DDG18,MSS18,DFS19} and references therein), try to capture some aspects of them.

The main difference between PPs and Network Constructors is that in the PP (and the MPP) models, the focus is on computation of functions of some input values, while Network Constructors are mostly concerned with the stable formation of graphs that belong to some graph language. Fault tolerance must deal with the graph topology, thus, previous results on self-stabilizing PPs \cite{AAFJ08,BBB13,DFI17,CLV17} and MPPs \cite{MOKY12} do not apply here.

In \cite{MS16a}, Michail and Spirakis gave protocols for several basic network construction problems, and proved several universality results by presenting generic protocols that are capable of simulating a Turing Machine and exploiting it in order to stably construct a large class of networks, in the absence of crash failures.

In this work, we examine the setting where \textit{adversarial crash faults} may occur, and we address the question of which families of graph languages can be stably formed.
Here, adversarial crash faults mean that an adversary knows the rules of the protocol and can select some node to be removed from the population at any time. For simplicity, we assume that the faults can only happen sequentially. This means that in every step at most one fault may occur, as opposed to the case where many faults can occur during each step. These cases are equivalent in the Network Constructors model w.l.o.g., but not in the extended version of this model (which allows fault notifications) that we consider later. We discuss more about it in Section \ref{nNET}.

A main difference between our work and traditional self-stabilization approaches is that the nodes are supplied with constant local memory, while in principle they can form linear (in the population size) number of connections per node. 
Existing self-stabilization approaches that are based on restarting techniques cannot be directly applied here \cite{DIM93,Do00}, as the nodes cannot distinguish whether they still have some activated connections with the remaining nodes, after a fault has occurred. This difficulty is the reason why it is not sufficient to just reset the state of a node in case of a fault.
In addition, in contrast to previous self-stabilizing approaches \cite{GK10,DB01} that are based on \textit{shared memory} models, two adjacent nodes can only store $1$ bit of memory in the edge joining them, which denotes the existence or not of a connection between them.

	Angluin \textit{et al.} \cite{AAFJ08} incorporated the notion of self-stabilization into the population protocol model, giving self-stabilizing protocols for some fundamental tasks such as token passing and leader election. They focused on the goal of stably maintaining some property such as having a unique leader or a legal coloring of the communication graph.
	
	A previous work of Delporte-Gallet \textit{et al.} \cite{DFGR06} studies the issue of correctly computing functions on the node inputs in the Population Protocol model \cite{AADFP06}, in the presence of crash faults and transient faults that can corrupt the states of the nodes. They construct a transformation which makes any protocol that works in the failure-free setting, tolerant in the presence of such failures, as long as modifying a small number of inputs does not change the output.
	Guerraoui and Ruppert \cite{GR09} introduced an interesting model, called \textit{Community Protocol}, which extends the the Population Protocol model with unique identifiers and enough memory to store a constant number of other agents' identifiers. They show that this model can solve any decision problem in NSPACE($n\log{n}$) while tolerating a constant number of Byzantine failures.

	In \cite{Pe09}, Peleg studies logical structures, constructed over static graphs, that need to satisfy the same property on the resulting structure after node or edge failures.
	He distinguishes between the stronger type of fault-tolerance obtained for geometric graphs (termed \textit{rigid fault-tolerance}) and the more flexible type required for handling general graphs (termed \textit{competitive fault-tolerance}). It differs from our work, as we address the problem of constructing such structures over dynamic graphs.\\

\subsection{Our contribution}

The goal of any Network Constructor (NET) protocol is to stabilize to a graph that belongs to (or satisfies) some graph language $L$, starting from an initial configuration where all nodes are in the same state and all connections are disabled.
In \cite{MS16a}, only the fault-free case was considered.
In this work, we formally define the model that extends NETs allowing crash failures, and we examine protocols in the presence of such faults.
Whenever a node crashes, it is removed from the population, along with all its activated edges. This leaves the remaining population in a state where some actions may need to be taken by the protocol in order to eventually stabilize to a correct network.

We first study the constructive power of the original NET model in the presence of crash faults.
We show that the class of graph languages that is in principle constructible is non-empty but very small: for unbounded number of faults, we show that the only stably constructible language is the \textit{Spanning Clique}. We also prove a strong impossibility result, which holds even if the size of graphs  that the protocol outputs in populations of size $n$ need only grow with $n$ (the remaining nodes being \emph{waste}). For bounded number of faults, we show that any non-hereditary graph language is impossible to be constructed.
However, we show that by relaxing our requirements we can extend the class of constructible graph languages.
In particular, permitting linear waste enables to construct on $n/(2f) - f$ nodes, where $f$ is a finite upper bound on the number of faults, any graph language that is constructible in a failure-free setting.
Alternatively, by allowing our protocols to generate only a subset of all graphs in the language (partial constructibility), a large class of graph languages becomes constructible (see Section \ref{fault_tolerant_SNET}).



In light of the impossibilities in the Network Constructors model, we introduce the minimal additional assumption of \textit{fault notifications}.
In particular, after a fault on some node $u$ occurs, all the nodes that maintain an active edge with $u$ at that time (if any) are notified. If there are no such nodes, an arbitrary node in the population is notified.
In that way, we guarantee that at least one node in the population will sense the removal of $u$.
Nevertheless, we show some constructions that work without notifications in the case of a crash fault on an isolated node (Section \ref{nNET}).


We obtain two fault-tolerant universal constructors. One of the main technical tools that we use in them, is a fault-tolerant construction of a stable path topology (i.e., a line).
We show that this topology is capable of simulating a Turing Machine (abbreviated ``TM'' throughout this paper), and, in the event of a fault, is capable of always reinitializing its state correctly (Section \ref{universal_waste}).
Our protocols use a subset of the population (called \textit{waste}) in order to construct there a TM, while the graph which belongs to the required language is constructed in the rest of the population (called \textit{useful space}).
The idea is based on \cite{MS16a}, where they show several universality results by constructing on $k$ nodes of the population a network $G_1$ capable of simulating a TM, and then repeatedly drawing a random network $G_2$ on the remaining $n-k$ nodes.
The idea is to execute on $G_1$ the TM which decides the language $L$ with input the network $G_2$. If the TM accepts, it outputs $G_2$, otherwise the TM constructs a new random graph.


This allows a fault-tolerant construction of any graph accepted by a TM in linear space, with waste $min\{n/2 + f(n),n\}$, where $f(n)$ is the number of faults in the execution.
We finally prove that increasing the permissible waste to $min\{2n/3 + f(n), \; n\}$ allows the construction of graphs accepted by an $O(n^2)$-space Turing Machine, which is asymptotically the maximum simulation space that we can hope for in this model.


In order to give fault-tolerant protocols without waste, we design a protocol that can be composed in parallel with any protocol in order to make it fault-tolerant. The idea is to restart the protocol whenever a crash failure occurs. We show that restarting is impossible with constant local memory, if the nodes may form a linear (in the population size) number of connections; hence, to overcome this we supply the agents with logarithmic memory (Section \ref{nNET_restart}).

Finally, in Section \ref{conclusions} we conclude and discuss further interesting research directions opened by this work.

The following table summarizes all results proved in this paper.

\begin{center}
	
	\begin{tabular}{ |L{3.4cm}|L{3.4cm}||L{7.8cm}|  }
		\Xhline{2.1\arrayrulewidth}
		\multicolumn{3}{|c|}{\cellcolor{Gray5}Constructible languages} \\
		\hline
		\multicolumn{2}{|c||}{Without notifications} &
		\multicolumn{1}{c|}{With notifications}\\
		\Xhline{1.4\arrayrulewidth}
		
		\multicolumn{1}{|c|}{\cellcolor{Gray8}Unbounded faults} &
		\multicolumn{1}{c||}{\cellcolor{Gray8}Bounded faults} &
		\multicolumn{1}{c|}{\cellcolor{Gray8}Unbounded faults}\\
		
		\Xhline{2.5\arrayrulewidth}
		Only Spanning Clique &  Non-hereditary impossibility & Fault-tolerant protocols: Spanning Star, Cycle Cover, Spanning Line\\
		
		\Xhline{1.4\arrayrulewidth}
		Strong impossibility even with linear waste & A representation of any finite graph (partial constructibility) & Universal Fault-tolerant Constructors (with waste) \\
		\Xhline{1.4\arrayrulewidth}
		& Any constructible graph language with linear waste & Universal Fault-tolerant Restart (without waste) \\
		
		\Xhline{1.4\arrayrulewidth}
		
	\end{tabular}
	\captionof{table}{Summary of our results.}
\end{center}

\section{Model and Definitions}
\label{model}

A
Network Constructor (NET) is a distributed protocol defined by a 4-tuple $(Q, q_0, Q_{out}, \delta)$, where $Q$ is a finite set of node-states, $q_0 \in Q$ is the initial node-state, $Q_{out} \subseteq Q$ is the set of output node-states, and $\delta: Q \times Q \times \{0,1\} \rightarrow  Q \times Q \times \{0,1\}$ is the transition function.

In the generic case, there is an \textit{underlying interaction graph} $G_U = (V_U, E_U)$ specifying the permissible interactions between the nodes, and on top of $G_U$, there is a dynamic overlay graph $G_O=(V_O, E_O)$.
A mapping function $F$ maps every node in the overlay graph to a distinct underlay node.
In this work, $G_U$ is a \textit{complete undirected interaction graph}, i.e., $E_U = \{uv: u,v \in V_U$ and $u \neq v \}$, while the overlay graph consists of a population of $n$ initially \textit{isolated} nodes (also called \textit{processes} or \textit{agents}).

The NET protocol is stored in each node of the overlay network, thus, each node $u \in G_O$ is defined by a state $q \in Q$.
Additionally, each edge $e \in E_O$ is defined by a binary state (\textit{active/connected} or \textit{inactive/disconnected}).
Initially, all nodes are in the same state $q_0$ and all edges are inactive. The goal is for the nodes, after interacting and activating/deactivating edges for a while, to end up with a desired stable overlay graph, which belongs to some graph language $L$.

During a (pairwise) interaction, the nodes are allowed to access the state of their joining edge and either activate it ($\text{state} = 1$) or deactivate it ($\text{state} = 0$). When the edge state between two nodes $u, v \in G_O$ is activated, we say that $u$ and $v$ are \textit{connected}, or \textit{adjacent} at that time $t$, and we write $u \underset{t}{\sim} v$.


In this work, we present a version of this model that allows \textit{adversarial} \textit{crash failures}. A crash (or \textit{halting}) failure causes an agent to cease functioning and play no further role in the execution.
This means that all the adjacent edges of $F(u) \in G_U$ are removed from $E_U$, and, at the same time, all the adjacent edges of $u \in G_O$ become inactive.


The execution of a protocol proceeds in discrete steps. In every step, an edge $e \in E_U$ between two nodes $F(u)$ and $F(v)$ is selected by an \textit{adversary scheduler}, subject to some \textit{fairness} guarantee. The corresponding nodes $u$ and $v$ interact with each other and update their states and the state of the edge $uv \in G_O$ between them, according to a joint transition function $\delta$.
If two nodes in states $q_u$ and $q_v$ with the edge joining them in state $q_{uv}$ encounter each other, they can change into states $q_u'$, $q_v'$ and $q_{uv}'$, where $(q_u',q_v', q_{uv}') \in \delta(q_u,q_v, q_{uv})$.
In the original model, $G_U$ is the complete directed graph, which means that during an interaction, the interacting nodes have distinct roles. In our protocols, we consider a more restricted version, that is, \textit{symmetric} transition functions ($\delta(q_u,q_v,q_{uv}) = \delta(q_v,q_u,q_{uv})$), as we try to keep the model as minimal as possible.

A configuration is a mapping $C: V_I \cup E_I \rightarrow Q \cup \{0, 1\}$
specifying the state of each node and each edge of the interaction graph.
An execution of the protocol on input $I$ is a finite or infinite sequence of configurations, $C_0, C_1, C_2, \dots$, each of which is a set of states drawn from $Q \cup \{0,1\}$. In the initial configuration $C_0$, all nodes are in state $q_0$ and all edges are inactive.
Let $q_u$ and $q_v$ be the states of the nodes $u$ and $v$, and $q_{uv}$ denote the state of the edge joining them.
A configuration $C_k$ is obtained from $C_{k-1}$ by one of the following types of transitions:

\begin{enumerate}
	\itemsep0.5em
	\item \textbf{Ordinary transition:} $C_k = (C_{k-1}-\{q_u,q_v,q_{uv}\}) \cup \{q_u', q_v', q_{uv}'\}$ where $\{q_u, q_v, q_{uv}\} \subseteq C_{k-1}$ and $(q_u',q_v', q_{uv}') \in \delta(q_u, q_v, q_{uv})$.
	
	\item \textbf{Crash failure:} $C_k = C_{k-1} - \{q_u\} - \{q_{uv}: uv \in E_I\}$ where $\{q_u, q_{uv}\} \subseteq C_{k-1}$.
	
	
	
\end{enumerate}

We say that $C'$ is \textit{reachable from $C$} and write $C \rightsquigarrow C'$, if there is a sequence of configurations $C = C_0, C_1, \dots, C_t=C'$, such that $C_i \rightarrow C_{i+1}$ for all $i$, $ 0 \leqslant i < t$. The fairness condition that we impose on the scheduler is quite simple to state. Essentially, we do not allow the scheduler to avoid a possible step forever. More formally, if $C$ is a configuration that appears infinitely often in an execution, and $C \rightarrow C'$, then $C'$ must also appear infinitely often in the execution. Equivalently, we require that any configuration that is always reachable is eventually reached.

We define the output of a configuration $C$ as the graph $G(C) = (V,E)$ where $V = \{u \in V_O: C(u) \in Q_{out}\}$ and $E = \{uv: u,v \in V,\; u \neq v, \text{ and } C(uv) = 1\}$.
If there exists some step $t \geq 0$ such that $G_Ο(C_i) = G$ for all $i \geq t$, we say that the output of an execution $C_0, C_1, \dots$ \textit{stabilizes} (or \textit{converges}) to graph $G$, every configuration $C_i$, for $i \geq t$, is called \textit{output-stable}, and $t$ is called the \textit{running time} under our scheduler.
We say that a protocol $\Pi$ stabilizes eventually to a graph $G$ of \textit{type} $L$ if and only if after a finite number of pairwise interactions, the graph defined by 'on' edges does not change and belongs to the graph language $L$.


\begin{definition}\label{def:constructs}
We say that a protocol $\Pi$ \emph{constructs} a graph language $L$ if: (i) every execution of $\Pi$ on $n$ nodes stabilizes to a graph $G \in L$ s.t. $|V(G)|=n$ and (ii) $\forall  G \in L$ there is an execution of $\Pi$ on $|V(G)|$ nodes that stabilizes to $G$, where $|V(G)|$ is the degree of the graph $G$.
\end{definition}

\begin{definition}
We say that a protocol $\Pi$ \emph{partially constructs} a graph language $L$, if: (i) from Definition \ref{def:constructs} holds and (ii) $\exists G \in L$ s.t. no execution of $\Pi$ on $|V(G)|$ nodes stabilizes to $G$.
\end{definition}


\begin{definition}[Fault-tolerant protocol]
	Let $\Pi$ be a NET protocol that, in a failure-free setting, constructs a graph $G \in L$.
$\Pi$ is called \emph{{$f$-fault-tolerant}} if for any population size $n > f$, any execution of $\Pi$ constructs a graph $G \in L$, where $|V(G)| = n - f$.
We also call $\Pi$ \emph{{fault-tolerant}} if the same holds for any number $f \leq n-2$ of faults.
\end{definition}

\begin{definition}[Constructible language]\label{def:constructible_language}
A graph language $L$ is called \emph{constructible} (\emph{partially constructible}) if there is a protocol that constructs (partially constructs) it. Similarly, we call $L$ \emph{constructible under $f$ faults}, if there is an $f$-fault-tolerant protocol that constructs $L$, where $f$ is an upper bound on the maximum number of faults during an execution.
\end{definition}


\begin{definition}[Critical node]
	Let $G$ be a graph that belongs to a graph language $L$. Call $u$ a \emph{critical node} of $G$ if by removing $u$ and all its edges, the resulting graph $G' = G - \{u\} - \{uv: v \sim u \}$, does not belong to $L$.
	In other words, if there are no critical nodes in $G$, then any (induced) subgraph $G'$ of $G$ that can be obtained by removing nodes and all their edges, also belongs to $L$.
\end{definition}

\begin{definition}[Hereditary Language]\label{def:hereditary}
	A graph language $L$ is called Hereditary if for any graph $G \in L$, every induced subgraph of $G$ also belongs to $L$. In other words, there is no graph $G \in L$ with critical nodes.
\end{definition}

\noindent This notion is known in the literature as \textit{hereditary property} of a graph w.r.t. (with respect to) some graph language $L$.
Observe that if there exists a graph $G$ s.t. for any induced subgraph $G'$ of $G$, $G' \in L$, does not imply that the same holds for any graph in $L$.
For example, consider the graph language $L = \{G: G \text{ has an even number of edges}\}$ and a graph $G$ which consists of any number of connected pairs of nodes. Then, by removing any number of nodes, the number of the edges remains even. However, a different topology, such as a star with even number of edges does not have this property.

Some examples of Hereditary Languages are ``Bipartite graph'', ``Planar graph'', ``Forest of trees'', ``Clique'', ``Set of cliques'', ``Maximum node degree $\leq \Delta$'' and so on.

In this work, unless otherwise stated, a graph language $L$ is an infinite set of graphs satisfying the following properties:
\begin{enumerate}
	\item (\textit{No gaps}): For all $n \geq$ c, where $c \geq 2$ is a finite integer, $\exists G \in L$ of order $n$.
	\item (\textit{No Isolated Nodes}): $\forall G \in L$ and $\forall u \in V(G)$, it holds that $d(u) \geq 1$ (where $d(u)$ is the degree of $u$).
\end{enumerate}

\noindent Even though graph languages are not allowed to contain isolated nodes, there are cases in which a protocol might be allowed to output one or more isolated nodes.
In particular, if a protocol $\Pi$ constructing $L$ is allowed a waste of at most $w$, then whenever $\Pi$ is executed on $n$ nodes, it must output a graph $G \in L$ of order $|V(G)| \geq n-w$, leaving at most $w$ nodes in one or more separate components (could be all isolated).

\section{Network Constructors without Fault Notifications}\label{fault_tolerant_SNET}

In this section, we study the constructive power of the original NET model in the presence of crash faults.
We start from the case in which the number of nodes that crash during an execution can be anything from $0$ up to $n-2$ nodes.
We are interested in characterizing the class of constructible graph languages.
Observe that we cannot trivially conclude that the adversary can always leave us with just $2$ nodes, only allowing our protocols to form a line of length $1$.
This is because our definition of constructible languages under faults takes into account all possible executions with $f$ faults, for all values of $f \in \{0,1, \dots, n-2\}$.
We show that in the case where the number of faults cannot be bounded by a constant number, the only language that is constructible under any number of faults is the $L_c = \{G: G \text{ is a spanning clique}\}$.

We then consider the setting where only a constant number of faults can occur during an execution, and we show that there is no protocol that constructs any graph language $L$, tolerating even a single fault if $L$ is not Hereditary.
However, if we allow linear waste in the population, any language that is constructible without faults, is now constructible.

Finally, we show a family of graph languages that is partially constructible (without waste in the population). The exact characterization of the class of partially constructible languages remains as an open problem.

\subsection{Unbounded Number of Faults} \label{sec:unbounded}

In this section we consider the setting where the number of faults can be any number up to $n-2$, where $n$ is the number of nodes.
We first present a protocol which constructs the language $\textit{Spanning Clique} = \{G: G \text{ is a spanning clique}\}$, and we prove that this protocol can tolerate any number of faults.

Let \textit{Clique} be the following $2-$state \textit{symmetric} protocol.
If we consider the case where no crash faults are allowed, for any population size, \textit{Clique} Protocol stabilizes to a clique with all the nodes in state \textit{r}.

\begin{algorithm}[ht]
	\floatname{algorithm}{Protocol} 
	\caption{Clique}\label{protocol:clique}
	\begin{algorithmic}[1000]
		
		\State $Q = \{b,\; r\}$
		\State Initial state: $b$
		\State $ $
		\State $\delta:$

		\State $(b,\; b, \; 0) \rightarrow (b,\; r, \; 0)$		
		\State $(b,\; r, \; 0) \rightarrow (r,\; r, \; 0)$		
		\State $(r,\; r, \; 0) \rightarrow (r,\; r, \; 1)$
		
		\State $ $
		
		\State \textbackslash \textbackslash All transitions that do not appear have no effect.
		
	\end{algorithmic}
\end{algorithm}

\begin{lemma}\label{lemma:clique}
	\emph{Clique} (Protocol \ref{protocol:clique}) is a fault-tolerant protocol for \emph{Spanning Clique}.
\end{lemma}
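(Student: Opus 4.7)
My plan is to exploit two straightforward monotonicity properties of Protocol~\ref{protocol:clique}. First, state $r$ is absorbing: inspection of $\delta$ shows that no transition turns an $r$ back to a $b$. Second, no rule deactivates an edge, and the only rule that activates an edge, $(r,r,0)\to(r,r,1)$, has both endpoints red. Together these imply that, between consecutive fault events, the set of red nodes and the set of active edges are both monotone non-decreasing, and every active edge always joins two red nodes. These invariants are trivially preserved by crash events as well, since a crash simply removes a node and its incident edges without touching the states or edge-states of anyone else.

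Fix now an arbitrary execution with $f \leq n-2$ crashes. Since $f$ is finite, there is a moment $t^{\star}$ after which no further faults occur; let $S$ denote the set of $|S|=n-f\geq 2$ surviving nodes. I first argue that eventually every node of $S$ is in state $r$. If at some point after $t^{\star}$ all of $S$ is still blue, fairness forces two blues to interact via $(b,b,0)\to(b,r,0)$, producing a red. Once at least one node of $S$ is red, for each remaining blue $u\in S$ an interaction with a red partner in $S$ is always reachable, so by fairness $u$ eventually applies $(b,r,0)\to(r,r,0)$. Combined with the absorbing property of $r$, this yields a configuration $C^{\star}$ in which every node of $S$ is red, and this property persists for all subsequent configurations.

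From $C^{\star}$ onward, I complete the clique on $S$. For any pair $u,v\in S$ whose joining edge is still inactive, the transition $(r,r,0)\to(r,r,1)$ is enabled at their next interaction, which must occur by fairness. Since no rule deactivates edges and no surviving red ever changes state, once all $\binom{|S|}{2}$ pairs have been processed the configuration is output-stable and $G(C)$ is exactly the spanning clique on $S$, a graph of order $n-f$ belonging to \emph{Spanning Clique}. This matches the definition of a fault-tolerant protocol.

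I do not foresee any real obstacle. The only mild care required is to invoke fairness in the ``any configuration that is always reachable is eventually reached'' form stated in Section~\ref{model}: from every configuration reachable after $t^{\star}$, the target all-red, fully-connected configuration on $S$ is reachable by a finite sequence of still-enabled interactions (finish turning any blues red, then activate each missing red-red edge), so fairness guarantees it is eventually reached and, by the monotonicity observations, never left.
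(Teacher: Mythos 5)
Your proof is correct and follows essentially the same route as the paper's: after the last fault, the surviving nodes all become red and then pairwise connect into a clique. The paper presents this as a brief three-way case split on the colors present in the surviving set, whereas you subsume all cases under explicit monotonicity invariants and a careful invocation of fairness, which is a more rigorous rendering of the same argument.
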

\begin{proof}
	Let $f < n$ and assume that $f$ nodes crash during the execution. Call $S$ the remaining $n - f$ nodes.
	
	\noindent	(a) If all nodes in $S$ are in state $b$, then the remaining nodes shall form a clique (in state $r$).
	
	\noindent	(b) If all nodes in $S$ are in state $r$, then again, \textit{Clique Protocol} stabilizes to a clique.

	\noindent	(c) If $S$ contains both colors, then the $r-$nodes will convert the $b-$nodes to $r$ and again \textit{Clique Protocol} stabilizes to a clique.
\end{proof}

By Lemma \ref{lemma:clique}, we know that the language \textit{Spanning Clique} is constructible under $n-2$ faults. To clarify, this means that for any execution of Protocol \ref{protocol:clique} on $n$ nodes, $f$ of which crash ($f \in \{0,1, \dots, n-2\}$), Protocol \ref{protocol:clique} is guaranteed to stabilize to a clique of order $n-f$.

We will now prove that (due to the power of the adversary), no other graph language is constructible under unbounded faults.

\begin{lemma}\label{lemma:negative_unbounded}
	Let $\Pi$ be a protocol constructing a language $L$ and $G \in L$ a graph that $\Pi$ outputs on $|V(G)|$ nodes. If $G$ has an independent set $S \subseteq V$, s.t. $|S| \geq 2$, then there is an execution of $\Pi$ on $n$ nodes which stabilizes on $|S|$ isolated nodes (where $|S| = n-f$ and $f$ is the number of faults in that execution).
\end{lemma}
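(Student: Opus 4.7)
The plan is to exploit the output-stability of a fault-free stabilizing execution. Because $\Pi$ constructs $L$ and outputs $G$ on $|V(G)|$ nodes, there is a fault-free execution of $\Pi$ on $n=|V(G)|$ nodes that reaches an output-stable configuration $C^\ast$ with $G(C^\ast)=G$. In $C^\ast$ every node of $G$ is in a state from $Q_{out}$, and for every pair $u,v\in S$ the edge $uv$ is inactive (because $S$ is an independent set of $G$). From $C^\ast$, I would let the adversary inject $f=n-|S|$ crash failures, one per node in $V\setminus S$. After these crashes, every node of $S$ is still in its $C^\ast$ state (which lies in $Q_{out}$), and every edge incident to a surviving node of $S$ is inactive, since such an edge could only have been active to some node of $V\setminus S$ (edges inside $S$ were already inactive).

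Next I would show that, after the crashes, no fair continuation of the execution can ever activate an edge between two nodes of $S$ or move a node of $S$ out of $Q_{out}$. The key observation is that the transition function is purely local: any interaction between $u,v\in S$ in the post-crash execution yields the same $\delta(q_u,q_v,0)$ that would have fired in the fault-free execution starting from $C^\ast$. Since $C^\ast$ is output-stable in the fault-free execution, any single such transition must preserve $G$, so in particular it cannot set the edge $uv$ to $1$ and cannot send $q_u$ or $q_v$ outside $Q_{out}$. Iterating this observation, any finite sequence of interactions among nodes of $S$ can be mirrored by the corresponding interactions in the fault-free execution starting from $C^\ast$ (the surviving-side subconfiguration evolves identically, because $\delta$ does not read any global information), and output-stability of $C^\ast$ forces every reached configuration to still have all $S$-nodes in $Q_{out}$ and all $S$-internal edges inactive.

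Finally, I would complete the argument by extending the scheduler to any fair schedule on the surviving nodes. The resulting execution of $\Pi$ on $n$ nodes contains exactly $f=n-|S|$ crashes, and by the invariant just established, its output graph is permanently $|S|$ isolated nodes in $Q_{out}$, so the execution stabilizes to $|S|$ isolated nodes, as required.

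The main obstacle is the inductive step in the second paragraph: showing that output-stability of $C^\ast$ in the fault-free setting transfers to arbitrarily long post-crash interaction sequences, rather than only to the first interaction. This is handled cleanly because $\delta$ is local, so the post-crash trajectory on $S$ is indistinguishable (on the $S$-side) from an $S$-only interaction schedule applied to $C^\ast$ in the fault-free execution, and any such schedule is forbidden by output-stability from activating an $S$-internal edge or leaving $Q_{out}$.
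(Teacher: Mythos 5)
Your proof is correct, and its core step coincides with the paper's: both arguments hinge on the observation that a post-crash interaction schedule on the surviving independent set is indistinguishable, on those nodes, from the same schedule applied inside the stable configuration, so output-stability forbids ever activating an $S$-internal edge (or leaving $Q_{out}$). Where you genuinely diverge is in how the witnessing execution is built. You simply take the fault-free execution that stabilizes to $G$ on $|V(G)|$ nodes and have the adversary crash all of $V\setminus S$, leaving the $S$-nodes isolated in their $C^\ast$ states; this gives $n=|V(G)|$ and $f=|V(G)|-|S|$ directly. The paper instead extracts the multiset $Q_S$ of states that $C_{stable}$ assigns to $S$, re-realizes each such state $q$ on a separate minimal population $V_q$, takes the disjoint union $V=\bigcup_q V_q$ as the population, and crashes everything except one representative $u_q$ per group, arriving at the same post-crash configuration (in cardinality and states) by a longer route. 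Your in-situ construction is the more elementary of the two and loses nothing for the purposes of this lemma or of Theorem 1, which only needs some execution stabilizing on at least two isolated nodes; the paper's detour merely exhibits a different (typically smaller) population on which the same isolated-survivor configuration can be forced. One presentational difference worth keeping in mind: you phrase the conclusion as a preserved invariant, while the paper argues by contradiction with a mirrored continuation $\gamma$; both rest on the same locality-plus-stability fact, and your explicit treatment of the inductive step (mirroring arbitrary finite $S$-only schedules rather than a single transition) is exactly the point that makes the argument go through.
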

\begin{proof}
	Consider an execution of $\Pi$ that outputs $G$. By definition, there is a point in this execution after which no further edge updates can occur (no matter what the infinite execution suffix will be). Take any configuration $C_{stable}$ after that point and consider its sub-configuration $C_S$ induced by the independent set $S$. Observe that $C_S$ encodes the state of each node $u \in S$ in that particular stable configuration $C_{stable}$.
	Denote also by $Q_S$ the multiset of all states assigned by $C_S$ to the nodes in $S$.
	
	Every state in $Q_S$ is reachable (in the sense that there exists an execution that produces it).
	For each $q \in Q_S$ consider the smallest population $V_q$ in which there is some execution $a_q$ of protocol $\Pi$ that produces state $q$.
	Consider the population $V = \bigcup_{q \in Q_S} V_q$ (or equivalently of size $n = \sum_{q \in Q_S}{|V_q|}$).
	
	For each $V_q$ in population $V$ we execute $a_q$ until $q$ is produced on some node $u_q$. After this, every $q \in Q_S$ is present in the population $V$.
	Then, the adversary crashes all nodes in $V_q \setminus \{u_q\}$ (i.e., only $u_q$ remains alive in each $V_q$). This leaves the execution with a set of alive nodes equivalent in cardinality and configurations to the independent set $S$ under $C_S$.
	
	The above construction is a finite prefix of fair executions. For the sake of contradiction, assume that in any fair continuation of the above prefix, $\Pi$ eventually stabilizes to a graph with no isolated nodes (as required by the fact that $\Pi$ constructs a graph language $L$).
	Take one such continuation $\gamma$. As $\gamma$ starts from a configuration in all respects equivalent to that of $S$ under $C_{stable}$, it follows that $\gamma$ can also be applied to $C_{stable}$ and in particular on the independent set $S$ starting from $C_S$. It follows that $\gamma$ must have exactly the same effect as before, that is, eventually it will cause the activation of at least one edge between the nodes in $S$.
	But this violates the fact that $C_{stable}$ is a stable configuration, therefore no edge could have been activated by $\Pi$ in the continuation, implying that the continuation must have been an execution stabilizing on $|S|$ isolated nodes.
\end{proof}

\begin{theorem}\label{theorem:unbounded_1}
	Let $L$ be any graph language such that $L \neq \text{Spanning Clique}$. Then, there is no protocol that constructs $L$ if an unbounded number of crash failures may occur.
\end{theorem}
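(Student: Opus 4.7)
The plan is to derive a contradiction by invoking Lemma~\ref{lemma:negative_unbounded} together with the standing assumption, from the definition of a graph language, that no member of $L$ contains isolated nodes. Suppose, for contradiction, that a protocol $\Pi$ constructs some language $L \neq \text{Spanning Clique}$ under unbounded crash failures.

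First I would argue that $L$ must contain a graph $G$ that is not complete on its own vertex set. If every $G \in L$ were a clique, then combined with the ``No gaps'' condition (for every $n \geq c$ some $G \in L$ has $|V(G)| = n$) we would get $L = \{K_n : n \geq c\}$, since the clique on $n$ vertices is unique; this set is precisely the Spanning Clique language, contradicting the hypothesis $L \neq \text{Spanning Clique}$. Fix such a non-complete $G \in L$, and pick any pair $u,v \in V(G)$ with no edge between them; then $S = \{u,v\}$ is an independent set of $G$ with $|S| = 2 \geq 2$. By Definition~\ref{def:constructs}(ii), $\Pi$ admits an execution on $|V(G)|$ nodes stabilizing to $G$, so the hypotheses of Lemma~\ref{lemma:negative_unbounded} are met.

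Applying the lemma then produces an execution of $\Pi$ (on a suitable population with some number of faults) that stabilizes to a graph consisting of $|S| = 2$ isolated nodes. However, since $\Pi$ constructs $L$, every execution must stabilize to some member of $L$; and the ``No Isolated Nodes'' clause in the definition of a graph language explicitly forbids any $H \in L$ from containing isolated vertices. This contradicts the existence of the execution produced by the lemma, completing the proof.

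The real technical work is entirely absorbed into Lemma~\ref{lemma:negative_unbounded} (the delicate ``stitching together'' of independent sub-executions via the adversary's ability to crash all but one node in each piece). The only residual subtlety for this theorem is the edge case in which $L$ consists \emph{solely} of cliques yet differs nominally from the Spanning Clique language; this is dispatched by the ``No gaps'' condition and the uniqueness of $K_n$ on any fixed vertex count, so no further obstacle arises.
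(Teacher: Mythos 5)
Your proposal is correct and follows essentially the same route as the paper's own proof: extract a non-complete $G \in L$, take a size-$2$ independent set, invoke Lemma~\ref{lemma:negative_unbounded} to obtain an execution stabilizing on isolated nodes, and contradict the ``No Isolated Nodes'' requirement. The only difference is that you spell out (via the ``No gaps'' condition and uniqueness of $K_n$) why $L \neq \text{Spanning Clique}$ forces a non-complete member, a step the paper simply asserts.
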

\begin{proof}
	As $L \neq \textit{Spanning Clique}$, there exists $G \in L$ such that $G$ is not complete (and by definition no $G' \in L$ has isolated nodes). Therefore $G$ has an independent set $S$ of size at least $2$. If there exists a protocol $\Pi$ that constructs $L$, then by Lemma \ref{lemma:negative_unbounded} there must be an execution of $\Pi$ which stabilizes on at least $2$ isolated nodes. The latter is a stable output not in $L$, therefore a contradiction.
\end{proof}

\begin{theorem}\label{theorem:unbounded_2}
	If an unbounded number of faults may occur, the Spanning Clique is the only constructible language.
\end{theorem}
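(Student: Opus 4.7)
The plan is to observe that Theorem \ref{theorem:unbounded_2} is essentially a repackaging of the two preceding results, and to spell out exactly how they fit together. First I would note that Spanning Clique is indeed a well-defined graph language in the sense of our setup: it contains exactly one graph (the complete graph $K_n$) for each $n \geq 2$, so it satisfies both the no-gaps and no-isolated-nodes properties. Hence asking whether it is ``the only constructible language'' under unbounded faults is a meaningful statement.

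Next I would invoke Lemma \ref{lemma:clique} to establish existence: Protocol \ref{protocol:clique} is a fault-tolerant protocol for Spanning Clique, so Spanning Clique is constructible under any number $f \leq n-2$ of crash failures. This handles the ``at least one such language exists'' side.

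For uniqueness, I would apply Theorem \ref{theorem:unbounded_1} directly: any graph language $L$ with $L \neq \textit{Spanning Clique}$ is not constructible in the presence of an unbounded number of faults, since any such $L$ must contain a graph with an independent set of size at least two (otherwise every graph in $L$ would be a clique, and combined with the no-gaps property this would force $L = \textit{Spanning Clique}$), and Lemma \ref{lemma:negative_unbounded} then produces an execution stabilizing on isolated nodes, contradicting the no-isolated-nodes requirement on outputs.

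The main obstacle is really just a bookkeeping one, namely being explicit that ``every graph is a clique plus the no-gaps condition'' collapses to exactly the language Spanning Clique rather than some strict subset of it; once that is noted, combining Lemma \ref{lemma:clique} with Theorem \ref{theorem:unbounded_1} closes the proof in one line. There is no new technical content to develop.
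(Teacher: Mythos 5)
Your proposal is correct and follows exactly the paper's own argument: the paper proves this theorem directly by combining Lemma \ref{lemma:clique} (existence) with Theorem \ref{theorem:unbounded_1} (uniqueness). The extra bookkeeping you supply about why ``every graph in $L$ is a clique'' together with the no-gaps condition forces $L = \textit{Spanning Clique}$ is a harmless elaboration of what the paper already establishes inside the proof of Theorem \ref{theorem:unbounded_1}.
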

\begin{proof}
	Directly from Lemma \ref{lemma:clique} and Theorem \ref{theorem:unbounded_1}.
\end{proof}

\begin{theorem}
	Let $L$ be any graph language such that the graphs $G\in L$ have maximum independent sets whose size grows with $|V(G)|$. If the useful space of protocols is required to grow with $n$, then there is no protocol that constructs $L$ in the unbounded-faults case.
\end{theorem}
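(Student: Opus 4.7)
The plan is to combine the argument of Lemma \ref{lemma:negative_unbounded} with a padding trick so that the adversary can make the population arbitrarily large while keeping the alive part a tiny independent core. Assume, for contradiction, that some protocol $\Pi$ constructs $L$ with useful space that grows with $n$, meaning there is an unbounded function $g$ such that every stable output of $\Pi$ on a population of size $n$ has at least $g(n)$ non-isolated nodes. Since maximum independent sets of graphs in $L$ grow with $|V(G)|$, I can pick a graph $G \in L$ together with an independent set $S$ of size $k\geq 2$. Fix a stable configuration $C_{stable}$ along an execution of $\Pi$ producing $G$, and let $C_S$ be its sub-configuration induced by $S$.

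I then replay the Lemma \ref{lemma:negative_unbounded} construction: for every state $q$ appearing in $C_S$, pick the smallest population $V_q$ and execution $a_q$ of $\Pi$ producing $q$, take the disjoint union $V_0 = \bigcup_{q} V_q$ of total size $n_0$, have the scheduler run each $a_q$ entirely inside its $V_q$ with no cross interactions, and then crash every node of each $V_q$ except the one $u_q$ carrying $q$. This leaves exactly $k$ alive nodes whose states and absence of incident edges match those of $S$ under $C_S$. To escape the relaxed useful-space bound, I pad: for any desired $n \geq n_0$, I add $n - n_0$ further nodes in the initial state $q_0$ and have the adversary crash them sequentially, one per step, before any interaction involving them fires. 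The resulting prefix is on a population of arbitrary size $n$ while the alive population remains the same $k$ isolated nodes.

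The transplant argument of Lemma \ref{lemma:negative_unbounded} now applies verbatim: any fair continuation of the prefix cannot activate an edge between the surviving nodes, since the same sequence of interactions applied to $C_{stable}$ would activate an edge inside the independent set $S$, contradicting the stability of $C_{stable}$. Hence the execution stabilizes with $k$ alive nodes, all of them isolated. Since no graph in $L$ has isolated nodes, the stable output must be empty, and in particular has fewer than $g(n)$ non-isolated nodes for any $n$ large enough to make $g(n) \geq 1$; this is the desired contradiction.

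The only real subtlety is confirming that the padding nodes do not perturb the alive core. Because the adversarial scheduler controls both the ordering of interactions and the timing of crashes (sequentially, one per step), it may simply crash each padding node before scheduling any interaction involving it; the alive prefix then behaves identically to the unpadded one, and the whole argument reduces cleanly to the equivalence-of-executions idea in Lemma \ref{lemma:negative_unbounded}, now boosted by unbounded padding.
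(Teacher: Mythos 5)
There is a genuine gap, and it sits exactly where you stop using the theorem's hypothesis. Your padding construction is mechanically valid (the adversary may indeed crash each padding node before it ever interacts, and fairness only constrains the infinite suffix), but it produces an execution whose \emph{alive} population is a fixed constant $k$: a large initial population $n$ of which $n-k$ nodes have crashed. The contradiction you then draw compares the empty output against $g(n)$ for the \emph{initial} population size $n$. Under that reading of ``useful space grows with $n$'', the theorem would be trivially true for every language, including \emph{Spanning Clique}, and would need no hypothesis on independent sets at all: with unbounded faults the adversary can always crash all but two nodes, leaving an output of size at most $2 < g(n)$. Since the paper explicitly exhibits \emph{Clique} as constructible under unbounded faults, the intended requirement must be measured against the surviving population, i.e., a stable configuration with $m$ alive nodes must output a graph in $L$ on at least $g(m)$ of them. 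Against that requirement your execution proves nothing: it leaves only $k$ alive nodes, and $g(k)$ for your fixed $k \geq 2$ may well be $0$.

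The hypothesis that maximum independent sets grow with $|V(G)|$ is precisely what closes this gap, and you invoke it only to extract an independent set of size $2$, which any non-complete graph already provides. The paper's proof instead takes graphs $G \in L$ of growing order, hence independent sets $S$ of growing size, and applies Lemma \ref{lemma:negative_unbounded} to each: this yields executions that stabilize with $|S| = n - f$ alive nodes, \emph{all} of them isolated, for arbitrarily large $|S|$. Useful space $0$ on an arbitrarily large alive population is what contradicts any unbounded lower bound $g$. Your argument can be repaired by dropping the padding and instead choosing, for every threshold $N$, a graph in $L$ whose maximum independent set has size at least $N$ --- which is exactly the paper's route.
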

\begin{proof}
The proof is a direct application of Lemma \ref{lemma:negative_unbounded}. As the size of the maximum independent set of $G$ grows with $|V(G)|$ in $L$, and the useful space is a non-constant function of $n$, it follows that, as $n$ grows, the stable output-graph (on the useful space) has an independent set of size that grows with $n$ (consider, for example, the leaves of binary trees of growing size as such a growing independent set). As any such stable independent set of size $g(n)$ implies that another execution has to stabilize to $g(n)$ isolated nodes, it follows that any protocol for $L$ would produce infinitely many stable outputs of isolated nodes. The latter is contradicting the fact that the protocol constructs $L$.   
\end{proof}

\subsection{Bounded Number of Faults}

The exact characterization established above, shows that when an unbounded number of crash faults may occur, we cannot hope for non-trivial constructions.
We now relax the power of the faults adversary, so that there is a \textit{finite upper bound} $f$ on the number of faults.
In particular, fixing any such $f \geq 0$ in advance, it is guaranteed that $\forall n \geq 0$ and all executions of a protocol on $n$ nodes, at most $f$ nodes may fail during the execution.
Then the class of constructible graph languages is naturally parameterized in $f$.
We first show that non-hereditary languages are not constructible under $1$ fault.

\begin{theorem} \label{theorem:critical_nodes}
	If there exists a critical node in $G$, there is no 1-fault-tolerant NET protocol that stabilizes to it.
\end{theorem}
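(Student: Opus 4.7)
The plan is to argue by contradiction. Suppose some $1$-fault-tolerant NET protocol $\Pi$ has a fair execution that stabilizes to a graph $G$ that possesses a critical node $u$. The goal is to exhibit an execution on $|V(G)|$ nodes that violates $1$-fault-tolerance.

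First, I will use the stabilizing execution to obtain an output-stable configuration $C_{stable}$. Exactly as in the proof of Lemma~\ref{lemma:negative_unbounded}, I take the strong reading of stability, so $C_{stable}$ is a configuration from which no sequence of interactions can modify an edge or change the $Q_{out}$-membership of any node. I then build a second execution that agrees with the first up to $C_{stable}$ and, at that moment, has the adversary crash $u$. Call $C'$ the resulting configuration on $V(G) \setminus \{u\}$. Because $\Pi$ is $1$-fault-tolerant, any fair continuation of $C'$ must eventually stabilize to a graph $G^{\ast} \in L$ with $|V(G^{\ast})| = |V(G)| - 1$.

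The core step, and the main obstacle, is ruling out any such continuation. I will establish the following coupling claim: for every finite sequence of post-fault interactions $C' = C_0' \to C_1' \to \cdots \to C_k'$, mirroring the same interactions starting from $C_{stable}$ while leaving $u$ idle produces configurations $C_{stable} = C_0 \to C_1 \to \cdots \to C_k$ such that for every $i$ the restriction of $C_i$ to $V(G) \setminus \{u\}$ coincides with $C_i'$. The induction is routine because each mirrored interaction sees the same pair of node-states and the same joining-edge state as its post-fault counterpart, so $\delta$ delivers identical updates. Strong stability of $C_{stable}$ forbids the mirrored execution from altering any edge or any $Q_{out}$-membership, and the coupling transports this invariance back to the post-fault side.

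To finish, the coupling forces the $Q_{out}$-nodes of every post-fault configuration to be exactly $V(G) \setminus \{u\}$ and the edges among them to form exactly $G - \{u\}$. Hence the only possible stable output is $G - \{u\}$, which by criticality of $u$ does not belong to $L$, contradicting $1$-fault-tolerance. The only delicate point is justifying the strong interpretation of output-stability that powers the coupling; once granted---and the paper already uses this reading implicitly in Lemma~\ref{lemma:negative_unbounded}---everything else reduces to the straightforward induction above.
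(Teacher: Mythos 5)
Your proof is correct and rests on the same core idea as the paper's: the surviving nodes cannot distinguish the execution in which $u$ crashed from one in which $u$ is merely never scheduled, so any post-fault ``repair'' step would equally occur in the stable fault-free execution. The only (cosmetic) difference is where the contradiction lands --- the paper concludes that the fault-free execution would be destabilized by spurious repairs, whereas you conclude that the post-fault execution is frozen at $G-\{u\}\notin L$; your explicit coupling and your appeal to the strong reading of output-stability (already used in Lemma~\ref{lemma:negative_unbounded}) make the argument, if anything, tighter than the paper's.
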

\begin{proof}
	Let $\Pi$ be a NET protocol that constructs a graph language $L$, tolerating one crash failure.
	Consider an execution $E$ and a sequence of configurations $C_0, C_1, \; \dots$ of $E$.
	Assume a time $t$ that the output of $E$ has stabilized to a graph $G \in L$ (i.e., $G(C_i) = G$, $\forall i \geq t$).
	Let $u$ be a critical node in $G$. Assume that the scheduler removes $u$ and all its edges (crash failure) at time $t'>t$, resulting to a graph $G' \notin L$.
	In order to fix the graph (i.e., re-stabilize to a graph $G'' \in L$), the protocol must change at some point $t''$ the configuration. This can only be the result of a state update on some node $v$.
	Now, call $E'$ the execution that node $u$ does not crash and, besides that, is the same as $E$. Then, between $t'$ and $t''$ the node $v$ has the same interactions as in the previous case where node $u$ crashed.
	This results to the same state update in $v$, since it cannot distinguish $E$ from $E'$. The fact that $u$ either crashes or not, leads to the same result (i.e., $v$ tries to fix the graph thinking that $u$ has crashed).
	This means that if we are constantly trying to detect faults in order to deal with them, this would happen indefinitely and the protocol would never be stabilizing. Consider that the network has stabilized to $G$. At some point, because of the infinite execution, a node will surely but wrongly detect a crash failure. Thus, $G$ has not really stabilized.
\end{proof}

\noindent By Definition \ref{def:hereditary} and Theorem \ref{theorem:critical_nodes} it follows that.

\begin{corollary}
	If a graph language $L$ is non-hereditary, it is impossible to be constructed under a single fault.
\end{corollary}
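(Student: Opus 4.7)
The plan is to derive the corollary as an immediate consequence of the two results it is stated to follow from. I would first unpack the definition of non-hereditary to extract a graph with a critical node, then invoke Theorem~\ref{theorem:critical_nodes} to obtain the contradiction.

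More concretely, first I would argue as follows. By Definition~\ref{def:hereditary}, a language $L$ is hereditary exactly when no graph in $L$ has a critical node. Taking the contrapositive, if $L$ is non-hereditary, then there exists some $G \in L$ that possesses a critical node $u$, i.e., $G' = G - \{u\} - \{uv : v \sim u\} \notin L$. I would emphasize this reduction step explicitly, since it is the only substantive content: the word ``non-hereditary'' by itself only says that some induced subgraph of some $G \in L$ fails to be in $L$, but iterating the removal one vertex at a time lets us pinpoint a single node whose deletion already leaves $L$, which is precisely the definition of a critical node.

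Next I would proceed by contradiction: assume a $1$-fault-tolerant protocol $\Pi$ constructs $L$. By Definition~\ref{def:constructs}, for every graph in $L$, and in particular for the $G$ identified above, there exists an execution of $\Pi$ on $|V(G)|+1$ nodes (to allow one crash fault) whose stable output graph belongs to $L$ and has order $|V(G)|$; moreover, by fault-tolerance, \emph{any} execution on $|V(G)|+1$ nodes with at most one fault must stabilize to some graph in $L$ of order $|V(G)|$. A direct application of Theorem~\ref{theorem:critical_nodes} to $G$ with its critical node $u$ now yields a contradiction: the theorem shows that no $1$-fault-tolerant protocol can have a stable configuration that outputs a graph containing a critical node, because the scheduler can crash such a node after stabilization and force the protocol either to remain stuck at $G' \notin L$ or to start making edge updates that would have also been triggered in the fault-free execution, breaking stability there.

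There is essentially no obstacle here; the corollary is bookkeeping on top of Theorem~\ref{theorem:critical_nodes}. The only subtlety worth stating carefully is the passage from ``some induced subgraph is missing from $L$'' to ``some single-vertex deletion is missing from $L$,'' which I would justify by picking a minimum-size induced subgraph $H$ of $G$ with $H \notin L$ and observing that removing any one vertex from $G$ along the path of deletions leading to $H$ must, at some step, produce the first graph outside $L$; the node deleted at that step is critical in the preceding graph, which still lies in $L$. With that observation in place, the corollary reduces to one line: apply Theorem~\ref{theorem:critical_nodes}.
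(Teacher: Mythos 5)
Your proposal is correct and follows exactly the route the paper intends: the paper gives no explicit proof, deriving the corollary directly from Definition~\ref{def:hereditary} (whose ``in other words'' clause already asserts the equivalence between non-hereditarity and the existence of a graph in $L$ with a critical node) together with Theorem~\ref{theorem:critical_nodes} and condition (ii) of Definition~\ref{def:constructs}. Your extra step justifying that equivalence via a first-failure point along a vertex-deletion path is a faithful elaboration of what the paper leaves implicit, not a different argument.
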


\noindent Note that this does not imply that any Hereditary language is constructible under constant number of faults. We leave this as an interesting open problem. \\

On the positive side we show that in the case of bounded number of faults, there is a non-trivial class of languages that is partially constructible.
Consider the class of graph languages defined as follows. Any such language $L_{D,f}$ in the family is uniquely specified by a graph $D=([k],H)$ and the finite upper bound $f<k$ on the number of faults.
A graph $G=(V,E)$ belongs to $L_{D,f}$ iff there are $k$ partitions $V_1, V_2, \dots, V_k$ of $V$ s.t. for all $1 \leq i,j \leq k$, $||V_i| - |V_j|| \leq f+1$.
In addition, $E$ is constructed as follows.
The graph $D = ([k],H)$, possibly containing self-loops, defines a neighboring relation between the $k$ partitions.
For every $(i,j) \in H$ (where possibly $i=j$), $E$ contains all edges between partitions $V_i$ and $V_j$, i.e., a complete bipartite graph between them (or a clique in case $i=j$). As no isolated nodes are allowed, every $V_i$ must be fully connected to at least one $V_j$ (possibly itself).


We first consider the case where $k = 2^\delta$, for some constant $\delta \geq 0$, and we provide a protocol that divides the population into $k$ partitions.
The protocol works as follows:
initially, all nodes are in state $c_0$ (we call this the partition $0$).
When two nodes in states $c_i$, where $i \geq 0$ interact with each other, they update their states to $c_{2i+1}$ and $c_{2i+2}$, moving to partitions $2i+1$ and $2i+2$ respectively.
When $j = 2i+1 \geq k-1$ (or $j = 2i+2 \geq k-1)$ for the first time, it means that the node has reached its final partition. It updates its state to $P_m$, where $m = j - k + 1$, thus, the final partitions are $\{P_0, P_1, \dots, P_{k-1}\}$.

This process divides each partition into two partitions of equal size. However, in the case where the number of nodes is odd, a single node remains unmatched. For this reason, all nodes participate to the final formation of $H$ regardless of whether they have reached their final partitions or not. There is a straightforward mapping of each internal partition to a distinct leaf of the binary tree, that is, each partition $c_i$ behaves as if it were in partition $P_i$.
In order to avoid false connections between the partitions, we also allow the nodes to disconnect from each other if they move to a different partition.
This process guarantees that eventually all nodes end up in a single partition, and their connections are strictly described by $H$.

\begin{algorithm}[H]
	\floatname{algorithm}{Protocol}
	\caption{Graph of Supernodes}\label{protocol:partition}
	\begin{algorithmic}[1000]
		
		\State $Q = \{c_i, P_j\}$, $0 \leq i \leq 2(k-1)$, $0 \leq j \leq k-1$
			
		\State Initial state: $c_0$
		\State $ $
		
		\State $\delta:$
		
		\State \textbackslash \textbackslash Partitioning
		
		\State $1.\;(c_i, \; c_i, \; 0) \rightarrow (c_{2i+1}, \; c_{2i+2}, \; 0)$, if $(i+1)<k$
		
		\State $2.\;(c_i, \; \cdot, \; \cdot) \rightarrow (P_j, \; \cdot, \; \cdot)$, if $(i \geq k - 1)$, $j=i-k+1$

		\State $ $
		
		\State \textbackslash \textbackslash Formation of graph H
		\State $3.\;(P_i, P_j, 0) \rightarrow (P_i, P_j, 1)$, if $((i,j) \in H)$
		\State $4.\;(P_i, P_j, 1) \rightarrow (P_i, P_j, 0)$, if $((i,j) \notin H)$

		\State $5.\;(c_i, P_j, 0) \rightarrow (c_i, P_j, 1)$, if $((i,j) \in H)$
		\State $6.\;(c_i, P_j, 1) \rightarrow (c_i, P_j, 0)$, if $((i,j) \notin H)$

		\State $ $
		
		\State \textbackslash \textbackslash All transitions that do not appear have no effect.
		
	\end{algorithmic}
\end{algorithm}

\begin{lemma} \label{lemma:partition_correctness}
	In the absence of faults, Protocol \ref{protocol:partition}, divides the population into $k$ partitions of at least $n/k - 1$ nodes each.
\end{lemma}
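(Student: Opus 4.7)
The plan is to interpret the ``partitioning'' rules (rules 1 and 2) as recursively performing a balanced split of the population down a complete binary tree $T$ of depth $\delta$ (well-defined since $k=2^\delta$). The root of $T$ corresponds to the initial state $c_0$, each internal node $c_i$ has children $c_{2i+1}$ and $c_{2i+2}$ (reflecting exactly the effect of rule 1), and the $k$ leaves at depth $\delta$ correspond to the states $c_{k-1}, \ldots, c_{2k-2}$, each of which is one interaction away from its final partition label $P_0, \ldots, P_{k-1}$ via rule 2. Under this view, a firing of rule 1 on a pair of $c_i$ nodes is precisely a ``split'' that sends one agent to each child of $c_i$ in $T$.

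For each state $c_i$, let $n_i$ denote the total number of agents that ever attain $c_i$ during the execution. Trivially $n_0=n$. I would invoke the fairness condition to argue that whenever at least two agents are simultaneously in the same internal state $c_i$, rule 1 must eventually fire on a pair of them; iterating, at most one agent remains in $c_i$ permanently, and the remaining $2\lfloor n_i/2\rfloor$ agents are split evenly between $c_{2i+1}$ and $c_{2i+2}$. This gives the recurrence $n_{2i+1}, n_{2i+2}\geq \lfloor n_i/2\rfloor$, which together with the identity $\lfloor \lfloor n/2^d\rfloor/2\rfloor=\lfloor n/2^{d+1}\rfloor$ yields by induction on depth $n_i\geq \lfloor n/2^d\rfloor$ for every state $c_i$ at depth $d$. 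Specializing to $d=\delta$ gives $n_i\geq\lfloor n/k\rfloor$ for every leaf state $c_i$ with $i\in\{k-1,\dots,2k-2\}$, and fairness together with rule 2 then guarantees that every agent reaching such a leaf state is eventually converted to $P_{i-k+1}$ on its next interaction. Hence each final partition $P_j$ receives at least $\lfloor n/k\rfloor \geq n/k-1$ agents, which is exactly the bound claimed.

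The only real subtlety, rather than a genuine obstacle, is the bookkeeping of leftover agents stuck in internal states $c_i$ whenever the count $n_i$ feeding into $c_i$ happens to be odd; each internal node of $T$ can contribute at most one such leftover. Taking the floor at every step of the induction automatically absorbs all of these leftovers into the $-1$ slack of the final bound, so no further argument is needed. The informal paragraph preceding the protocol also mentions that leftover $c_i$ agents are later mapped to a leaf partition through rules 5 and 6 so that they participate correctly in the edge formation phase, but this remapping is orthogonal to the lower-bound count established here and can be safely ignored for the proof of this lemma.
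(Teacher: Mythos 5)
Your proof is correct and follows essentially the same route as the paper: both view the partitioning rules as a level-by-level balanced split down a complete binary tree, track the at-most-one unmatched leftover per internal state, and solve the resulting halving recurrence to get the $n/k-1$ bound (the paper writes the worst case as $n_p=(n_{p-1}-1)/2$ and solves it in closed form, whereas you carry exact floors and the identity $\lfloor\lfloor n/2^d\rfloor/2\rfloor=\lfloor n/2^{d+1}\rfloor$, a purely presentational difference). Your explicit appeal to fairness to guarantee that rule 1 keeps firing while two agents share an internal state, and that rule 2 eventually converts every leaf-state agent, is a slightly more careful rendering of a step the paper leaves implicit.
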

\begin{proof}
	Initially all nodes are in state $c_0$. When two $c_0$ nodes interact with each other, one of them becomes $c_1$ and the other one $c_2$. This means that all $n$ nodes split into two partitions of equal size. No node can become $c_0$ again at any time during the execution. In addition, there is only one partition $c_j$ that produces nodes of some other partition $c_i$, where $i$ is either $2j+1$ or $2j+2$, and the size of them are half the size of $c_j$.
	This process can be viewed as traversing a labelled binary tree, until all nodes reach to their final partition. A node in state $c_{i}$ has reached its final partition when $i \geq k-1$.
	This process describes a subdivision of the nodes, where each partition splits into two partitions of equal size. 
	The final partitions are $\{c_{k-1}, c_{k}, \dots, c_{2k-2}\}$.
	
	Assume now that the initial population size is $n_0$ (level $0$ of the binary tree). If $n_0$ is even, the size of the following two partitions $c_1$ and $c_2$ will be $n_0/2$. If $n_0$ is odd, one node remains unmatched, thus, the size of $c_1$ and $c_2$ will be $n_1 = \frac{n_0-1}{2}$.
	In the next level of the binary tree, at most one node will remain unmatched in each partition, thus $n_2 = \frac{n_1-1}{2}$. Consequently, the size of a partition in level $p$ can be calculated recursively, and (in the worst case) it is $n_p = \frac{n_{p-1}-1}{2}$.

\begin{equation}
\begin{split}
	n_p &= \frac{n_{p-1}-1}{2} = \frac{n_{p-1}}{2} - \frac{1}{2} = \frac{\frac{n_{p-2}}{2}-\frac{1}{2}}{2} -\frac{1}{2} = \\
	&= \frac{n_{p-2}}{4} - \frac{1}{4} - \frac{1}{2} = \dots = \frac{n_0}{2^p} - \sum_{i=1}^{p}\frac{1}{2^i} = \frac{n_0}{2^p} - (1-2^{-p}) > \frac{n_0}{2^p} - 1
\end{split}
\end{equation}

\noindent For $p = \log{k}$ levels, each partition has either $\frac{n_0}{k}$ or $\frac{n_0}{k}-1$ nodes.
\end{proof}

\begin{lemma}\label{lemma:partition_time}
	Protocol \ref{protocol:partition}, terminates after $\Theta(kn^2)$ expected time.
\end{lemma}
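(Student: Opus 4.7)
The plan is to decompose the expected completion time of Protocol~\ref{protocol:partition} into the expected time of the partitioning phase (driven by rules 1 and 2) and the graph-formation phase (driven by rules 3--6), and combine the two bounds.

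For the partitioning phase, I view the state evolution as traversing a labeled binary tree with $2k - 1$ distinct $c_i$ states, i.e., $k - 1$ internal partitions and $k$ leaves. Each internal partition $c_i$ with $i < k - 1$ must, by rule 1, reduce its $c_i$ population two at a time. The dominant cost for a single internal partition is its final rule-1 interaction: at the moment only two $c_i$ nodes remain, they must meet, and under the uniform random scheduler any designated pair is selected with probability $1/\binom{n}{2}$ per step, giving expected hitting time $\Theta(n^2)$. Summing this $\Theta(n^2)$ cost linearly over the $k - 1$ internal partitions yields the upper bound $O(k n^2)$ on the partitioning time.

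For the graph-formation phase, once a node has settled into some $P_j$ (or its internal $c_i$ is treated as $P_i$ via the leaf-mapping described before the protocol), rules 3--6 fix each pair-edge to its $H$-correct state at the endpoints' very next interaction, and no corrected edge is subsequently flipped. A coupon-collector argument over the $\binom{n}{2}$ relevant pairs then gives expected time $\Theta(n^2 \ln n)$ for all edges to be touched at least once after the partitioning has settled; this is subsumed by $\Theta(k n^2)$ whenever $k = \Omega(\log n)$, and for smaller $k$ the partitioning term itself already absorbs it. Combining the two phases yields the claimed $O(k n^2)$ upper bound.

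For the matching $\Omega(k n^2)$ lower bound, I would identify a chain of $\Omega(k)$ rule-1 last-pair events each contributing $\Omega(n^2)$ expected hitting time in a way that cannot be fully parallelized under the shared random scheduler; informally, each last-pair event involves a specific pair of nodes whose meeting probability is $1/\binom{n}{2}$ per step, and the binary-subdivision structure forces sequential dependencies (a partition at level $p$ cannot begin reducing until its parent at level $p-1$ has produced the corresponding $c_i$ nodes). The main obstacle is exactly this lower bound: since the $k - 1$ internal last-pair events draw from the same random interaction stream, they could a priori be resolved concurrently, so obtaining $\Omega(k n^2)$ rather than $\Omega(n^2 \log k)$ requires pinpointing a genuinely serial cascade of $\Omega(k)$ hitting events and carefully ruling out the alternative of parallel resolution.
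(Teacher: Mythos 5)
Your upper-bound analysis of the partitioning phase is essentially the paper's: the paper computes, for each internal partition $c_i$ of size $m_i$, the expected splitting time $n(n-1)\sum_{j=2}^{m_i}\frac{1}{j(j-1)} = n(n-1)(1-1/m_i) < n^2$ via success epochs, notes that the last pair alone costs $n(n-1)/2$, and multiplies by the $k-1$ internal partitions. Your two deviations both run into trouble. First, the graph-formation phase: the paper simply does not count rules 3--6 towards ``termination'', and your attempt to include them fails --- the coupon-collector cost of touching $\Theta(n^2)$ specific pairs is $\Theta(n^2\log n)$, which is \emph{not} absorbed by $\Theta(kn^2)$ when $k = o(\log n)$. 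In the paper's regime $k$ is a constant, so your claim that ``for smaller $k$ the partitioning term itself already absorbs it'' is exactly backwards; with edge formation included the stated bound would be false. You must read ``terminates'' as referring to the partitioning (every internal partition reduced to at most one straggler), as the paper implicitly does.

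Second, the lower bound. You are right to flag it as the main obstacle, and right that the $k-1$ last-pair hitting events are disjoint events racing on the same interaction stream: if $r$ of them are simultaneously pending, one resolves with probability $2r/(n(n-1))$ per step, so a batch of $m$ parallel last-pair events completes in expected time $\Theta(n^2\log m)$, not $\Theta(mn^2)$. The only forced serialization is along a root-to-leaf path of the splitting tree, which has depth $\log k$, giving $\Omega(n^2\log k)$; a level-by-level accounting suggests the true partitioning time is $\Theta(n^2\log^2 k)$ for super-constant $k$. So the gap you identify cannot be closed: there is no serial cascade of $\Omega(k)$ hitting events. Be aware that the paper's own proof does not close it either --- it obtains $\Theta(kn^2)$ by multiplying the per-partition $\Theta(n^2)$ by $k-1$ without addressing concurrency --- and the lemma is harmless only because it is invoked with $k$ constant, where $\Theta(kn^2)$, $\Theta(n^2\log^2 k)$ and $\Theta(n^2)$ all coincide.
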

\begin{proof}
	Protocol \ref{protocol:partition} operates in phases, where each phase doubles the number of partitions. After $\log{k}$ phases, there exist $k$ groups in the population and the nodes terminate.
	
	We now study the time that each group $c_i$ needs in order to split into two partitions. Here, for simplicity, $i$ indicates the level of a partition $c$ in the binary tree and $m_i$ the number of nodes of partition $c_i$.
	
	Let $X$ be a random variable defined to be the number of steps until all $m_i$ nodes move to their next partitions. Call a step a success if two nodes in $c_i$ interact, thus, moving to their next partitions. We divide the steps of the protocol into \textit{epochs}, where epoch $j$ begins with the step following the $j$th success and ends with the step at which the $(j+1)$st success occurs. Let also the r.v. $X_j$, $1 \leq j \leq m_i$ be the number of steps in the $j$th epoch.
		
	The probability of success during the $j$th epoch, for $0 \leq j \leq m_i$, is
	$p_j = \frac{(m_i-j)(m_i-j-1)}{n(n-1)}$ and $E[X_j] = 1/p_j$. By linearity of expectation we have

\begin{equation}
	\begin{split}
		E[X] &= E[\sum_{j=0}^{m_i-2}X_j] = \sum_{j=0}^{m_i-2}E[X_j] = n(n-1)\sum_{j=0}^{m_i-2}\frac{1}{(m_i-j)(m_i-j-1)} \\
		&= n(n-1)\sum_{j=2}^{m_i}\frac{1}{j(j-1)} < n(n-1) \sum_{j=2}^{m_i}\frac{1}{(j-1)^2} \\
		&= n(n-1) \sum_{j=1}^{m_i-1}\frac{1}{j^2} = n(n-1)(1-\frac{1}{m_i}) < n^2
	\end{split}
\end{equation}
	
\noindent The above uses the fact that $m_i \leq n$ for any $i \geq 0$.	
	
	For the lower bound, observe that the last two remaining nodes in $c_i$ need on average $n(n-1)/2$ steps to meet each other. Thus, we conclude that $E[X]= \Theta(n^2)$.
	
	In total, $\sum_{0}^{\log{(k)}-1}2^i = 2^{\log{k}}-1 = k-1$ partitions split, thus, the total expected time to termination is $\Theta(kn^2)$ steps.
\end{proof}

\begin{lemma}\label{lemma:partition_faults}
	In the case where up to $f$ faults occur during the execution of Protocol \ref{protocol:partition}, each final partition has at least $n/k - f$ nodes, where $k$ is the number of partitions and $f<k$.
\end{lemma}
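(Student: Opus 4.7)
The plan is to upgrade the analysis of Lemma \ref{lemma:partition_correctness} by tracking how a single crash fault propagates through the binary subdivision. The central claim I would try to establish is: \emph{every crash fault decreases the final size of exactly one partition $P_j$ by exactly one, relative to a fault-free continuation}. If I can show this, then summing over the $f$ faults immediately yields that any given final partition loses at most $f$ nodes compared with the fault-free bound of Lemma \ref{lemma:partition_correctness}, giving the desired $n/k - f$ lower bound.

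To prove the propagation claim, I would consider a fault that removes a node while it is in some intermediate state $c_i$ at level $p$ of the subdivision tree (the final-partition case $i \ge k-1$ is a direct deduction of one). Let $m$ be the fault-free size of partition $c_i$ at that moment. In the fault-free execution, Rule~1 would eventually split the partition into children of sizes $\lfloor m/2 \rfloor$ and $\lceil m/2 \rceil$ at level $p+1$; after the fault, the parent has $m-1$ nodes and the children receive $\lfloor (m-1)/2 \rfloor$ and $\lceil (m-1)/2 \rceil$. A short parity case analysis (both $m$ even and $m$ odd) shows that one child retains exactly the same size it would have had without the fault, while the other loses exactly one node. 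I would then argue by induction down the levels $p+1, p+2, \dots, \log k$ that this deficit of one travels along a single root-to-leaf path of the binary tree, so that precisely one final partition ultimately has one fewer node.

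Once the single-path propagation is established, the rest is bookkeeping. Combining the claim with Lemma \ref{lemma:partition_correctness}, each final partition has size at least $n/k - 1$ in the fault-free case; subtracting at most one additional node per fault, and absorbing the rounding term into the $-f$ slack (which is harmless since $f\ge 1$ whenever faults occur, and for $f=0$ the statement reduces to Lemma \ref{lemma:partition_correctness}), yields the claimed bound of $n/k - f$ nodes per final partition.

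The main obstacle I anticipate is cleanly formalising the propagation argument across adversarial interleavings: faults and ordinary splits are interleaved arbitrarily by the scheduler, and multiple faults may hit the same ancestor chain of some $P_j$. I would handle this by framing the argument as a coupling between two executions that differ only in whether one particular fault occurs, showing that at every level after the fault the two executions produce partitions whose sizes differ by at most one, with the difference concentrated on a single partition; iterating this coupling across all $f$ faults then bounds the per-partition deficit by $f$.
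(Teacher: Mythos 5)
Your central claim --- that each crash fault decreases the final size of \emph{exactly one} partition by exactly one, with the deficit travelling down a single root-to-leaf path --- is false for this protocol, and it fails at the very first step of your parity analysis. You model a split of an $m$-node partition as producing children of sizes $\lfloor m/2 \rfloor$ and $\lceil m/2 \rceil$, but rule~1 pairs up nodes of $c_i$ and sends one member of each pair to each child, so both children receive exactly $\lfloor m/2 \rfloor$ nodes and any odd leftover stays stranded in $c_i$. Hence if $m$ is even, removing one node gives both children $\lfloor (m-1)/2\rfloor = m/2-1$, i.e.\ \emph{both} children lose a node, and the deficit can fan out to every leaf below the fault. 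Concretely, for $n=8$, $k=4$, a single fault in $c_0$ before any interactions leaves $7$ nodes, so $c_1$ and $c_2$ get $3$ each and every final partition gets $1$ node instead of $2$: four partitions shrink, not one. Your proposed coupling invariant (``sizes differ by at most one, with the difference concentrated on a single partition'') is therefore violated immediately. What \emph{is} true, and would suffice, is the weaker statement that a deficit of one in a parent induces a deficit of at most one in each child (since $\lfloor m/2\rfloor - \lfloor (m-1)/2\rfloor \in \{0,1\}$), so each fault costs each final partition at most one node; but that is a different argument, and it is essentially the paper's: the paper treats the $f$ faults in a partition $c_j$ as equivalent to running the fault-free protocol on $|c_j|-f$ nodes in the subtree below $c_j$ and reapplies Lemma~\ref{lemma:partition_correctness}, observing that the worst case is all $f$ faults landing in a single final partition.

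Separately, your closing step does not close: combining the fault-free bound of $n/k-1$ with a loss of at most $f$ per partition yields $n/k-f-1$, and the $-1$ cannot be ``absorbed into the $-f$ slack'' --- $n/k-f-1$ is strictly weaker than the claimed $n/k-f$. To be fair, the paper's own proof ends with the same worst-case figure of $n/k-f-1$, so this off-by-one is inherited from the source rather than introduced by you; but you should either carry the $-1$ explicitly or note the discrepancy rather than argue it away.
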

\begin{proof}
	Consider the case where $f$ faults occur in the first partition $c_0$. Then, we can assume that we run a failure-free execution on a population of size $n_0=n-f$.
	By Lemma \ref{lemma:partition_correctness}, each partition will end up having either $\frac{n-f}{k} > \frac{n}{k}-2$ or $\frac{n-f}{k} - 1 > \frac{n}{k}-3$ nodes.
	
	Now, consider the case where $f$ crash faults occur in some partition $c_j$.
	The nodes of each partition $c_j$ operate independently from the rest of the population, that is, they never update their states and/or connections when they interact with nodes from a different partition. Thus, as in the previous case, if no more faults occur, we can assume that we have a failure-free execution on $|c_j|-f$ nodes.
	Again, by Lemma \ref{lemma:partition_correctness}, after $p$ subdivisions, each final partition that was obtained by $c_j$ will either have $\frac{|c_j|-f}{p}$ or $\frac{|c_j|-f}{p} - 1$ nodes.
	Consequently, any number of faults in a partition $c_i$ are equally split into the partitions following $c_i$.
	
	It is then obvious that in the worst case, a final partition might have $\frac{n}{k}-f-1$ nodes, and this is the result of $f$ faults in a final partition.
\end{proof}

By Lemma \ref{lemma:partition_faults}:
\begin{corollary}
	$||V_i| - |V_j|| \leq f+1$, $\forall 1 \leq i,j \leq k$.
\end{corollary}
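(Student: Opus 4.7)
The plan is to combine the lower bound already established in Lemma \ref{lemma:partition_faults} with a matching upper bound on partition sizes.

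First, I would invoke Lemma \ref{lemma:partition_faults} directly to obtain $|V_i| \geq n/k - f$ for every final partition index $i \in \{1, \dots, k\}$. This takes care of one side of the inequality.

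Next, I would establish an upper bound of the form $|V_j| \leq \lceil n/k \rceil \leq n/k + 1$. The key observation is that crash faults can only \emph{remove} nodes; they never move a node from one partition to another, nor do they inject new nodes into any branch of the subdivision tree. Thus, running through the same induction as in the proof of Lemma \ref{lemma:partition_correctness} (whose recurrence $n_p = \lfloor n_{p-1}/2 \rfloor$ produced at most $\lceil n_0/2^p \rceil$ nodes per partition at depth $p$), we see that the upper bound $\lceil n/k \rceil$ on any final partition is preserved regardless of where or when the $f$ faults occur; the only possible effect of a fault within an ancestor partition $c_j$ is to reduce the sizes of its descendant partitions. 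The faults therefore only \emph{shrink} partitions relative to the failure-free subdivision, and the global maximum $\lceil n/k \rceil$ remains a valid upper bound.

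Combining the two bounds, for any $1 \leq i, j \leq k$,
\begin{equation*}
|V_i| - |V_j| \;\leq\; \left\lceil \tfrac{n}{k} \right\rceil - \left( \tfrac{n}{k} - f \right) \;\leq\; f + 1,
\end{equation*}
and the corollary follows by symmetry in $i$ and $j$.

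The main obstacle is making the upper bound argument fully rigorous: one has to verify that the monotonicity claim ("faults can only shrink descendant partitions") holds uniformly across the protocol's subdivision tree, including the corner cases where a fault leaves an intermediate partition with an odd number of survivors that later produces an unmatched leftover. Fortunately, the proof of Lemma \ref{lemma:partition_faults} already handles this by treating a fault in $c_j$ as equivalent to a failure-free execution on $|c_j| - f$ nodes from that subtree onward, and the same reduction yields the upper bound $\lceil n/k \rceil$ with essentially no additional work.
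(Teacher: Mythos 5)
Your proposal is correct and matches the paper's intent: the paper states this corollary as an immediate consequence of Lemma \ref{lemma:partition_faults}, whose proof already establishes both that the fault-free subdivision caps every partition at roughly $n/k$ (via Lemma \ref{lemma:partition_correctness}) and that faults only shrink descendant partitions down to at worst $n/k - f - 1$. You have simply made the implicit upper-bound half of that one-line argument explicit, which is the same route, not a different one.
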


By Lemma \ref{lemma:partition_faults} and the definition of partial constructibility (Definition \ref{def:constructible_language}):
\begin{theorem}
The language $L_{D,f}$, where $k$ is a constant number, is partially constructible under $f$ faults.
\end{theorem}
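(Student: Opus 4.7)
The plan is to verify both conditions of partial constructibility (Definition~\ref{def:constructible_language}) for Protocol~\ref{protocol:partition} executed on $n$ nodes with up to $f$ crash faults.

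First, I would argue that every execution stabilizes to a graph in $L_{D,f}$. Lemma~\ref{lemma:partition_faults} together with its Corollary guarantees that, after any execution with $f' \leq f$ faults, the surviving nodes are partitioned into $k$ groups $V_0, \dots, V_{k-1}$ satisfying $|V_i| \geq n/k - f$ and $||V_i| - |V_j|| \leq f+1$. The edge-formation rules $3$--$6$ then ensure that two surviving nodes become connected iff their partition indices form an edge of $H$, using the protocol's convention that each internal $c_i$-node is treated as a member of $V_i$. Since these are precisely the defining properties of $L_{D,f}$, the output graph lies in $L_{D,f}$.

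Second, I would exhibit a single graph in $L_{D,f}$ that no execution of Protocol~\ref{protocol:partition} produces. The protocol's binary-tree subdivision severely restricts the realizable partition-size distributions: rule~$1$ always produces one $c_{2i+1}$ for each $c_{2i+2}$, so the splits are strictly balanced up to a single unmatched singleton per level. In contrast, $L_{D,f}$ admits any distribution whose pairwise gaps are bounded by $f+1$. As a concrete witness take $k=4$, $f=1$, $n=16$: the distribution $(5,4,3,3)$ sums to $n - 1 = 15$ with maximum pairwise gap $2 = f+1$, so the corresponding $H$-graph belongs to $L_{D,f}$. On the other hand, I would check that every execution on $16$ nodes with a single fault produces a distribution that is a permutation of $(4,4,4,3)$, hence $(5,4,3,3)$ is unreachable and the associated graph witnesses partial-but-not-full constructibility.

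The main technical obstacle is this second part: I must verify, across all adversarial schedulings and all placements of the single fault, that no partition can exceed $\lceil n/k \rceil = 4$. The key observation is that rule~$1$ is the only mechanism redistributing counts between partitions, and each application is strictly balanced; therefore imbalances can arise only from unmatched singletons, and a single fault creates at most one such singleton along a single root-to-leaf path in the subdivision tree. This bounds the overflow to a single extra node contributing to a single final partition, precluding distributions such as $(5,4,3,3)$ in which two different partitions would need to ``borrow'' beyond the nominal share.
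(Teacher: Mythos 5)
Your argument is correct in substance and is considerably more explicit than the paper's own treatment: the paper disposes of this theorem in a single line by citing Lemma~\ref{lemma:partition_faults} and Definition~\ref{def:constructible_language}, and in particular never exhibits a witness graph for the ``partial'' half of the claim, which is exactly the part you work out. Your part (i) coincides with the paper's implicit argument (Lemma~\ref{lemma:partition_faults} plus rules $3$--$6$ and the convention that an internal $c_i$ acts as $P_i$), so the real content of your proposal is the unreachability of $(5,4,3,3)$, and that conclusion is right: on $16$ nodes with at most one fault the protocol only realizes $(4,4,4,4)$ or permutations of $(4,4,4,3)$.

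Three points to tighten. First, your structural justification --- ``a single fault creates at most one unmatched singleton along a single root-to-leaf path'' --- is not accurate: an odd residue propagates downward, so a single early fault on $n=16$ strands singletons at $c_0$, $c_1$ \emph{and} $c_2$ simultaneously. The bound you need survives for a different reason: the internal nodes $c_0,\dots,c_{k-2}$ are mapped to \emph{distinct} final partitions $P_0,\dots,P_{k-2}$, so each $P_j$ absorbs at most one stranded node on top of its leaf share, giving the cap of $\lceil (n-1)/k\rceil$ per partition; state it that way. Second, the witness only works if $H$ is chosen so that the distributions $(5,4,3,3)$ and $(4,4,4,3)$ yield non-isomorphic graphs (e.g., $H$ a perfect matching on the partition indices); for degenerate $D$ such as the complete relation with self-loops, every graph in $L_{D,f}$ is a clique on the surviving nodes, every size distribution gives the same graph, and no witness exists --- the language is then fully, not partially, constructible. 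That is really a gap in the theorem as stated rather than in your argument, but you must at least fix an $H$ for which your two distributions are distinguishable. Third, the theorem quantifies over all $D$ and $f$, so the single instance $k=4$, $f=1$, $n=16$ should be promoted to the general statement that for any $k$ and $f$ the protocol's balanced subdivision confines partition sizes to an interval of width independent of the adversary's choices, while $L_{D,f}$ admits distributions outside that interval; your ``rule $1$ is strictly balanced'' observation is the right engine for this, it just needs to be run in general.
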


We now show that if we permit a waste linear in $n$, any graph language that is constructible in the fault-free NET model, becomes constructible under a bounded number of faults.

\begin{theorem}
	Take any NET protocol $\Pi$ of the original fault-free model.
	There is a NET $\Pi'$ such that when at most $f$ faults may occur on any population of size $n$, $\Pi'$ successfully simulates an execution of $\Pi$ on at least $\frac{n}{2f}-f$ nodes.
\end{theorem}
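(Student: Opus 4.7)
The plan is to compose Protocol~\ref{protocol:partition} with $\Pi$, using the former to carve out $2f$ disjoint ``simulation rooms'' of guaranteed size and then run $\Pi$ independently inside each. Every node of $\Pi'$ will carry a pair $(x,y)$ where $x$ is a Protocol~\ref{protocol:partition}-state driving the subdivision and $y$ is a $\Pi$-state initialised to $q_0^{\Pi}$ but kept frozen until $x$ reaches some final label $P_j$. The $x$-component will evolve exactly as in Protocol~\ref{protocol:partition} with $k=2f$, except that I would override the graph $H$ there to be empty so that no inter-partition edge is ever activated. Whenever two interacting nodes share the same final label $P_j$, they apply one transition of $\Pi$ to their $y$-components and to the edge between them; any edge between nodes carrying different $P_j$ labels is forcibly deactivated. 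In this way each final partition becomes an isolated subpopulation on which $\Pi$ runs as if alone.

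I would then combine Lemma~\ref{lemma:partition_faults} with a pigeonhole argument. By that lemma every one of the $k=2f$ final partitions receives at least $n/(2f) - f$ nodes even under $f$ crashes. Moreover, a crash on a node still in an intermediate state $c_i$ only shrinks partition sizes along one root-to-leaf branch of the subdivision tree and cannot corrupt any $\Pi$-simulation, since that node never entered the $\Pi$-phase; a crash on a node already in state $P_j$ corrupts only the single simulation running in partition $j$. Hence the $f$ crashes damage at most $f$ of the $2f$ final simulations, leaving at least $f$ of them entirely fault-free. Inside any such partition the induced execution on the $y$-components is indistinguishable from a genuine failure-free execution of $\Pi$ on a population of size at least $n/(2f)-f$, and fairness of the global scheduler projects onto fairness within each partition; so $\Pi$ stabilises there to a correct output, yielding the promised simulation while the remaining (at most $f$) partitions serve as waste.

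The hard part is keeping the two layers cleanly separated: no node should start simulating $\Pi$ before it has irrevocably entered a final partition, and stray interactions during the partitioning phase must not corrupt $y$-components or open spurious edges. Both are addressed by the freezing rule for $y$ together with the guard ``both endpoints carry the same label $P_j$'' on every $\Pi$-transition, and by the monotonicity of Protocol~\ref{protocol:partition} (once a node is in some $P_j$ it stays there). A minor technicality is the power-of-two restriction on $k$ in Protocol~\ref{protocol:partition}, which I would handle by taking a full binary subdivision of depth $\lceil \log_2 2f \rceil$; this affects only the constants hidden inside the $n/(2f)-f$ bound.
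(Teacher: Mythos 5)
Your proposal is correct and follows essentially the same route as the paper: both run the subdivision rules of Protocol~\ref{protocol:partition} to split the population into roughly $2f$ groups, let each node begin executing $\Pi$ only once it has reached its final partition and only with same-partition peers, invoke Lemma~\ref{lemma:partition_faults} for the $n/(2f)-f$ size bound, and conclude by pigeonhole that at least one partition is untouched by the at most $f$ faults. Your extra care about freezing the $\Pi$-component and forcibly deactivating cross-partition edges just makes explicit what the paper leaves implicit.
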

\begin{proof}
	Consider any constructible language $L$ and a protocol $\Pi$ that constructs it.
	For any bounded number of faults $f$, set $k=2^\delta$, where $2^{\delta-1}<f$.
	Consider a protocol $\Pi'$, which consists of the rules $1$ and $2$ of Protocol \ref{protocol:partition}.
	These rules partition the population into $k$ groups, where $k$ is an input parameter of $\Pi'$.
	By Lemma \ref{lemma:partition_faults}, each group has at least $n/k - f$ nodes. For $2f$ partitions in the worst case, the number of nodes in each partition is at least $\frac{n}{2f} - f$.
	Then assume that when a node reaches its final partition, it starts executing protocol $\Pi$, updating its state and connections only when interacting with nodes of the same partition.
	As the number of partitions is strictly more that the upper bound on the number of faults $f$, there exists at least one partition that no fault has occurred.
\end{proof}

\section{Notified Network Constructors}
\label{nNET}

In this section, in light of the impossibility result of Section \ref{fault_tolerant_SNET}, we allow fault notifications when nodes crash. In particular, we introduce a \textit{fault flag} in each node, which is initially zero.
When a node $u$ crashes at time $t$, every node $v$ which was adjacent to $u$ at time $t$ is notified, that is, the fault flag of all $v$ becomes $1$. In the case where $u$ is an isolated node (i.e., it has no active edges), an arbitrary node $w$ in the graph is notified, and its fault flag becomes $2$. Then, the fault flag becomes immediately zero after applying a corresponding rule from the transition function.

More formally, the set of node-states is $Q \times \{0, 1, 2\}$, and for clarity in our descriptions and protocols, we define two types of transition functions. The first one determines the node and connection state updates of pairwise interactions ($\delta_1: Q \times Q \times \{0,1\} \rightarrow Q \times Q \times \{0,1\}$), while the second transition function determines the node state updates due to fault notifications ($\delta_2: Q \times \{0,1,2\} \rightarrow Q \times \{0,1,2\} $). This means that during a step $t$ that a node $u$ crashes, all its adjacent nodes are allowed to update their states based on $\delta_2$ at that same step. If there are no any adjacent nodes to $u$, an arbitrary node is notified, thus updating its state based on $\delta_2$ at step $t$.

We have assumed that the faults can only occur sequentially (at most one fault per step). This assumption was equivalent to the case where many faults can occur in each step in the original NET model. However, when fault notifications are allowed, this does not hold, unless the fault flag could be used as a counter of faults in each step. We want to keep the model as minimal as possible, thus, we only allow the adversary to choose one node at most in each step to crash.

As long as only one fault at most can occur in each step, the separation of these transition functions is equivalent to the case where only one transition function exists $\delta: (Q \times\{0,1,2\}) \times (Q\times\{0,1,2\}) \times \{0,1\} \rightarrow (Q\times\{0,1,2\}) \times (Q\times\{0,1,2\}) \times \{0,1\}$.
Consider the case where a node $u$ crashes, notifying a node $w$ in the population (its fault flag becomes either $1$ or $2$). Then, in the first case (separate transition functions), $w$ is instantly allowed to update its state, while in the second case (unified transition functions), $w$ waits until its next interaction with a node $v$, applying the rule of $\delta_2$ independently of the state and connection of $v$. During the same interaction, $w$ and $v$ can also update their states and connections based on the corresponding rule of $\delta_1$.

In this section, we investigate whether the additional information in each agent (the fault flag) is sufficient in order to design fault-tolerant or $k-$fault-tolerant protocols, overcoming the impossibility of certain graph languages in the NET model.

Such a minimal fault notification mechanism can be exploited to construct a larger class of graph languages that in the original Network Constructors model where no form of notifications was available.

\subsection{Fault-Tolerant Protocols}\label{nNET_minimal_updates}

In this section, our goal is to design protocols that after a fault, the nodes try to fix the configuration and eventually stabilize to a correct network.
We give protocols for some basic network construction problems, such as \textit{spanning star}, \textit{cycle cover}, and in Section \ref{universal_waste} we give a fault-tolerant spanning line protocol which is part of our generic constructor capable of constructing a large class of networks.

\begin{algorithm}[H]
	\floatname{algorithm}{Protocol} 
	\caption{FT Spanning Star}\label{protocol:spanning_star}
	\begin{algorithmic}[1000]
		
		\State $Q = \{b,\; r\}\times\{0,1\}$
		\State Initial state: $b$
		\State $ $
		\State $\delta_1:$

		\State $(b,\; b, \; 0) \rightarrow (b,\; r, \; 1)$		
		\State $(b,\; b, \; 1) \rightarrow (b,\; r, \; 1)$		
		\State $(r,\; r, \; 1) \rightarrow (b,\; b, \; 0)$		
		\State $(b,\; r, \; 0) \rightarrow (b,\; r, \; 1)$
		
		\State $ $
		
		\State $\delta_2:$
		\State $(r,1) \rightarrow (b,0)$

	\end{algorithmic}
\end{algorithm}

\begin{proposition}\label{lemma:spanning_star1}
	\textit{FT Spanning Star} is fault-tolerant.
\end{proposition}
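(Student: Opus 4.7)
The plan is to prove the claim in two stages. First I will establish that, in the absence of faults, \emph{FT Spanning Star} stabilizes to a spanning star on the $n$ nodes of the population. Second, I will argue that every adversarial crash leaves the surviving population in a legitimate configuration of $\delta_1$ on $n-1$ nodes to which the fault-free argument re-applies; iterating over the sequence of crashes then yields the full fault-tolerance claim.

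For the fault-free part I first classify the stable configurations of $\delta_1$. A configuration admits no enabled non-identity rule iff (i) there are no two $b$-nodes (so $n_b \le 1$), since otherwise one of $(b,b,0)\to(b,r,1)$ and $(b,b,1)\to(b,r,1)$ is enabled; (ii) there are no active $r$-$r$ edges, since otherwise $(r,r,1)\to(b,b,0)$ is enabled; and (iii) every $b$-$r$ pair has an active edge, since otherwise $(b,r,0)\to(b,r,1)$ is enabled. The only configurations meeting (i)--(iii) are a spanning star (with $n_b = 1$) and the degenerate all-$r$-no-edges configuration (with $n_b = 0$). I will then establish the invariant $n_b \ge 1$ for fault-free executions (initially $n_b = n$; the only rules that decrement $n_b$ are $(b,b,0)$ and $(b,b,1)$, each requiring two $b$-nodes and leaving one intact, while $(r,r,1)$ only increments $n_b$), ruling out the degenerate stable configuration. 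The missing ingredient is that from any reachable configuration a spanning star is reachable by some sequence of rule applications; combined with fairness (every always-reachable configuration is eventually reached), this forces stabilization. I plan to exhibit an explicit rule sequence: first use rule $3$ to dismantle every $r$-$r$ component, then use rules $1$ and $2$ to reduce $n_b$ to $1$ while preferring interactions on $b$-nodes that currently have no $r$-neighbors (so that no new $r$-$r$ edges are produced), and finally use rule $4$ to activate each remaining $b$-$r$ non-edge.

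For the fault-tolerance part, I split on the crashed node $u$. If $u$ was isolated, an arbitrary node is notified with flag $2$, for which $\delta_2$ has no rule, so only the flag resets and the execution continues as a fault-free run on $n-1$ nodes. If $u$ had active neighbors, each is notified with flag $1$: $b$-neighbors apply an implicit identity rule and keep their state, while $r$-neighbors flip to isolated $b$-nodes via $(r,1)\to(b,0)$. The post-crash configuration therefore lies in the state space of $\delta_1$, and the invariant $n_b \ge 1$ survives (a crashed $b$ with $r$-neighbors produces new $b$-nodes via $\delta_2$; a crashed $r$ leaves any ambient $b$-center in place; the only worrisome case, a crashed isolated $b$ in an otherwise all-$r$ population, is excluded by the observation that a $b$-node that remains isolated forever contradicts fairness together with rule $4$, which activates every $b$-$r$ pair it encounters). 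The fault-free argument applied to the post-crash configuration then drives the execution to a spanning star on the $n-f$ surviving nodes.

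The main obstacle is making the reachability argument precise, because the rules do not admit a simple strict potential: rule $(r,r,1)\to(b,b,0)$ raises $n_b$ and rules $(b,b,\cdot)\to(b,r,1)$ can raise $m_{rr}$ when the relabeled $b$-node had $r$-neighbors, so naive Lyapunov arguments oscillate. The strategy of scheduling interactions so that rules $1$ and $2$ are applied only to $b$-nodes currently free of $r$-neighbors (after first using rule $3$ to dismantle the $r$-subgraph) provides a concrete construction, but verifying that such favorable interactions remain available after every step, and that the procedure terminates in finitely many steps, requires careful bookkeeping.
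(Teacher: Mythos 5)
Your overall architecture (characterize the stable configurations, maintain $n_b\ge 1$, show a spanning star is always reachable, invoke fairness) is sound and in fact more structured than the paper's own argument, but it has two genuine gaps. The first is the one you flag yourself, and it is real: the phased schedule ``dismantle all $r$--$r$ edges, then reduce $n_b$ while only converting $b$-nodes with no $r$-neighbours'' can get stuck. Take two disjoint stars, each with a $b$ centre and $r$ leaves: there are no $r$--$r$ edges to dismantle, yet every $b$-node has an $r$-neighbour, so any application of rule $1$ or $2$ necessarily creates new $r$--$r$ edges and throws you back to the first phase. The oscillation is not avoidable by ordering alone; you need to interleave and measure progress differently. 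One way to close it: fix a designated $b$-node $c$, only ever convert other $b$-nodes via interactions with $c$ (so the edge activated by rules $1$/$2$ is always incident to $c$), dismantle $r$--$r$ edges with rule $3$ whenever no non-$c$ $b$-node remains, and use the potential $\Phi=(n+1)\,m+(n_b-1)$, where $m$ is the number of active edges \emph{not} incident to $c$: a conversion at $c$ drops $\Phi$ by $1$, and an application of rule $3$ drops it by $(n+1)-2>0$, so the schedule terminates in a spanning star centred at $c$.

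The second gap is more serious because the reasoning is wrong rather than incomplete: you exclude the crash of a unique isolated $b$-node in an all-$r$ population by arguing that such a node ``cannot remain isolated forever'' by fairness. Fairness constrains the infinite execution, not the adversary's timing; the adversary is free to crash that node at the very step at which it is still isolated, so your argument does not rule the scenario out. What actually excludes it is that the configuration itself is unreachable: the protocol maintains the invariant that \emph{every connected component of the active graph contains a black node} (an $r$-node is only ever created together with an active edge to a surviving $b$-node, rule $3$ blackens both endpoints when it deletes an $r$--$r$ edge, and $\delta_2$ blackens every notified $r$-neighbour, each split component retaining a notified neighbour of the crashed node). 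Hence if $n\ge 2$ and there is exactly one black node, it cannot be isolated, since the $r$-nodes would then lie in components with no black node. This component invariant is precisely the backbone of the paper's own proof, which otherwise argues at a higher level than yours (it asserts convergence to the star once the invariant is in place, without your explicit stable-configuration classification or reachability schedule); importing it fixes your fault-tolerance step, and it also gives the cleanest statement of why $n_b\ge 1$ survives every crash.
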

\begin{proof}
	Assume that any number of faults $k<n$ occur during an execution. Initially, all nodes are in state $b$ (\textit{black}).
	Two nodes connect with each other, if either one of them is black, or both of them are black,	in which case one of them becomes $r$ (\textit{red}).
	A black node can become red only by interaction with another black node, in which case they also become connected.
	Thus, with no crash faults, a connected component always includes at least one black node. In addition, all isolated nodes are always in state $b$. This is because, if a red node removes an edge it becomes black.
	
	Then, if a (connected) node crashes, the adjacent nodes are notified and the red nodes become black, thus, any connected component should again include at least one black node.
	Now, consider the case where only one black node remains in the population. Then the rest of the population (in state $r$) should be in the same connected component as the unique $b$ node.
	Then, if $b$ crashes, at least one black node will appear, thus, this protocol maintains the invariant, as there is always at least one black node in the population.
	\textit{FT Spanning Star} then stabilizes to a star with a unique black node in the center.
\end{proof}

\begin{algorithm}[H]
	\floatname{algorithm}{Protocol} 
	\caption{FT Cycle-Cover}\label{protocol:cycle_cover}
	\begin{algorithmic}[1000]
		
		\State $Q = \{q_0,\; q_1,\; q_2\} \times \{0,1\}$
		\State Initial state: $q_0$
		\State $ $
		
		\State $\delta_1:$
		
		\State $(q_0, \; q_0, \; 0) \rightarrow (q_1, \; q_1, \; 1)$		
		\State $(q_1, \; q_0, \; 0) \rightarrow (q_2, \; q_1, \; 1)$		
		\State $(q_1, \; q_1, \; 0) \rightarrow (q_2, \; q_2, \; 1)$
		
		\State $ $
		
		\State $\delta_2:$
		\State $(q_1,1) \rightarrow (q_0,0)$
		\State $(q_2,1) \rightarrow (q_1,0)$			
		
	\end{algorithmic}
\end{algorithm}

\noindent Similarly, we can show the following.

\begin{proposition}
	\textit{FT Cycle-Cover} is fault-tolerant.
\end{proposition}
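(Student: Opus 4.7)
The plan is to mimic the structure of the proof of Proposition~\ref{lemma:spanning_star1} for \emph{FT Spanning Star}, replacing the ``always at least one black node'' invariant with a state--degree invariant. Concretely, I would prove that at every reachable configuration the graph of active edges is a vertex-disjoint union of (i) cycles of length at least $3$ whose vertices are all in state $q_2$, (ii) simple paths whose two endpoints are in state $q_1$ and whose internal vertices are in state $q_2$, and (iii) isolated vertices in state $q_0$. Equivalently, the state index equals the current degree of the node.

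First I would verify that $\delta_1$ preserves this invariant by case analysis on its three rules: rule~1 promotes two isolated $q_0$'s to a length-$1$ chain of $q_1$'s; rule~2 attaches an isolated $q_0$ to a chain endpoint, converting that endpoint ($q_1$) into an internal vertex ($q_2$) while creating a fresh endpoint; rule~3 applied to two $q_1$'s either closes one chain of length $\geq 2$ into a cycle of length $\geq 3$, or concatenates two distinct chains at their endpoints to form one longer chain. I would then check that $\delta_2$ preserves it: when a node crashes, each surviving neighbor loses exactly one incident edge; the rule $(q_1,1)\to(q_0,0)$ correctly demotes a $q_1$-endpoint that has lost its unique edge to an isolated $q_0$, and $(q_2,1)\to(q_1,0)$ correctly demotes an interior or cycle $q_2$ that has lost one of its two edges to a chain endpoint. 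A crash on an isolated $q_0$ uses the flag-$2$ notification, for which no $\delta_2$ rule fires, so the configuration is untouched.

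Next I would use the invariant together with fairness to argue stabilization on the surviving $n-f$ nodes. As long as any node is in state $q_0$ or $q_1$, some ``progress'' interaction is always reachable: two $q_0$'s can apply rule~1, a $q_0$ and a $q_1$ can apply rule~2, and two $q_1$'s from the same or distinct chains can apply rule~3 (closing a chain into a cycle or merging two chains). Each such application strictly decreases the lexicographic potential $(\#q_0\text{-nodes},\ \#q_1\text{-nodes})$; by the fairness condition, any configuration with a progress interaction must eventually take one, so after finitely many steps all nodes are in $q_2$ and, by the invariant, the graph is a cycle cover.

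The main obstacle will be the degenerate ``stuck'' configurations left by adversarial fault patterns, namely those in which the only non-$q_2$ nodes are a single isolated $q_0$, or a lone length-$1$ chain of two $q_1$'s, with every other component already a cycle; in such configurations no rule of $\delta_1$ is enabled and no further progress is possible. Handling this requires showing either that such configurations are unreachable for the relevant values of $n-f$ (because rule~3 and rule~2 would have fired first during the construction phase), or that the specification of \emph{cycle cover} tolerates the surviving non-$q_2$ residue. I would treat this exactly as in the spanning-star proof by showing that the invariant, together with the $\delta_2$ compensations, makes every stable configuration a valid cycle cover of the surviving population, and by invoking fairness to rule out the adversary indefinitely preventing the merging interactions that resolve the borderline cases.
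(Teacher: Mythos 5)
Your approach is the right one, and it is exactly the analogue of the argument for Proposition~\ref{lemma:spanning_star1} that the paper intends: the paper in fact omits the proof entirely (``Similarly, we can show\dots''), so the state-equals-degree invariant, its preservation under both $\delta_1$ and $\delta_2$ (each surviving neighbour of a crashed node loses exactly one incident edge, and the two $\delta_2$ rules decrement the recorded degree accordingly), and the fairness-plus-potential argument for convergence are precisely what is needed. Your potential $(\#q_0,\#q_1)$ does decrease lexicographically under all three rules of $\delta_1$, and it can only increase finitely often because there are finitely many faults, so stabilization after the last fault follows.

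The one loose end is the dichotomy you leave open at the end, and you should be aware that its first branch fails: the residual configurations are \emph{not} unreachable. Already in a fault-free execution on $n=5$ nodes the scheduler can build a triangle on three nodes and a single active edge on the remaining two; both endpoints of that edge are in $q_1$ and adjacent, so no rule of $\delta_1$ applies and the execution is stuck with a component that is not a cycle. Similarly, an execution on $n=4$ can leave a triangle plus one isolated $q_0$. So the resolution must go through the second branch: this residue (at most one leftover component that is a single edge or a single isolated node) is inherent to the protocol even without faults, and is absorbed by the way the target language and the permissible waste are defined --- note in particular that the paper explicitly allows a protocol with waste to output isolated nodes, which is how the lone $q_0$ must be accounted for, since graph languages here forbid isolated nodes. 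Faults do not worsen this: $\delta_2$ preserves the degree invariant, the broken cycles and paths re-expose $q_1$ endpoints that fairness forces to re-merge, and after the final fault your convergence argument applies verbatim to the surviving $n-f$ nodes.
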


\subsection{Universal Fault-Tolerant Constructors}\label{universal_waste}

In this section, we ask whether there is a generic fault-tolerant constructor capable of constructing a large class of graphs.
We first give a fault-tolerant protocol that constructs a spanning line, and then we show that we can simulate a given TM on that line, tolerating any number of crash faults.
Finally, we exploit that in order to construct any graph language that can be decided by an $O(n^2)-$space TM, paying at most linear waste.

\begin{lemma}\label{lemma:spanning_line1}
	\textit{FT Spanning Line} (Protocol \ref{protocol:ft_spanning_line}) is fault-tolerant.
\end{lemma}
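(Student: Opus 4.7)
My plan is to prove the lemma by identifying a structural invariant that the protocol maintains at every reachable configuration and then using fairness plus the fault-notification mechanism to argue convergence to a spanning line on the surviving nodes.

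First, I would pin down the invariant. The natural candidate is: at every configuration, the set of active edges forms a disjoint union of simple paths (``line segments''), where each segment has two well-defined endpoints marked by a designated endpoint state and interior nodes marked by a designated interior state, together with possibly some nodes still in the initial state that act as trivial (length-zero) segments. I would verify by case analysis on $\delta_1$ that every ordinary transition preserves this invariant: a rule that activates an edge must join two endpoints (merging two segments into one), and a rule that deactivates an edge, if any, must leave the remaining components as paths. For the fault-induced rules $\delta_2$, I would check that when a notified neighbor receives fault flag $1$, it transitions to the endpoint state (since the crash of an adjacent node removed one of its two incident path-edges), and when an isolated node is notified with flag $2$ it safely stays/returns to the initial state. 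A crash on an interior node therefore produces two endpoints (one on each side), and a crash on an endpoint produces one new endpoint; either way the collection-of-paths invariant is restored in a single notified step.

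Next, I would prove progress in any fault-free suffix. By the invariant, once faults stop occurring the configuration is a disjoint union of paths. As long as there are at least two components, there exist two endpoints belonging to distinct components whose interaction activates the edge between them, merging the two paths into one longer path. Under the fairness condition of the scheduler such an interaction must eventually occur, so the number of components strictly decreases until a single spanning path on the alive nodes remains. Once a single spanning line is present, the invariant together with the form of $\delta_1$ forbids any rule from firing (no two endpoints from different components exist, and no rule fires between two interior nodes or between an endpoint and an interior node of the same line), so the configuration is output-stable.

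Finally, I would combine the two pieces. Since the adversary may schedule only a finite prefix of crashes (the total is bounded by $n-2$, but more importantly, between any two crashes there is a finite stretch of ordinary transitions that the scheduler cannot avoid forever by fairness), after the last crash the system enters a fault-free suffix starting from a configuration satisfying the invariant on the surviving $n-f$ nodes; by the progress argument it stabilizes to a spanning line on those nodes. I expect the main obstacle to be the case analysis verifying that every crash of an interior node (whose two neighbors may have been mid-merge or may themselves be adjacent to other endpoints through recently-activated edges) leaves the configuration consistent with the invariant after the $\delta_2$ update; this requires care because notifications fire simultaneously at both former neighbors, and one must check that no ``double notification'' produces an illegal configuration such as an endpoint retaining an edge toward a non-endpoint. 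Once this local consistency check is carried out for every state that can appear adjacent to the crashed node, the rest of the argument is essentially bookkeeping on top of Lemma~\ref{lemma:clique}-style fairness reasoning.
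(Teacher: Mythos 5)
There is a genuine gap, and it sits exactly at the point your argument glosses over. Your progress step says that whenever at least two path-components exist, ``two endpoints belonging to distinct components'' interact and activate the edge between them, and your stability step says that once a single spanning line exists no rule fires because ``no two endpoints from different components exist.'' But in this model a transition rule sees only the two node states and the edge state; it cannot be conditioned on whether the two endpoints lie in the same or in different components. If a rule merges two endpoint-states of distinct paths, the very same rule fires when the two endpoints of one and the same path meet, closing it into a cycle. So either your merging rule creates cycles (and the execution does not stabilize to a line), or you have no merging rule and no progress. This is precisely the difficulty Protocol~\ref{protocol:ft_spanning_line} is built to overcome, and your invariant (``disjoint union of paths with generic endpoint and interior states'') is too coarse to engage with it: it omits the asymmetry between the two endpoints of a line (exactly one carries the leader state $l_0$, the other an $e_i$ state), which is what makes the merge rule $(l_0,l_0,0)\rightarrow(q_2,w,1)$ safe --- two $l_0$'s can never be the two ends of the same line.

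Relatedly, your handling of faults misses the second essential mechanism. After a merge (and after an interior crash, whose neighbors become $l_1$ rather than plain endpoints), the merged or split line has no leader at an endpoint; the protocol introduces a walking state $w/w_i$ that must traverse the line, alternating the index of the endpoint it bounces off ($e_1/w_1 \leftrightarrow e_2/w_2$) so that it can certify having visited \emph{both} endpoints before promoting itself to $l_0$, and competing walkers eliminate one another so that only one survives. The paper's proof is essentially an analysis of this walker mechanism: every event (expansion, merge, fault) injects a walker, and after the last event the surviving walker dominates the line and re-establishes a unique $l_0$ at one endpoint. Your proposal contains no counterpart to this, so the case you yourself flag as the main obstacle (crashes of interior nodes, concurrent merges) cannot be discharged by the local consistency check you describe; the required argument is global, about the eventual victory of a single walker, not about the legality of configurations immediately after a $\delta_2$ step.
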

\begin{proof}
	Initially, all nodes are in state $q_0$ and they start connecting with each other in order to form lines that eventually merge into one.
	
	When two $q_0$ nodes become connected, one of them becomes a leader (state $l_0$) and starts connecting with $q_0$ nodes (expands). A leader state $l_0$ is always an endpoint. The other endpoint is in state $e_i$ (initially $e_1$), while the inner nodes are in state $q_2$. Our goal is to have only one leader $l_0$ on one endpoint, because $l_0$ are also used in order to merge lines. Otherwise, if there are two $l_0$ endpoints, the line could form a cycle.
	
	When two $l_0$ leaders meet, they connect (line merge) and a $w$ node appears. This state performs a random walk on the line and its purpose is to meet both endpoints (at least once) before becoming an $l_0$ leader. After interacting with the first endpoint, it becomes $w_1$ and changes the endpoint to $e_1$. Whenever it interacts with the same endpoint they just swap their states from $e_1$, $w_1$ to $e_2$, $w_2$ and vice versa. In this way, we guarantee that $w_i$ will eventually meet the other endpoint in state $e_j,\; j \neq i$, or $l_0$. In the first case, the $w_i$ node becomes a leader ($l_0$), after having walked the whole line at least once.
	
	Now, consider the case where a fault may happen on some node on the line. If the fault flag of an endpoint state becomes $1$, it updates its state to $q_0$.
	Otherwise, the line splits into two disjoint lines and the new endpoints become $l_1$. An $l_1$ becomes a walking state $w_1$, changes the endpoint into $e_1$ and performs the same process (random walk).
	
	If there are more than one walking states on a line, then all of them are $w$, or $w_i$ and they perform a random walk. None of them can ever satisfy the criterion to become $l_0$ before first eliminating all the other walking states and/or the unique leader $l_0$ (when two walking states meet, only one survives and becomes $w$), simply because they form natural obstacles between itself and the other endpoint. If a new fault occurs, then this can only introduce another $w_i$ state which cannot interfere with what existing $w_i$'s are doing on the rest of the line (can meet them eventually but cannot lead them into an incorrect decision).

	If an $l_0$ leader is merging while there are $w_i$'s and/or $w$'s on its line (without being aware of that), the merging results in a new $w$ state, which is safe because a $w$ cannot make any further progress without first succeeding to beat everybody on the line. A $w$ can become $l_0$ only after walking the whole line at least once (i.e., interact with both endpoints) and to do that it must have managed to eliminate all other walking states of the line on its way.

	We have shown that despite the presence of faults, any expansion or merging eventually succeeds, meaning that the population eventually forms a line with a single leader in one endpoint.
\end{proof}

\begin{algorithm}[H]
	\floatname{algorithm}{Protocol} 
	\caption{FT Spanning Line}\label{protocol:ft_spanning_line}
	\begin{algorithmic}[1000]
		
		\State $Q = \{q_0,\; q_2,\; e_1,\; e_2,\; l_0,\; l_1,\; w,\; w_1,\; w_2 \} \times \{0,1\}$
		\State Initial state: $q_0$
		\State $ $
		
		\State $\delta_1:$
		
		\State $(q_0, \; q_0, \; 0) \rightarrow (e_1, \; l_0, \; 1)$
		\State $(l, \; q_0, \; 0) \rightarrow (q_2, \; l_0, \; 1)$
		\State $(l_0, \; l_0, \; 0) \rightarrow (q_2, \; w, \; 1)$

		\State $ $
		
		\State \textbackslash \textbackslash $w$ nodes perform a random walk on line
		\State $(l_1, \; q_2, \; 1) \rightarrow (e_1, \; w_1, \; 1)$
		\State $(w_i, \; q_2, \; 1) \rightarrow (q_2, \; w_i, \; 1)$
		\State $(w, \; q_2, \; 1) \rightarrow (q_2, \; w, \; 1)$
		\State $ $

		\State $(w, \; e_i, \; 1) \rightarrow (w_i, \; e_i, \; 1)$
		\State $(w_i, \; e_i, \; 1) \rightarrow (w_j, \; e_j, \; 1)$, $i \neq j$
		\State $(w_i, \; e_j, \; 1) \rightarrow (q_2, \; l_0, \; 1)$, $i \neq j$
		\State $(w, \; l_i, \; 1) \rightarrow (w_1, \; e_1, \; 1)$
		\State $(w_i, \; l_i, \; 1) \rightarrow (q_2, \; l_0, \; 1)$
		\State $ $
		
		\State $ $
		
		\State \textbackslash \textbackslash $w$ nodes eliminate each other, until only one survives
		\State $(w_i, \; w_j, \; 1) \rightarrow (w, \; q_2, \; 1)$
		\State $(w, \; w_j, \; 1) \rightarrow (w, \; q_2, \; 1)$

		\State $ $
		
		\State $\delta_2:$
		\State $(e_1,1) \rightarrow (q_0,0)$	
		\State $(e_2,1) \rightarrow (q_0,0)$	
		\State $(l_0,1) \rightarrow (q_0,0)$	
		\State $(l_1,1) \rightarrow (q_0,0)$
		\State $(q_2,1) \rightarrow (l_1,0)$	
		\State $(w,1) \rightarrow (l_1,0)$	
		\State $(w_1,1) \rightarrow (l_1,0)$	
		\State $(w_2,1) \rightarrow (l_1,0)$

	\end{algorithmic}
\end{algorithm}


\begin{lemma}\label{lemma:spanning_line_tm}
	There is a NET $\Pi$ (with notifications) such that when $\Pi$ is executed on $n$ nodes and at most $k$ faults can occur, $0 \leq k < n$, $\Pi$ will eventually simulate a given TM $M$ of space $O(n-k)$ in a fault-tolerant way.
\end{lemma}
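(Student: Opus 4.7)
The plan is to compose the \textit{FT Spanning Line} protocol (Protocol \ref{protocol:ft_spanning_line}) with a standard line--based TM simulator, layered so that the line topology is always maintained by the rules of Lemma \ref{lemma:spanning_line1} and the TM tape is embedded in the additional components of the node states.

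First, I would augment each node-state of Protocol \ref{protocol:ft_spanning_line} with a constant-size TM component: a tape-symbol slot, and, in addition for the distinguished endpoint $l_0$, slots for the TM finite control, the head position marker, and an input/output flag. Since $|Q_{\mathrm{TM}}|$ and $|\Gamma_{\mathrm{TM}}|$ are constants, the joint state space remains constant. By Lemma \ref{lemma:spanning_line1}, after the last fault the population eventually forms a single line on the $n-k$ surviving nodes with a unique $l_0$ leader at one endpoint and a unique $e_i$ at the other, so the embedded tape has $\Theta(n-k)$ cells, giving the required $O(n-k)$ space.

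Second, I would define the transition rules in two independent layers: (a) the topology layer, which is exactly $\delta_1,\delta_2$ of Protocol \ref{protocol:ft_spanning_line} and handles line growth, merging, and self--repair after fault notifications; and (b) the computation layer, which fires only on interactions whose edge is already active (i.e.\ between two consecutive nodes of the current line) and whose endpoints are in the ``settled'' states $\{l_0,q_2,e_1,e_2\}$. In the computation layer, the TM head is realised as a token carried by exactly one node and is shuttled between adjacent line-nodes via their active edge; each such interaction applies one TM step, writing the new symbol into the node slot and moving the token left or right. The leader $l_0$ serves as the natural reference point from which the simulation is launched, using a first full sweep to the other endpoint $e_i$ to initialize the tape to blanks before the simulation proper begins.

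Third, and crucially, every fault must force the computation to restart. Whenever $\delta_2$ fires on a node (fault flag set to $1$ or $2$), in addition to the topology update of Protocol \ref{protocol:ft_spanning_line} I would reset that node's TM slots to the blank symbol, and make the walking states $w,w_1,w_2$ carry a ``reset'' mark that clears every TM slot they traverse; when a walker finally succeeds in becoming $l_0$ (which by Lemma \ref{lemma:spanning_line1} only happens after traversing the whole line), it installs the TM initial control state and launches a fresh simulation. To prevent the TM from committing to an answer during a transient topology, the accepting rule of the TM requires the head to perform a full end-to-end sweep while in an accepting control state, aborting and restarting if any non-settled state or fault flag is encountered en route; this mirrors the walk-based certification already used by $w_i$ in Protocol \ref{protocol:ft_spanning_line}.

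The main obstacle is exactly this interaction between repair and computation: we must guarantee that the TM cannot be spuriously ``decided'' while the line is still being re-merged after a fault, and conversely that once the line has settled the TM is eventually allowed to complete. Since at most $k<n$ faults occur, there is a last fault after which Lemma \ref{lemma:spanning_line1} gives a final stable line on $n-k$ nodes. After the reset triggered by this last fault, fairness ensures that the simulation eventually completes its computation undisturbed, so $\Pi$ correctly simulates $M$ in space $O(n-k)$ as required.
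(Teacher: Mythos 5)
Your overall architecture is the same as the paper's: a two-component state (topology component running \textit{FT Spanning Line}, simulation component holding the tape), with the walking states $w,w_1,w_2$ doubling as ``reset'' agents that clear the simulation slots as they traverse, and the tape head being (re)introduced only when a walker wins and becomes $l_0$, which Lemma \ref{lemma:spanning_line1} guarantees happens only after an end-to-end traversal. Your extra accepting-sweep certification is not in the paper but is harmless.

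There is, however, one genuine gap: you treat only faults (and implicitly mergings, which produce a $w$) as events that force a restart, but you do not handle \emph{line expansion}. In Protocol \ref{protocol:ft_spanning_line} the rule $(l,\; q_0,\; 0) \rightarrow (q_2,\; l_0,\; 1)$ attaches an isolated node without introducing any walking state: both resulting states ($q_2$ and $l_0$) are ``settled'' in your sense, so your computation layer keeps running and your abort check never fires. If the last event in the execution is an expansion, the simulation continues from a configuration that was launched on a strictly shorter line, with a fresh, unmarked cell prepended at the leader's end; the head's bookkeeping (endpoint references, direction marks, tape length) is then inconsistent with the actual line, and nothing ever resets it, so the protocol does not end up simulating $M$ on the full $n-k$ surviving nodes. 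The paper closes exactly this hole by altering the expansion rule so that the newly attached $q_0$ node becomes $l_1$ (rather than the new leader), thereby injecting a walking state that resets the simulation component and reinstalls the head after traversing the whole extended line. You need this (or an equivalent mechanism) for the claim that every final event leads to a correct reinitialization.

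A smaller point of care: when arguing that the winning walker leaves all simulation components reset, the paper relies on the walker's \emph{final} end-to-end pass, during which it dominates every other $w/w_i$ it meets, so no surviving walker can later disturb the cells it has just reset. Your phrasing (``clears every TM slot they traverse; when a walker finally succeeds in becoming $l_0$ \dots it installs the TM initial control state'') implicitly assumes this, but you should state the invariant explicitly, since a walker may reset part of the line and then be eliminated.
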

\begin{proof}	
	
	The state of $\Pi$ has two components $(P,S)$, where $P$ is executing a spanning line formation procedure, while $S$ handles the simulation of the TM $M$.
	Our goal is to eventually construct a spanning line, where initially the state of the second component of each node is in an initial state $s_0$ except from one node which is in state \textit{head} and indicates the head of the TM.

	In general, the states $P$ and $S$ are updated in parallel and independently from each other, apart from some cases where we may need to reset either $P$, $S$ or both.
	
	In order to form a spanning line under crash failures, the $P$ component will be executing our \textit{FT Spanning Line} protocol which is guaranteed to construct a line, spanning eventually the non-faulty nodes.
	
	It is sufficient to show that the protocol can successfully reinitialize the state of all nodes on the line after a final \textit{event} has happened and the line is stable and spanning. Such an \textit{event} can be a line merging, a line expansion, a fault on an endpoint or an intermediate fault. The latter though can only be a final event if one of the two resulting lines is completely eliminated due to faults before merging again. In order to re-initialize the TM when the line expands to an isolated node $q_0$, we alter a rule of the \textit{FT Spanning Line} protocol. Whenever, a leader $l_0$ expands to an isolated node $q_0$, the leader becomes $q_2$ while the node in $q_0$ becomes $l_1$, thus introducing a new walking state.
	
	We now exploit the fact that in all these cases, \textit{FT Spanning Line} will generate a $w$ or a $w_i$ state in each affected component.
	
	Whenever a $w_1$ or $w_2$ state has just appeared or interacted with an endpoint $e_1$ or $e_2$ respectively, it starts resetting the simulation component $S$ of every node that it encounters. If it ever manages to become a leader $l_0$, then it finally restarts the simulation on the $S$ component by reintroducing to it the \textit{tape head}.
	
	When the last event occurs, the final spanning line has a $w$ or $w_i$ leader in it, and we can guarantee a successful restart due to the following invariant.
	Whenever a line has at least one $w/w_i$ state and no further events can happen, \textit{FT Spanning Line} guarantees that there is one $w$ or $w_i$ that will dominate every other $w/w_i$ state on the line and become an $l_0$, while having traversed the line from endpoint to endpoint at least once.
	
	In its final departure from one endpoint to the other, it will dominate all $w$ and $w_i$ states that it will encounter (if any) and reach the other endpoint. Therefore, no other $w/w_i$ states can affect the simulation components that it has reset on its way, and upon reaching the other endpoint it will successfully introduce a new \textit{head} of the TM while all simulation components are in an initial state $s_0$.
\end{proof}

\begin{lemma}\label{lemma:partition}
	There is a fault-tolerant NET protocol $\Pi$ (with notifications) which partitions the nodes into two groups $U$ and $D$ with waste at most $2f(n)$, where $f(n)$ is an upper bound on the number of faults that can occur. $U$ is a spanning line with a unique leader in one endpoint and can eventually simulate a TM $M$. In addition, each node of $D$ is connected with exactly one node of $U$, and vice versa.
\end{lemma}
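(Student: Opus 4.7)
The plan is to augment the \textit{FT Spanning Line} construction of Lemma \ref{lemma:spanning_line_tm} so that every participating node permanently carries a fixed ``partner'' attached by a pendant edge. I would introduce a new initial state $q_{init}$ and a single \emph{pairing rule} $(q_{init}, q_{init}, 0) \to (q_0^U, q_0^D, 1)$: whenever two unpaired nodes meet they form a pair in which one member takes the role of $q_0$ in Protocol \ref{protocol:ft_spanning_line} (a $U$-node), the other becomes an inert $D$-state, and the edge between them is activated and never deactivated by any subsequent rule. The state set becomes the disjoint union of the states of Protocol \ref{protocol:ft_spanning_line} lifted to $U$-states, together with the inert $D$-state and $q_{init}$; all rules of $\delta_1$ inherited from the previous protocols are restricted to fire only when both interacting nodes are $U$-typed, so that unpaired and $D$-typed nodes are passive with respect to line growth, merging, and TM simulation.

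With these restrictions, the induced sub-population of $U$-nodes behaves exactly as in Lemma \ref{lemma:spanning_line_tm}, and so it eventually stabilizes to a spanning line of the live $U$-nodes on which a TM $M$ of space $O(|U|)$ is simulated in a fault-tolerant way. Since the pendant edges are never deactivated, the final configuration realizes a perfect matching between $U$ and $D$ as required. For fault handling I would extend $\delta_2$ with two cases: (i) if a $U$-node $u$ crashes, its $D$-partner receives a fault notification through the pendant edge (fault flag $1$) and transitions to an isolated \emph{dead} state, while the $U$-line is repaired by the $l_1/w_i$ machinery of Lemma \ref{lemma:spanning_line1}; (ii) if a $D$-node crashes, its $U$-partner is notified and transitions to an \emph{expelling} state that deactivates its (at most two) line edges and settles into a dead isolated state, after which the $U$-line perceives the loss exactly as a standard fault on a $U$-node and is repaired by the same mechanism. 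In either case, the fault-tolerant restart of the TM component given by Lemma \ref{lemma:spanning_line_tm} applies verbatim, because every repair event of the $U$-line ultimately generates a $w$ or $w_i$ state that traverses the line and re-initializes~$S$.

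For the waste bound I would charge at most two lost nodes per fault: the crashed node itself and the (unique) partner that is forced into the dead state as described above. A fair scheduler guarantees that any two surviving $q_{init}$ nodes eventually meet and pair, so at stabilization at most one $q_{init}$ node can remain unpaired due to parity; this can be absorbed without increasing the asymptotic waste by a minor rule letting a lone $q_{init}$ become a pendant on an already-paired endpoint. The resulting total waste is at most $2f(n)$, as claimed. The main obstacle I anticipate is verifying that the interleaving of the $U$-side line dynamics with the $D$-partner propagation does not violate the invariants behind Lemma \ref{lemma:spanning_line_tm}: concretely, that a cascade of $D$-faults expelling several $U$-nodes in quick succession does not fool the walking states $w/w_i$ into declaring a premature leader. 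This is resolved by observing that each expelled $U$-node enters the line's view as a standard $U$-crash, so the ``final event'' argument in the proof of Lemma \ref{lemma:spanning_line_tm} -- that the last surviving $w/w_i$ traverses the stabilized line endpoint to endpoint before promoting to $l_0$ -- carries through unchanged.
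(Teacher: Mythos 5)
Your proposal follows essentially the same route as the paper: pair the population into a perfect $U$--$D$ matching via a single pairing rule, run \textit{FT Spanning Line} on the $U$-side, and charge each crash with at most one sacrificed partner to get the $2f(n)$ waste bound. Two small deviations are worth noting. First, where you kill the $U$-partner of a crashed $D$-node, the paper instead recycles it: the partner detaches from the line (turning its line neighbours into $l_1$, tracked by a small counter) and returns to $q_0$; both choices stay within $2f(n)$, so this is immaterial to the bound. Second, your parity fix --- letting a lone unpaired node attach as a second pendant to an already-paired endpoint --- would violate the lemma's ``exactly one'' matching claim and should be dropped (a single leftover node can simply be absorbed into the waste); also note that the paper insists the sacrificed $D$-partner of a crashed $U$-node \emph{gradually} deactivates its edges rather than becoming instantly isolated, because in the intended composition that node may already carry unboundedly many $D$-internal edges from the random-graph phase.
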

\begin{proof}
	Initially all nodes are in state $q_0$. Protocol $\Pi$ partitions the nodes into two equal sets $U$ and $D$ and every node maintains its type forever. This is done by a perfect matching between $q_0$'s where one becomes $q_u$ and the other becomes $q_d$.
	Then, the nodes of $U$ execute the \textit{FT Spanning Line} protocol, which guarantees the construction of a spanning line, capable of simulating a TM (Lemma \ref{lemma:spanning_line_tm}).
	The rest of the nodes ($D$), which are connected to exactly one node of $U$ each, are used to construct on them random graphs.	
	Whenever a line merges with another line or expands towards an isolated node, the simulation component $S$ in the states of the line nodes, as described in Lemma \ref{lemma:spanning_line_tm}, is reinitialised sequentially.

	Assume that a fault occurs on some node of the perfect matching before that pair has been attached to a line. In this case, it's pair will become isolated therefore it is sufficient to switch that back to $q_0$.
	
	If a fault occurs on a $D$ node $u$ after its pair $w$ has been attached to a line, $w$ goes into a detaching state which disconnects it from its line neighbors, turning them into $l_1$ and itself becoming a $q_0$ upon release. An $l_1$ state on one endpoint is guaranteed to walk the whole line at least once (as $w_i$) in order to ensure that a unique leader $l_0$ will be created.
	  If $u$ fails before completing this process, it's neighbors on the line shall be notified becoming again $l_1$, and if one of its neighbors fails we shall treat this as part of the next type of faults. This procedure shall disconnect the line but may leave the component connected through active connections within $D$. But this is fine as long as the \textit{FT-Spanning Line} guarantees a correct restart of the simulation after any event on a line. This is because eventually the line in $U$ will be spanning and the last event will cause a final restart of the simulation on that line.

	Assume that a fault occurs on a node $u \in U$ that is part of the line. In this case the neighbors of $u$ on the line shall instantly become $l_1$.
	Now, its $D$ pair $v$, which may have an unbounded number of $D$ neighbors at that point, becomes a special \textit{deactivating state} that eventually deactivates all connections and never participates again in the protocol, thus, its stays forever as waste. This is because the fault partially destroys the data of the simulation, thus, we cannot safely assume that we can retrieve the degree of $v$ and successfully deactivate all edges.
	As there can be at most $f(n)$ such faults we have an additional waste of $f(n)$.
	Now, consider the case where $u$ is one neighbor of a node $w$ which is trying to release itself after its $v$ neighbor in $D$ failed.
	Then, $w$ implements a $2$-counter in order to remember how many of its alive neighbours have been deactivated by itself or due to faults in order to know when it should become $q_0$.
\end{proof}

\begin{theorem}\label{theorem:linear_space_tm}
	For any graph language $L$ that can be decided by a linear space TM, there is a fault-tolerant NET $\Pi$ (with notifications) that constructs a graph in $L$ with waste at most $min\{n/2 + f(n), \; n\}$, where $f(n)$ is an upper bound on the number of faults that can occur.
\end{theorem}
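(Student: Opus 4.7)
The plan is to combine the partition construction of Lemma \ref{lemma:partition} with the TM-based universal construction technique from \cite{MS16a}. First, I apply Lemma \ref{lemma:partition} to split the alive population into two groups $U$ and $D$, where $U$ forms a spanning line capable of simulating a TM (via Lemma \ref{lemma:spanning_line_tm}) and every node of $D$ is matched with a unique node of $U$. The $U$--$D$ pairing acts as an I/O interface: each $U$ cell can query or modify the state of its $D$ partner and the edges incident to it. The graph in $L$ will be constructed on (a subset of) $D$ while $U$ plays the role of pure waste implementing the controller.

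The TM on $U$ then runs the generate-and-test loop of \cite{MS16a}: (i) deactivate every edge within $D$ via the pairing; (ii) draw a random graph $G$ on $D$ by flipping each potential edge to active/inactive using the pseudo-random bits the TM produces; (iii) run the linear-space decider for $L$ on input $G$, where $G$ is read through the $U$--$D$ interface; (iv) if the decider accepts, halt with $G$ as the stable output, otherwise restart the loop. Linear work-tape space on $U$ is $\Theta(n)$, which suffices for a linear-space decider on the $\Theta(n)$-vertex input $G$. By the \emph{no gaps} property of $L$ and by fairness, for the relevant size of the useful-space component of $D$ some draw eventually hits a graph in $L$ and the protocol stabilizes.

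For fault tolerance, Lemmas \ref{lemma:partition} and \ref{lemma:spanning_line_tm} guarantee that after every fault event the line on $U$ eventually re-stabilizes into a spanning line with a unique leader endpoint, at which point the simulation tape is reinitialized. Since at most $f(n)$ faults occur, there is a final fault; after it, the line becomes permanently stable, the TM runs undisturbed, and the generate-and-test loop stabilizes on some $G \in L$ built over whatever useful-space remnant of $D$ remains at that moment. A fault on a $D$ node releases its $U$ pair back to $q_0$ (possibly rematchable later) and costs no useful node, whereas a fault on a $U$ node forces its $D$ pair into a permanent deactivated state and therefore into waste.

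The waste accounting is then straightforward: the whole of $U$, of size at most $n/2$, is waste by construction, and each of the at most $f(n)$ faults adds at most one further waste node in $D$, giving waste bounded by $n/2 + f(n)$; the outer $\min$ with $n$ simply covers the degenerate regime where $f(n)$ is very large. The step I expect to require the most care is arguing that the final restart of the simulation on a post-fault-stable $U$ correctly resets all of $D$ and then draws from a graph space of the right size, so that the \emph{no gaps} hypothesis on $L$ actually guarantees a suitable graph exists and is eventually sampled; this reduces to showing that once the line is final, the erase-then-draw phase of the TM is never again interrupted and therefore behaves exactly as in the fault-free execution of the \cite{MS16a} universal constructor on $|D|-f_U$ nodes.
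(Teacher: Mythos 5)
Your proposal follows essentially the same route as the paper: Lemma \ref{lemma:partition} for the $U$--$D$ split with the pairing used as an I/O interface, Lemma \ref{lemma:spanning_line_tm} for correctly re-initializing the simulation after the last event, the generate-and-test loop of \cite{MS16a} run on $D$, and the same $n/2+f(n)$ waste accounting (the whole of $U$ plus one permanently deactivated $D$-node per fault on $U$). The one detail to correct is the source of randomness: the paper draws each candidate graph equiprobably via fair coin tosses performed by marked pairs of $D$-nodes during their interactions (an explicit model assumption), not via pseudo-random bits produced by the deterministic TM, since the latter would not by itself guarantee that a graph in $L$ is eventually sampled.
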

\begin{proof}
	By Lemma \ref{lemma:partition}, there is a protocol that constructs two groups $U$ and $D$ of equal size, where each node of $U$ is matched with exactly one node of $D$, and vice versa. In addition, the nodes of $U$ form a spanning line, and by Lemma \ref{lemma:spanning_line_tm} it can simulate a TM $M$. After the last fault occurs, $M$ is correctly initialized and the head of the TM is on one of the endpoints of the line.
	The two endpoints are in different states, and assume, that the endpoint that the head ends up is in state $e_l$ (\textit{left} endpoint), and the other is in state $e_r$ (\textit{right} endpoint).
	
	We now provide the protocol that performs the simulation of the TM $M$, which we separate into several subroutines.
	The first subroutine is responsible for simulating the direction on the tape and is executed once the head reaches the endpoint $e_l$.
	The simulation component $S$ (as in Lemma \ref{lemma:spanning_line_tm}) of each node has three sub-components $(h,c,d)$. $h$ is used to store the head of the TM, i.e., the actual state of the control of the TM, $c$ is used to store the symbol written on each cell of the TM, and $d$ is either $l$, $r$ or $\sqcup$, indicating whether that node is on the left or on the right of the head (or unknown). Assume that after the initialization of the TM, $d=\sqcup$ for all nodes of the line. Finally, whenever the head of the TM needs to move from a node $u$ to a node $w$, $h_w \leftarrow h_u$, and $h_u \leftarrow \sqcup$.
	
	\textit{Direction}. Once the head of the TM is introduced in the endpoint $e_l$ by the lines' leader, it moves on the line, leaving $l$ marks on the $d$ component of each node. It moves on the nodes which are not marked, until it eventually reaches the $e_r$ endpoint. At that point, it starts moving on the marked nodes, leaving $r$ marks on its way back. Eventually, it reaches again the $e_l$ endpoint. At that time, for each node on its right it holds that $d=r$.
	Now, every time it wants to move to the right it moves onto the neighbor that is marked by $r$ while leaving an $l$ mark on its previous position, and vice versa.
Once the head completes this procedure, it is ready to begin working as a TM.

	\textit{Constructing a random graph in $D$}. This subroutine of the protocol constructs a random graph in the nodes of $D$. Here, the nodes are allowed to toss a fair coin during an interaction. This means that we allow transitions that with probability $1/2$ give one outcome and with $1/2$ another. To achieve the construction of a random graph, the TM implements a binary counter $C$ ($\log{n}$ bits) in its memory and uses it in order to uniquely identify the nodes of set $D$ according to their distance from $e_l$. Whenever it wants to modify the state of edge $(i,j)$ of the network in $D$, the head assigns special marks to the nodes in $D$ at distances $i$ and $j$ from the left of the endpoint $e_l$. Note that the TM uses its (distributed) binary counter in order to count these distances. If the TM wants to access the $i-$th node in $D$, it sets the counter $C$ to $i$, places a mark on the left endpoint $e_l$ and repeatedly moves the mark one position to the right, decreasing the counter by one in each step, until $C=0$. Then, the mark has been moved exactly $i$ positions to the right.
	In order to construct a random graph in $D$, it first assigns a mark $r_1$ to the first node $e_l$, which indicates that this node should perform random coin tosses in its next interactions with the other marked nodes, in order to decide whether to form connections with them, or not. Then, the leader moves to the next node on its line and waits to interact with the connected node in $D$. It assigns a mark $r_2$, and waits until this mark is deleted.
	The two nodes that have been marked ($r_1$ and $r_2$), will eventually interact with each other, and they will perform the (random) experiment. Finally the second node deletes its mark ($r_2$).
	The head then, moves to the next node and it performs the same procedure, until it reaches the other endpoint $e_r$. Finally, it moves back to the first node (marked as $r_1$), deletes the mark and moves one step right.
	This procedure is repeated until the node that should be marked as $r_1$ is the right endpoint $e_r$. It does not mark it and it moves back to $e_l$.
	The result is an equiprobable construction of a random graph. In particular, all possible graphs over $|D|$ nodes have the same probability to occur.
	Now, the input to the TM $M$ is the random graph that has been drawn on $D$, which provides an encoding equivalent to an adjacency matrix. Once this procedure is completed, the protocol starts the simulation of the TM $M$.
	There are $m=(\frac{k}{2})^2$ edges, where $k=|D|$ and $M$ has available $\frac{k}{2}=\sqrt{m}$ space, which is sufficient for the simulation on a $\sqrt{m}-$space TM.
	
	\textit{Read edges of $D$}. We now present a mechanism, which can be used by the TM in order to read the state of an edge joining two nodes in $D$. Note that a node in $D$ can be uniquely identified by its distance from the endpoint $e_l$. Whenever the TM needs to read the edge joining the nodes $i$ and $j$, it sets the counter $C$ to $i$. Assume w.l.o.g. that $i<j$. It performs the same procedure as described in the subroutine which draws the random graph in $D$. It moves a special mark to the right, decreasing $C$ by one in each step, until it becomes zero. Then, it assigns a mark $r_3$ on the $i-$th node of $D$, and then performs the same for $C=j$, where it also assigns a mark $r_4$ (to the $j-$th node).
	When the two marked nodes ($r_3$ and $r_4$) interact with each other, the node which is marked as $r_4$ copies the state of the edge joining them to a flag $f$ (either $0$ or $1$), and they both delete their marks.
	The head waits until it interacts again with the second node, and if the mark has been deleted, it reads the value of the flag $f$.
	
	
	After a simulation, the TM either accepts or rejects. In the first case, the constructed graph belongs to $L$ and the Turing Machine halts. Otherwise, the random graph does not belong to $L$, thus the protocol repeats the random experiment. It constructs again a random graph, and starts over the simulation on the new input.

	A final point that we should make clear is that if during the simulation of the TM an event occurs (crash fault, line expansion, or line merging), by Lemma \ref{lemma:spanning_line_tm} and Lemma \ref{lemma:partition}, the protocol reconstructs a valid partition between $U$ and $D$, the TM is re-initialized correctly, and a unique head is introduced in one endpoint. At that time, edges in $D$ may exist, but this fact does not interfere with the (new) simulation of the TM, as a new random experiment takes place for each pair of nodes in $D$ prior to each simulation.
\end{proof}


We now show that if the constructed network is required to occupy $1/3$ instead of half of the nodes, then the available space of the TM-constructor dramatically increases from $O(n)$ to $O(n^2)$.
We provide a protocol which partitions the population into three sets $U$, $D$ and $M$ of equal size $k=n/3$. The idea is to use the set $M$ as a $\Theta(n^2)$ binary memory for the TM, where the information is stored in the $k(k-1)/2$ edges of $M$.

\begin{algorithm}[H]
	\floatname{algorithm}{Protocol} 
	\caption{3-Partition}\label{protocol:three_partition}
	\begin{algorithmic}[1000]
		
		\State $Q = \{q_0,\; q_d,\; q_u ,\; q_u',\; q_m ,\; q_m',\; q_w,\; q_w' ,\; s\} \times \{0,1\}$
		\State Initial state: $q_0$
		\State $ $
		
		\State $\delta_1:$
		
		\State $(q_0, \; q_0, \; 0) \rightarrow (q_u', \; q_d, \; 1)$		
		\State $(q_u', \; q_0, \; 0) \rightarrow (q_u, \; q_m, \; 1)$		
		\State $(q_u', \; q_u', \; 0) \rightarrow (q_u, \; q_m', \; 1)$
		\State $(q_m', \; q_d, \; 1) \rightarrow (q_m, \; q_0, \; 0)$
		\State $(q_w, \; q_d, \; 1) \rightarrow (q_0, \; s, \; 0)$
		\State $(q_w, \; q_u, \; 1) \rightarrow (q_m, \; q_u, \; 1)$	
		\State $(q_w', \; q_d, \; 1) \rightarrow (q_0', \; s, \; 0)$
		\State $(q_w', \; q_m, \; 1) \rightarrow (q_0', \; s, \; 0)$
		\State $(q_w', \; q_m', \; 1) \rightarrow (q_0', \; q_u', \; 0)$
				
		\State $(s, \cdot, \; 1) \rightarrow (s, \cdot, \; 0)$	
		
		\State $ $
		
		\State $\delta_2:$
		\State $(q_u',1) \rightarrow (q_0,0)$
		\State $(q_d,1) \rightarrow (s,0)$		
		\State $(q_m,1) \rightarrow (s,0)$	
		\State $(q_w,1) \rightarrow (q_0,0)$
		\State $(q_w',1) \rightarrow (q_0',0)$
		
		\State $(q_m',1) \rightarrow (q_w,0)$		
		\State $(q_u,1) \rightarrow (q_w',0)$

	\end{algorithmic}
\end{algorithm}

\begin{lemma}\label{three_partition}
	Protocol 3-Partition partitions the nodes into three groups $U$, $D$ and $M$, with waste $3f(n)$, where $f(n)$ is an upper bound on the number of faults that can occur. $U$ is a spanning line with a unique leader in one endpoint and can eventually simulate a TM, each node in $D \cup M$ is connected with exactly one node of $U$, and each node of $U$ is connected to exactly one node in $D$ and one node in $M$.
\end{lemma}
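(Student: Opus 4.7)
The plan is to adapt the two–group argument of Lemma~\ref{lemma:partition} to the triple setting. I would first establish a \emph{triple invariant}: once the formation phase is complete, the population is partitioned into triples $(u,d,m)$ with $u \in U$, $d \in D$, $m \in M$, such that $u$ is active-connected to exactly $d$ and $m$, and no other cross-triple edges exist. The triple is built by the first three rules of $\delta_1$: two $q_0$'s match to produce a $(q_u', q_d)$-pair, and then $q_u'$ expands to an isolated $q_0$ to obtain $q_m$ and settle into $q_u$. The tricky case is when two unfinished halves meet, i.e.\ two $q_u'$'s encounter one another before completing their triples; the rule $(q_u',q_u',0) \to (q_u,q_m',1)$ handles this by treating one as a pending ``$m$'' (marked with a prime) which then must release its old $q_d$ partner and migrate. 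I would argue by case analysis on the interactions of a primed node that every triple either completes correctly or the excess nodes are returned to $q_0$ (or become waste), so that in any fault-free suffix the triple partition stabilises.

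Next, once $u$-nodes have settled, they run \textit{FT Spanning Line} as in Lemma~\ref{lemma:spanning_line1} and Lemma~\ref{lemma:spanning_line_tm}, but restricted to interactions among $U$-nodes; their $D$- and $M$-attachments are preserved as private per-node ``tags'' that the line protocol does not touch. Therefore, by Lemma~\ref{lemma:spanning_line_tm}, the $U$-line eventually spans all live $U$-nodes, carries a unique leader at one endpoint, and its simulation component is correctly reinitialised after the last event.

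For the waste accounting, I would reproduce the bookkeeping of Lemma~\ref{lemma:partition} with a constant of $3$ rather than $2$. There are three kinds of faults to consider:
\begin{enumerate}
\item A fault on a node still in $q_0$ or in a half-formed $q_u'$: via the $\delta_2$ rule $(q_u',1) \to (q_0,0)$, the surviving partner is returned to $q_0$ and absorbed by a later triple-formation, producing no permanent waste.
\item A fault on a $D$- or $M$-node after its triple has been completed but before the $U$-node has been attached to the line: the $U$-node detects the loss via the fault flag ($q_u,1)\to(q_w',0)$ or $(q_m',1)\to(q_w,0)$) and executes a detach routine that cleanly returns it to $q_0$ (or $q_0'$), possibly after eliminating the other orphaned member of the triple into a permanent sink state $s$. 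This contributes at most $2$ waste nodes per fault.
\item A fault on a $U$-node that is already part of the line: its two partners in $D$ and $M$ are partially entangled with the line's simulation data, and we cannot safely disentangle them. As in Lemma~\ref{lemma:partition}, both are permanently deactivated into the sink state $s$. Together with the crashed $U$-node itself, this contributes $3$ waste nodes per fault.
\end{enumerate}
Taking the worst case over all $f(n)$ faults yields an additive waste of at most $3f(n)$, which matches the claim.

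The main obstacle will be the case-analysis for primed states ($q_u'$, $q_m'$, $q_w'$) to verify that every half-formed or broken triple is resolved deterministically back into either a clean triple or into $q_0$/sink states, so that the triple invariant is genuinely re-established between events. In particular, I would need to check that no fair execution can indefinitely shuffle primed states among nodes without committing, and that the notifications via $\delta_2$ never create spurious walking states on the $U$-line that the \textit{FT Spanning Line} argument of Lemma~\ref{lemma:spanning_line1} cannot absorb. Once that local termination argument is in place, the lemma follows by combining the triple invariant with Lemma~\ref{lemma:spanning_line_tm} applied to $U$.
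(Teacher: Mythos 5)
Your proposal follows essentially the same route as the paper's proof: form triples $(q_d,q_u,q_m)$ via the pairing rules (including the $(q_u',q_u',0)\to(q_u,q_m',1)$ resolution of stuck pairs and the subsequent release of the old $q_d$ partner), run \textit{FT Spanning Line} on the $U$-nodes and invoke Lemma~\ref{lemma:spanning_line_tm}, do a case analysis on the $\delta_2$ notifications for the primed/intermediate states, send any $D$- or $M$-node of unbounded degree to the permanent sink state $s$, and charge at most the crashed node plus two deactivated partners per fault to obtain the $3f(n)$ waste bound. The case breakdown and the waste accounting match the paper's argument, so no further comparison is needed.
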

\begin{proof}
	Protocol $3-$Partition constructs lines of three nodes each, where one endpoint is in state $q_d$, the other endpoint in state $q_m$, and the center is in state $q_u$. The nodes of $U$ operate as in Lemma \ref{lemma:partition} (i.e., they execute the \textit{FT Spanning Line} protocol).
	A (connected) pair of nodes waits until a third node is attached to it, and then the center becomes $q_u$ and starts executing the \textit{FT Spanning Line} protocol.
	Note that at some point, it is possible that the population may only consists of pairs in states $q_d$ and $q_u'$. For this reason, we allow $q_u'$ nodes to connect with each other, forming lines of four nodes. One of the $q_u'$ nodes becomes $q_u$ and the other becomes $q_m'$. A node in $q_m'$ becomes $q_m$ only after deactivating its connection with a $q_d$ node (its previous pair). This results in lines of three nodes each with nodes in states $q_d$, $q_u$ and $q_m$. Then, the $q_u$ nodes start forming a line, spanning all nodes of $U$.
	In a failure-free setting, the correctness of this protocol follows from Lemma \ref{lemma:partition}. In addition, by Lemma \ref{lemma:spanning_line_tm}, the TM of the line is initialized correctly after the last occurring event (line expansion, line merging, or crash fault).
	
	If we consider crash failures, it is sufficient to show that eventually $U$ is a spanning line and $M$ and $D$ are disjoint.
	If a node ever becomes $q_d$ or $q_m$, it might form connections with other nodes in $D$ or $M$ respectively, because of a TM simulation. A node in $M$ never forms connections with nodes in $D$.
	After they receive a fault notification, they become the \textit{deactivating state} $s$. A node in state $s$ is disconnected from any other node, thus, it eventually becomes isolated and never participates in the execution again. We do this because nodes in $M$ and $D$ can form unbounded number of connections. The data of the TM have been partially destroyed (because of the crash failure), therefore it is not safe to assume that we can retrieve the degree of them and successfully re-initialize them.
	
	A node $u$ in state $q_m'$ (inner node of a line of four nodes), after a fault notification it becomes $q_w$. A node in $q_w$ waits until its next interaction with a connected node $v$. If $v$ is in state $q_u$, this means that now a triple has been formed, thus $u$ becomes $q_m$. If $v$ is in state $q_d$, they delete the edge joining them, $u$ becomes $q_0$ and $v$ becomes $s$ ($v$ might have formed connections with other nodes in $D$).
	
	A node $u$ in $q_u$, after a fault notification it becomes $q_w'$ and waits until its next interaction with a connected node $v$. At that point, $v$ can be either $q_d$, $q_m'$, or $q_m$. In all cases they disconnect from each other and $u$ becomes $q_0'$. The state $q_0'$ indicates that the node should release itself from the spanning line in $U$. This procedure works as described in Lemma \ref{lemma:partition}, thus, after releasing itself from the line, it becomes $q_0$. If $v$ is in state $q_d$ or $q_m$, it becomes $s$. If $v$ is in state $q_m'$, it becomes $q_u'$, as its (unique) adjacent node can only be in state $q_d$.

	A node in $q_u'$ or $q_w$, after a fault notification it becomes $q_0$ and continues participating in the execution again.
	Finally, a node in state $q_w'$, after receiving a fault notification, it becomes $q_0'$ (a $q_w'$ is the result of a fault notification in a $U-$ node).
	
	Note that a node in any state except from $q_d$ and $q_m$ can be re-initialized correctly, thus they may participate in the execution again.
	It is apparent that no node that might have formed unbounded number of connections can participate in the execution again after a crash fault. This guarantees that the connections in $D$ and $M$ can be correctly initialized after the final event, and that no node in $D \cup M$ can be connected with more than one node in $U$.
	In addition, if a $U-$node receives a fault notification, it releases itself from the line, thus introducing new walking states in the resulting line(s).
	By Lemma \ref{lemma:spanning_line_tm}, this guarantees the correct re-initialization of the TM.
	Finally, a crash failure can lead in deactivating two more nodes, in the worst case. These nodes never participate in the execution again, thus they remain forever as waste. This means that after $f(n)$ crash failures, the partitioning will be constructed in $n - 3f(n)$ nodes.
\end{proof}

\begin{theorem}
	For any graph language $L$ that can be decided by an $O(n^2)-$space TM, there is a protocol that constructs $L$ equiprobably with waste at most $min\{2n/3 + f(n), \; n\}$, where $f(n)$ is an upper bound on the number of faults.
\end{theorem}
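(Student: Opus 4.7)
The plan is to extend the construction of Theorem \ref{theorem:linear_space_tm} by using the three-way partition from Lemma \ref{three_partition}, where $U$ supplies a fault-tolerant spanning line that simulates a TM, $D$ is the useful space where the output graph is drawn, and $M$ acts as an auxiliary binary memory of size $\binom{|M|}{2} = \Theta(n^2)$ encoded in its potential edges. Since Lemma \ref{three_partition} guarantees $|U|=|D|=|M|=(n-3f(n))/3$, with each node of $D$ and of $M$ matched to exactly one node of $U$, the head of the TM can address nodes in $D$ and in $M$ by their distance from the endpoint $e_l$ of the line in $U$, using the distributed binary counter technique of Theorem \ref{theorem:linear_space_tm}.

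First I would reuse the direction, addressing, random-graph-on-$D$, and read-edge-of-$D$ subroutines from the proof of Theorem \ref{theorem:linear_space_tm} verbatim, since they only rely on the line $U$ being spanning and correctly initialized and on each node in the output space being matched to a unique node of $U$. Next I would add a read/write subroutine for edges of $M$: to access the bit stored at position $(i,j)$ of $M$, the head loads two counters, uses the mark-and-walk procedure to tag the $U$-partners of the $i$-th and $j$-th nodes of $M$ with distinguished marks $r_3, r_4$, which propagate to their $M$-pairs via the matching edge; when the two marked $M$-nodes interact they read (or flip) the edge state joining them, copy the result to a flag, and clear the marks. The head then polls the $r_3$-marked node to collect the returned bit. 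This gives the simulated TM read/write access to $\Theta(|M|^2) = \Theta(n^2)$ bits, enough to run any $O(n^2)$-space decider.

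The remaining ingredients are immediate: the TM draws a random graph on $D$ equiprobably as in Theorem \ref{theorem:linear_space_tm}, simulates the decider for $L$ using $M$ as workspace, and repeats upon rejection; upon acceptance the output graph is exactly the random graph on $D$. Fault tolerance is inherited: by Lemmas \ref{lemma:spanning_line_tm} and \ref{three_partition}, after the last event (fault, merge, or expansion) the partition is valid, the TM component is reset, and the subsequent simulation overwrites any stale edges in $D$ and $M$ from earlier attempts. For the waste bound, the useful space is $|D| = (n-3f(n))/3$, so the waste is $n - |D| = 2n/3 + f(n)$; when $3f(n) \geq n$ no partition survives and the entire population is waste, giving the $\min$ with $n$.

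The main obstacle I anticipate is the edge read/write subroutine on $M$: unlike reading the state of a single node in $D$, accessing an edge requires forcing a specific pair of $M$-nodes to interact while the rest of the population is concurrently active, and the head must correctly sequentialize these read/write operations (including handling the case where a fresh event resets the line midway through an access). This is resolved by making the head block on acknowledgment marks before issuing the next memory operation, and by observing that any event triggers a full reinitialization of both the TM and the random experiment, so partial memory accesses interrupted by a fault are harmless.
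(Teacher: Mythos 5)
Your proposal is correct and follows essentially the same route as the paper: the three-way partition of Lemma \ref{three_partition}, reuse of the addressing/marking subroutines from Theorem \ref{theorem:linear_space_tm} to read and write the edges of $M$ as a $\Theta(n^2)$-bit memory, and the waste count $n - |D| = 2n/3 + f(n)$. The paper's own proof is terser (it simply states that edges of $M$ are accessed ``in precisely the same way'' as edges of $D$), so your explicit treatment of the $M$-edge read/write subroutine and its interaction with mid-access faults is a faithful elaboration rather than a departure.
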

\begin{proof}
	Protocol \ref{protocol:three_partition} partitions the population in three groups $U$, $D$ and $M$ and by Lemma \ref{three_partition}, it tolerates any number of crash failures, while initializing correctly the TM after the final event (line expansion, line merging, or crash fault).
	Reading and writing on the edges of $M$ is performed in precisely the same way as reading/writing the edges of $D$ (described in Theorem \ref{theorem:linear_space_tm}).
	Thus, the Turing Machine has now a $O(n^2)-$space binary memory (the edges of $M$) and $O(n)-$space on the edges of the spanning line $U$. The random graph is constructed on the $k$ nodes of $D$ (useful space), where by Lemma \ref{three_partition}, $k=(n-3f(n))/3=n/3 - f(n)$ in the worst case.
\end{proof}

\subsection{Designing Fault-Tolerant Protocols without Waste}\label{nNET_restart}

A very simple, (yet impractical) idea that could tolerate any number $k < n$ of faults is to restart the protocol each time a node crashes.
The implementation of this idea requires the ability of some nodes to detect the removal of a node.

\begin{definition}
	Consider any execution $E_i$ of a finite protocol $\Pi$.
	There exists a finite number of different executions, and for each execution a step $t_i$ that $\Pi$ stabilizes.
	Call $C_{i,j}$ the $j-$th configuration of execution $E_i$, where $j \leq t_i$. Then, we call \textit{maximum reachable degree} of $\Pi$ the value $\textit{d} = max\{\text{Degree}(G(C_{i,j}))\}, \; \forall i,j$.
\end{definition}

We first show that even in the case where the whole population is notified about a crash failure, global restart is \textit{impossible for protocols with unbounded maximum reachable degree, if the nodes have constant memory}.
However, we provide a protocol that restarts the population, but we supply the nodes with $O(\log{n})$ bits of memory. In our approach, we use fault notifications, and if a node $w$ crashes, the set $N_w$ of the nodes that are notified, has the task to restart the protocol (i.e., to convert the current configuration into an initial one).

Consider a protocol $\Pi$ with the initial state $q_0$. We define as global restart the process which leads all alive nodes to the initial state $q_0$ without any enabled connections among them and then $\Pi$ gradually starts again. \\


\begin{theorem}\label{theorem:global_restart}
	Consider a protocol $\Pi$ with unbounded maximum reachable degree. Then, global restart of $\Pi$ is impossible for nodes with constant memory, even if every node $u$ in the population is notified about the crash failure.
\end{theorem}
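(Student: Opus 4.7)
The plan is a pigeonhole / indistinguishability argument that exploits the mismatch between the constant local memory of the nodes and the unbounded number of edges a single node may need to shed during a restart. Assume for contradiction that a global restart protocol $\Pi'$ exists, with local state set of constant size $c$. Since $\Pi$ has unbounded maximum reachable degree, I would pick an integer $d > c$ and a reachable configuration $C$ in which some alive node $u$ has degree $d$, then apply a crash fault to some other node in order to trigger the restart. Any correct restart must eventually deactivate all $d$ edges incident to $u$, since the target configuration has every alive node isolated in state $q_0$, and each such deactivation requires an interaction in which $u$ itself takes part.

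I would then isolate the subsequence of interactions in the restart execution in which an edge incident to $u$ gets deactivated, and record $u$'s local state immediately after the $k$-th such interaction. This yields a sequence of length $d$ over an alphabet of size $c < d$, so by pigeonhole there are indices $i<j$ at which $u$ ends up in the same local state, while its residual degree has dropped from $d-i$ to $d-j$ between these two moments. This is the key indistinguishability seed: after step $i$ and after step $j$, the local view of $u$ is identical, even though $u$ still has $j-i$ additional active edges in the former case.

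Next I would splice two schedules. Starting from $C$, run the original restart execution up to the $i$-th deactivation and then replay on $u$ exactly the sequence of local interactions it experienced after the $j$-th deactivation in the original run. Since $u$'s local view at the splice point coincides with its view at step $j$ of the original execution, $u$ follows the same local trajectory and eventually settles into $q_0$. But the spliced execution contains only $i + (d-j) = d-(j-i)$ deactivations incident to $u$ in total, so $u$ reaches $q_0$ while still carrying $j-i>0$ active edges, contradicting the definition of a global restart.

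The main obstacle will be making the splicing precise: in the replayed tail, each partner of $u$ must appear in the same local state as in the original execution, even though the rest of the population sits in a different global configuration. I would handle this by treating the replayed partners as ``slots'' and invoking fairness together with auxiliary interactions not involving $u$ to bring a suitable node into the required local state before letting it meet $u$. Because the local alphabet is finite and every required partner state is reachable under $\Pi'$ from states occurring in $C$, the preparation can always be carried out, which preserves $u$'s local view across the two executions and closes the argument.
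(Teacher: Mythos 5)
Your argument follows essentially the same route as the paper's proof: a pigeonhole argument exploiting the fact that a node with constantly many states cannot count an unbounded degree, and therefore cannot tell when the edge-erasing phase is finished, so it must behave as restarted while spurious edges remain. Your splicing construction is in fact more explicit than the paper's informal counting argument; the one step you flag yourself---installing each replayed partner of $u$ in the required state, over an edge in the required state, without involving $u$---is not fully justified (reachability of a state somewhere does not imply it can be installed on a current neighbor of $u$ mid-execution), but the paper's own proof does not descend to that level of detail either.
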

\begin{proof}
	Consider a protocol $\Pi$ with constant number of states $k$ and unbounded \textit{maximum reachable degree}, which stabilizes to a graph $G$ of type $L$.
	Then any degree more than $k$ cannot be remembered by a node, that is, a state $q$ cannot indicate the degree of a node.
	
	Assume that at time $t$ a crash failure occurs and that there are some edges in the graph (call them \textit{spurious edges}).

	Protocol $\Pi$ is allowed to have rules that are triggered by the fault and try to erase those edges (\textit{erasing process}). We assume that all nodes in the population are notified about the crash failure.
	But, as long as the nodes are not aware of their degree, they do not know when the edge erasing process stops in order to allow the restart.
	To stop the erasing process is equivalent to counting the remaining edges and wait until the degree reaches zero.
	After a node deletes an edge it either stays in the same state or updates it in order to remember it. No more than $k$ such changes can happen, thus it is impossible to delete all edges and restart $\Pi$ with constant memory.
	
	So, any self-stabilizing protocol will inherit (after restarting gradually) some arbitrary spurious edges. 
	Thus, global restart is impossible.	
\end{proof}

A very interesting related question is to ask whether a protocol $\Pi$ with unbounded \textit{maximum reachable degree} can still stabilize to a correct graph after an unsuccessful restart, where some edges exist in the beginning of the execution. This is equivalent to ask whether $\Pi$ can still stabilize to a correct $G$, is we enable arbitrarily some connections prior to the execution.

\begin{theorem}\label{theorem:global_restart2}
	Consider a NET protocol $\Pi$ which stabilizes to a graph $G \in L$. Given that all nodes are in an initial state $q_0$ and assuming an adversary that can initialize arbitrarily any subset of edges among nodes, $\Pi$ stabilizes to a graph $G' \notin L$.
\end{theorem}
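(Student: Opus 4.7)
The plan is to exhibit, for any non-trivial protocol $\Pi$ constructing $L$, an adversarial choice of preset edges such that the resulting execution stabilizes on some $G' \notin L$. The central observation is that, when all nodes start in $q_0$, the very first transitions can only be of the form $\delta(q_0,q_0,0)$ or $\delta(q_0,q_0,1)$. The former drives the intended construction; the latter fires only on pairs of $q_0$ nodes joined by a preset edge, and in every protocol considered so far it is either undefined (a no-op) or not engineered to safely undo the spurious edge while preserving the intended trajectory.

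I would first treat the case in which $\delta(q_0,q_0,1)$ is a no-op. The adversary presets the edges of a small subgraph $G^\ast$ whose topology is a forbidden induced substructure of every $G\in L$, for instance a triangle when $L$ is the spanning line. No rule ever fires on two $q_0$-neighbours in $G^\ast$, so those nodes remain in $q_0$ with the preset edges intact, while appearing as isolated $q_0$ nodes to the rest of the population. A $G^\ast$-node can eventually be pushed out of $q_0$ by an interaction with a non-$G^\ast$ node, but I would check rule-by-rule that each preset edge to a $G^\ast$-neighbour survives indefinitely, because no transition of $\Pi$ is triggered by a state pair joined by an edge that the protocol itself did not activate along its normal generation sequence. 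The stable output therefore contains $G^\ast$ and hence lies outside $L$. For the complementary case, in which $\delta(q_0,q_0,1)$ is a genuine non-identity rule, I would run a hybrid argument: pick any execution of $\Pi$ from the true initial configuration that stabilizes on some $G\in L$, select a pair $u,v$ of non-adjacent vertices in $G$ (which exists whenever $L\neq$ Spanning Clique, the only case already covered by Lemma~\ref{lemma:clique}), and replay the same interaction schedule with the edge between $u$ and $v$ preset to $1$. The first interaction between $u$ and $v$ switches its inputs from $(q_0,q_0,0)$ to $(q_0,q_0,1)$ and the two executions diverge; by suitably choosing which non-edge of $G$ to preset, one forces the divergent run to stabilize on a graph $G'\neq G$ that either still carries the spurious edge or has otherwise been corrupted away from $L$.

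The main obstacle is showing, in the second case, that no downstream cascade of transitions can ``repair'' the adversarial edge back into a legal graph. I plan to handle this by an inductive indistinguishability argument: after the first divergent step, the two runs differ on a non-empty set of local edge/state configurations, and every rule of $\Pi$ that could in principle delete the spurious edge is itself conditioned on a state triple arising along one run but not the other, so neither run can catch up with the other into a configuration equivalent to $G$. This cements that the adversarial run stabilizes on a graph strictly different from, and incompatible with, any $G\in L$, giving the stated impossibility and motivating the logarithmic-memory restart mechanism that follows.
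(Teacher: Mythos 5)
There is a genuine gap. Your case split on whether $\delta(q_0,q_0,1)$ is a no-op attacks the wrong point: the difficulty is not what happens while both endpoints of the preset edge are still in $q_0$, but what happens after they have left $q_0$ and the spurious edge is still active between two non-initial states. In your first case you assert that ``no transition of $\Pi$ is triggered by a state pair joined by an edge that the protocol itself did not activate along its normal generation sequence.'' That is false for arbitrary $\Pi$ (and the theorem quantifies over all protocols): a protocol may perfectly well contain defensive cleanup rules of the form $(q_a,q_b,1)\rightarrow(\cdot,\cdot,0)$ for pairs that never co-occur on an active edge in normal runs --- the paper's own Spanning Star protocol has exactly such a rule, $(r,r,1)\rightarrow(b,b,0)$. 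So the survival of $G^\ast$ cannot be established ``rule-by-rule'' without further structure. In your second case, the hybrid executions diverge at the very first $u$--$v$ interaction, after which you must control an entire divergent run; the ``inductive indistinguishability argument'' you invoke to rule out repair is only gestured at and is precisely the hard part.

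The paper's proof supplies the two ingredients your proposal is missing, and they make the whole case split unnecessary. First, the preset edge is placed between a node $u$ and a node $w$ that \emph{never interact before stabilization}, so the perturbed execution is literally identical to the fault-free one up to the stable configuration --- there is no divergent run to control. Second, $w$ is chosen (by pigeonhole over the constant state space) to stabilize in the \emph{same state} $q_1$ as a legitimate neighbor $v$ of $u$. Then a single indistinguishability step finishes the argument: any rule of $\Pi$ that would delete the active edge between $u$ (in state $q_2$) and $w$ (in state $q_1$) would equally fire on the legitimate active edge between $u$ and $v$ in the fault-free execution, contradicting the assumption that that execution had stabilized. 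Hence either the spurious edge persists (output not in $L$) or the protocol was never stable to begin with. You should restructure your argument around these two choices --- non-interacting endpoints and state-matching with a legitimate edge --- rather than around the behavior of $\delta(q_0,q_0,1)$.
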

\begin{proof}
	Assume w.l.o.g. that $\Pi$ stabilizes to a spanning line.
	Since the nodes have constant memory (i.e., constant number of states), there exists at least one state $q_1$ which $O(n)$ nodes stabilize to.
	Consider an execution $E$ where two nodes $v$ and $w$ are in the same state $q_1$ after stabilization at time $t$. Consider also a node $u$ in state $q_2$ which is adjacent to $v$ but not to $w$, and that $u$ and $w$ never interacted with each other until time $t$.
	
	Consider now that the adversary initializes the edge between $u$ and $w$ to \textit{on}, and we run an execution of $\Pi$ which is exactly the same as $E$ ($u$ and $w$ won't update their connection state, as they do not interact until $t'>t$).
	Then, node $u$ stabilizes having three enabled connections. Since $v$ and $w$ are both in the same state $q_1$, $u$ cannot distinguish $v$ and $w$. If there was a rule in $\Pi$ which disconnects $q_2$ and $q_1$, this would also happen in the case where $u$ was not adjacent to $w$, resulting $\Pi$ to stabilize to a graph with at least two disjoint lines, as $u$ would be disconnected from $v$.
\end{proof}

In light of the impossibility result of Theorem \ref{theorem:global_restart}, we allow the nodes to use non-constant local memory in order to develop a fault tolerating procedure based on restart.
Our goal is to come up with a protocol $A$ that can be composed with any NET protocol $\Pi$ (with notifications), so that their composition is a fault-tolerant version of $\Pi$.
Essentially, whenever a fault occurs, $A$ will restart all nodes in a way equivalent to as if a new execution of $\Pi$ had started on the whole remaining population.

We give a protocol that achieves this as follows.
All nodes are initially leaders. Through a standard pairwise leader elimination procedure, a unique leader would be guaranteed to remain in the absence of failures.
But because a fault can remove the last remaining leader, the protocol handles this by generating a new leader upon getting a fault notification.
This guarantees the existence of at least one leader in the population and eventually (after the last fault) of a unique one.
There are two main events that trigger a new restarting phase: a fault and a leader elimination.
As any new event must trigger a new restarting phase that will not interfere with an outdated one, eventually overriding the latter and restarting all nodes once more, we use phase counters to distinguish among phases.
In the presence of a new event it is always guaranteed that a leader at maximum phase will eventually increase its phase, therefore a restart is guaranteed after any event. The restarts essentially cause gradual deactivation of edges (by having nodes remember their degree throughout) and restoration of nodes' states to $q_0$, thus executing $\Pi$ on a fresh initial configuration.
For the sake of clarity, we first present a simplified version of the restart protocol that guarantees resetting the state of every node to a uniform initial state $q_0$.
So, for the time being we may assume that the protocol to be restarted through composition is any Population Protocol $\Pi$ that always starts from the uniform $q_0$ initial configuration (all $u \in V$ in $q_0$ initially).
Later on we shall extend this to handle with protocols that are Network Constructors instead. \\

\noindent \textbf{\textit{Description of the PP Restarting Protocol}}.
The state of every node consists of two components $C_1$ and $C_2$. $C_1$ runs the restart protocol $A$ while $C_2$ runs the given PP $\Pi$.
In general, they run in parallel with the only exception when $A$ restarts $\Pi$.
The $C_1$ component of every node stores a \textit{leader} variable, taking values from $\{l,f\}$, and is initially $l$, a \textit{phase} variable, taking values from $\mathbb{N}_{\geq 0}$, initially $0$, and a \textit{fault} binary flag, initially $0$.

The transition function is as follows. We denote by $x(u)$ the value of variable $x$ of node $u$ and $x'(u)$ the value of it after the transition under consideration.

If a leaders' flag becomes $1$ or $2$, it sets it to $0$, increases its phase by one, and restarts $\Pi$.
If a followers' flag becomes $1$ or $2$, it sets it to $0$, increases its phase by one, becomes a leader, and restarts $\Pi$.
We now distinguish three types of interactions.

When a leader $u$ interacts with a leader $v$, one of them remains leader (state $l$) and the other becomes a follower (state $f$), both set their phase variable to $max\{\text{phase}(u), \text{phase}(v)\}+1$ and both reset their $C_2$ component (protocol $\Pi$) to $q_0$ (i.e., restart $\Pi$).

When a leader $u$ interacts with a follower $v$, if $\text{phase}(u)=\text{phase}(v)$, do nothing in $C_1$ but execute a transition of $\Pi$ (both $u$ and $v$ involved). If $\text{phase}(u) < \text{phase}(v)$, then both set their phase variable to $max\{\text{phase}(u), \text{phase}(v)\}+1$ and both restart $\Pi$, and finally, if $\text{phase}(u) > \text{phase}(v)$, then $\text{phase}'(v) = \text{phase}(u)$ and $v$ restarts $\Pi$.

When a follower $u$ interacts with a follower $v$, if $\text{phase}(u)=\text{phase}(v)$ do nothing in $C_1$ but execute transition of $\Pi$. If $\text{phase}(u)>\text{phase}(v)$, then $v$ sets $\text{phase}'(v)=\text{phase}(u)$ and $v$ restarts $\Pi$, and finally, if $\text{phase}(u)<\text{phase}(v)$, then $u$ sets $\text{phase}'(u)=\text{phase}(v)$ and $u$ restarts $\Pi$. \\

We now show that given any such PP $\Pi$, the above restart protocol $A$ when composed as described with $\Pi$, gives a fault-tolerant version of $\Pi$ (tolerating any number of crash faults).

\begin{lemma}[Leader Election]\label{lemma:leader_election}
	In every execution of $A$, a configuration $C$ with a unique leader is reached, such that no subsequent configuration violates this property.
\end{lemma}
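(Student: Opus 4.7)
The plan is to establish two complementary properties: (i) an invariant stating that every reachable configuration contains at least one leader, and (ii) that after the last crash fault of the execution, the number of leaders is monotonically non-increasing and strictly decreases whenever it exceeds one. Combined with the observation that the only rule that converts a follower back into a leader is a fault notification, these two properties yield both the existence of a unique-leader configuration and its stability.

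To prove (i), I would argue by induction on the length of the execution prefix. Initially every node is in state $l$, so the invariant holds. A leader-leader interaction decreases the count by exactly one but keeps one of the two interacting nodes in state $l$; any other pairwise transition (leader-follower or follower-follower) does not alter leader status. For crash faults, the only nontrivial case is a crash on a leader $u$. If $u$ has at least one active edge, then by the notification rule every adjacent alive node receives flag $1$, and by $\delta_2$ each such notified follower is promoted to a leader while any notified leader remains a leader. If instead $u$ is isolated, the model guarantees that an arbitrary alive node $w$ receives flag $2$, and the corresponding $\delta_2$ rule again produces a leader at $w$. In every case the set of leaders remains nonempty, so the invariant is preserved.

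For (ii), I would use the fact that in any execution at most $n-2$ crashes can occur, so there is a well-defined last-fault step $t^{\star}$. After $t^{\star}$ every alive node's flag is and remains $0$; consequently, no rule of $\delta_2$ can fire again, and the only transition that modifies the leader variable is the leader-leader elimination rule, which strictly decreases the leader count by one. Fairness, together with the completeness of the underlying interaction graph, guarantees that any two coexisting leaders eventually meet. Hence the leader count strictly decreases whenever it exceeds one; by invariant (i) it cannot drop to zero, so it must stabilize at exactly one, and no subsequent configuration can violate uniqueness. The main obstacle I foresee is disentangling the leader-election component $C_1$ from the phase-counter bookkeeping and from the restart of $\Pi$ on $C_2$: one has to verify carefully that phase updates and restarts never flip a node's leader bit, so that the argument above depends solely on the leader/follower and fault-flag transitions and is not disturbed by the concurrent activity of $\Pi$ on the second component.
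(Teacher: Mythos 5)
Your proposal is correct and follows essentially the same two-part argument as the paper: the leader set can only become empty if the unique leader crashes, in which case the fault notification immediately promotes a notified node back to leader, and after the last fault only pairwise eliminations remain, which by fairness converge to a single stable leader. Your version merely spells out the case analysis and the non-interference of the phase/restart bookkeeping in more detail than the paper does.
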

\begin{proof}
	If after the last fault there is still at least one leader, then from that point on at least one more leader appears (due to the fault flags) and only pairwise eliminations can decrease the number of leaders. But pairwise elimination guarantees eventual stabilization to a unique leader.
	It remains to show that there must be at least one leader after the last fault. The leader state becomes absent from the population only when a unique leader crashes. This generates a notification, raising at least one follower's fault flag, thus introducing at least one leader.
\end{proof}

Call a \textit{leader-event} any interaction that changes the number of leaders. Observe that after the last leader-event in an execution there is a stable unique leader $u_l$.

\begin{lemma}[Final Restart]\label{lemma:final_restart}
	On or after the last leader-event, $u_l$ will go to a phase such that $\text{phase}(u_l)>\text{phase}(u),\; \forall u \in V' \setminus \{u_l\}$, where $V'$ denotes the remaining nodes after the crash faults.
	As soon as this happens for the first time, let $S$ denote the set of nodes that have restarted $\Pi$ exactly once on or after that event.
	Then $\forall u \in V' \setminus S, \; u \in S$, an interaction between $u$ and $v$ results in $S \leftarrow S \cup \{u\}$. Thus, $S$ will eventually be $S=V'$.
\end{lemma}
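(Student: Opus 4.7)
The plan is to combine the unique-leader property of Lemma~\ref{lemma:leader_election} with a monotonicity argument on the \textit{phase} component stored in $C_1$. Fix the last leader-event at time $t^*$; by Lemma~\ref{lemma:leader_election} the surviving population $V'$ contains a single leader $u_l$ from $t^*$ onwards. Write $p_l(t)=\mathrm{phase}(u_l,t)$ and $P(t)=\max_{u\in V'}\mathrm{phase}(u,t)$. Inspection of the protocol rules shows that no rule ever decreases a phase and that every strict phase-increase of a node coincides with a restart of $\Pi$ on that node, so $p_l(t)$ and $P(t)$ are both non-decreasing.

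First I would show that $p_l$ eventually attains the stable population maximum. After $t^*$, neither leader--leader interactions nor fault-triggered leader creations can recur (each would itself be a further leader-event), so the only way $P(t)$ can strictly grow is through the leader--follower rule applied with $p_l(t)<\mathrm{phase}(v)=P(t)$, which pushes both endpoints to $P(t)+1$; follower--follower rules only propagate the existing maximum. Hence while $p_l(t)<P(t)$ the fairness condition forces the leader eventually to meet a maximum-phase follower and be boosted. Since the set of phase values present in $V'$ at $t^*$ is finite and each boost strictly raises $p_l$, only finitely many boosts can occur. Thus $P$ stabilises at some $p^*$ and so does $p_l$. Let $E$ be the first step at which $p_l=p^*$; at $E$, $u_l$ has just restarted $\Pi$, so $u_l\in S$ as defined in the lemma.

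Second, I would prove $S=V'$ by fairness. Take any $u\in V'\setminus S$ and any $v\in S$ (so $\mathrm{phase}(v)=p^*$). By the stability of $p^*$, $\mathrm{phase}(u)\leq p^*$; in fact $\mathrm{phase}(u)<p^*$, since otherwise $u$'s most recent phase increase was a restart on or after $E$, placing $u$ in $S$ and contradicting the choice of $u$. The $u$--$v$ interaction therefore triggers either rule~3 of the leader--follower case (when $v=u_l$) or the ``lower adopts higher'' follower--follower rule (when $v$ is a follower); in either case $\mathrm{phase}(u):=p^*$ is set and $\Pi$ is restarted on $u$, so $u$ joins $S$. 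Fairness guarantees that each $u\in V'\setminus S$ eventually participates in such an interaction with a member of $S$, yielding $S=V'$. The main obstacle I anticipate is managing the ties at $p^*$ produced by the boosting rule and by the last leader--leader interaction: the ``exactly once'' restart count of every node in $S$ is preserved precisely because each tie at $p^*$ is created by a transition in which both endpoints restart simultaneously, and every subsequent same-phase interaction leaves $C_1$ untouched, so no node restarts $\Pi$ a second time after joining $S$.
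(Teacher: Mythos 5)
Your proof takes essentially the same route as the paper's: first show that after the last leader-event the unique leader $u_l$ is pushed strictly past the follower maximum by a leader--follower boost (so that the top phase $p^*$ is held only by nodes that restarted $\Pi$ at that step), and then use fairness together with the ``lower phase adopts the higher phase and restarts'' rules to propagate $p^*$, and with it a fresh restart, to all of $V'$. Your write-up is somewhat more explicit than the paper's about phase monotonicity, the termination of the boosting process, and why same-phase interactions preserve the ``exactly once'' restart count, but the decomposition and the key observations coincide with the paper's proof.
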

\begin{proof}
	We first show that on or after the last leader-event there will be a configuration in which $\text{phase}(u_l)>\text{phase}(u),\; \forall u \in V' \setminus \{u_l\}$ and it is stable. As there is a unique leader $u_l$ and follower-to-follower interactions do not increase the maximum phase within the followers population, $u_l$ will eventually interact with a node that is in the maximum phase. At that point it will set its phase to that maximum plus one and we can agree that before that follower also sets its own phase during that interaction to the new max, it has been satisfied that $\text{phase}(u_l)>\text{phase}(u),\; \forall u \in V' \setminus \{u_l\}$.
	
	When the above is first satisfied, $S=\{u_l,u\}$ and $\text{phase}(u_l)=\text{phase}(u)>\text{phase}(v),\; \forall v \in V' \setminus S$. Any interaction within $S$, only executes a normal transition of $\Pi$, as in $S$ they are all in the same phase.
	Any interaction between a $u \in V' \setminus S$ and a $v \in S$, results in $S \leftarrow S \cup \{u\}$, because interactions between followers in $V' \setminus S$ cannot increase the maximum phase within $V' \setminus S$, thus $\text{phase}(v)>\text{phase}(u)$ holds and the transition is: $\text{phase}'(u)=\text{phase}(v)$ and $u$ restarts $\Pi$, thus enters $S$.
	It follows that $S$ cannot decrease and any interaction between the two sets increases $S$, thus $S$ eventually becomes equal to $V'$.
\end{proof}

\noindent Putting Lemma \ref{lemma:leader_election} and Lemma \ref{lemma:final_restart} together gives the aforementioned result.

\begin{theorem}
	For any such PP $\Pi$, it holds that $(A,\Pi)$ is a fault-tolerant version of $\Pi$.
\end{theorem}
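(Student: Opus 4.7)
The plan is to combine the two preceding lemmas and then reduce the analysis of $(A,\Pi)$ under faults to that of a fair fault-free execution of $\Pi$ on the set $V'$ of surviving nodes. Fix an execution with any finite number of crash faults and let $V'$ denote the alive population after the last fault. By Lemma~\ref{lemma:leader_election} there is a time beyond which exactly one leader $u_l \in V'$ persists. By Lemma~\ref{lemma:final_restart}, from some later point on $u_l$ holds a phase strictly larger than every other node of $V'$, and the set $S$ of nodes that have restarted $\Pi$ exactly once on or after that moment grows monotonically until $S = V'$. Once $S = V'$, all nodes of $V'$ sit at the same maximum phase and the fault and leader flags never fire again, so every subsequent interaction falls in the equal-phase case and is, by the definition of the transition function of $A$, a pure $\Pi$-transition.

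The main task is to show that this is not merely a fair $\Pi$-execution on $V'$ from an arbitrary configuration, but one that coincides with a fair fault-free execution of $\Pi$ on $V'$ starting from the uniform initial configuration $q_0^{|V'|}$. I would argue this by building a simulation: upon joining $S$, each node $u$ has its $C_2$ component reset to $q_0$, and from that moment until $S = V'$ the only events that can modify the $\Pi$-state of $u$ are equal-phase interactions with other members of $S$, which are literally $\Pi$-transitions. An inductive argument on the sequence of events in the real execution (new nodes entering $S$, and $\Pi$-transitions between $S$-members) then shows that the joint $\Pi$-configuration on $V'$ at every moment after $u_l$ attains the unique maximum phase is reachable from $q_0^{|V'|}$ by a finite prefix of a fair $\Pi$-execution in which nodes outside $S$ remain dormant at $q_0$. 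In particular, the $\Pi$-configuration $C^{\ast}$ reached the instant $S = V'$ lies in this reachable set.

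Finally, concatenating that simulated prefix with the genuine fair $\Pi$-suffix that occurs after $S = V'$ yields a complete fair execution of $\Pi$ on $V'$ starting from $q_0^{|V'|}$. Because $\Pi$ is assumed to stabilise correctly under any fair scheduler from the uniform initial configuration, this combined execution stabilises to the correct output, and since the $\Pi$-component of $(A,\Pi)$ agrees with it from $C^{\ast}$ onward, so does $(A,\Pi)$. The main obstacle I expect is precisely the simulation step: one must verify that the transient $\Pi$-activity occurring among non-$S$ followers at lower phases (which is wiped out whenever such a follower is pulled into $S$ by an interaction that resets its $C_2$ component to $q_0$) does not leave any residue in $C^{\ast}$, so that $C^{\ast}$ is genuinely reachable from $q_0^{|V'|}$ by legitimate $\Pi$-transitions alone.
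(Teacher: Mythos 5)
Your proof is correct and follows the same route as the paper, which simply combines Lemma~\ref{lemma:leader_election} and Lemma~\ref{lemma:final_restart}; you additionally spell out the step the paper leaves implicit, namely that once $S=V'$ the staggered restarts yield a configuration reachable from $q_0^{|V'|}$ (since each node's $C_2$ is wiped to $q_0$ on entering $S$ and is thereafter touched only by equal-phase $\Pi$-transitions), so the suffix is a genuine fair $\Pi$-execution from the uniform initial configuration. This elaboration is sound and strengthens, rather than departs from, the paper's argument.
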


\begin{lemma}\label{lemma:restart_memory}
	The required memory in each agent for executing protocol $A$ is $O(\log{n})$ bits.
\end{lemma}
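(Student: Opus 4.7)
The plan is to bound the size of each state field that protocol $A$ maintains at a node. The $C_1$ component consists of three fields: the leader indicator $\in\{l,f\}$, the fault flag $\in\{0,1,2\}$, and the phase counter $\in\mathbb{N}_{\geq 0}$. The first two are clearly $O(1)$ bits each, so the task reduces to showing that the phase counter stays within $O(\log n)$ bits, i.e., that the maximum phase $\phi^{*}=\max_{u}\mathrm{phase}(u)$ ever reached in any execution is polynomial in $n$.

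To bound $\phi^{*}$, I would identify the events that can strictly increase it. A direct inspection of the transition rules of $A$ shows that only three kinds of steps do so, each raising $\phi^{*}$ by exactly one; call them \emph{escalation events}: (a) a fault-flag transition on a node currently at phase $\phi^{*}$; (b) a leader-leader interaction with at least one participant at phase $\phi^{*}$; (c) a leader-follower interaction with $\mathrm{phase}(u)<\mathrm{phase}(v)=\phi^{*}$. Consequently $\phi^{*}$ equals the total number of escalation events of types (a), (b), (c).

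Next I would bound each type. For (a): each fault delivers at most $n-1$ notifications, giving at most $f(n-1)=O(fn)$ such events. For (b): each such interaction demotes one leader to a follower, so the total count is at most the number of leader incarnations minus one; leaders arise either initially ($n$ of them) or when a follower's fault flag is processed (at most $f(n-1)$ creations), yielding an $O(fn)$ bound. For (c): I would use a charging argument. After a type-(c) event, the participating leader's phase equals the new maximum, so for the same leader to participate in another type-(c) event some intervening escalation not involving it must have occurred. Charging each type-(c) event to the escalation that last lifted $\phi^{*}$ above its participating leader, together with the $O(fn)$ bound on leader incarnations, gives a polynomial bound on type-(c) events as well.

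Combining the three bounds, $\phi^{*}=\mathrm{poly}(n)$ throughout any execution with $f\leq n$, so the phase counter fits in $O(\log n)$ bits; together with the $O(1)$-bit leader and fault fields, the total $A$-memory per node is $O(\log n)$ bits. The main obstacle is the bound on type-(c) escalations: unlike (a) and (b), which each consume a one-time resource (fault notifications or leader slots), type-(c) events can in principle recur as leaders repeatedly fall behind and catch up to the maximum, so the charging argument has to carefully pair each type-(c) event with a distinct preceding escalation in order to yield a polynomial bound.
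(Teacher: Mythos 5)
Your overall strategy matches the paper's: bound the maximum phase $\phi^{*}$ by a polynomial in $n$ and conclude that the phase counter (the only non-constant field of $C_1$) fits in $O(\log n)$ bits. The paper's own argument is much cruder: it counts $n-1$ phase increments for the pairwise leader eliminations in a failure-free run, observes that each fault can re-promote every node to leader and hence trigger at most another $n-i$ increments, and sums to $\sum_{i=0}^{k}(n-i)=O(kn)$. Your decomposition into escalation events of types (a), (b), (c) is more careful, and in particular you correctly isolate type (c) --- a leader meeting a follower of strictly larger phase --- as the one case that does not consume a one-time resource; the paper's proof never engages with this case at all.

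However, the charging argument you propose for type (c) does not close the gap. You charge each type-(c) event to ``the escalation that last lifted $\phi^{*}$ above its participating leader,'' but that escalation may itself be a type-(c) event performed by a \emph{different} lagging leader, so the charging produces chains rather than a pairing with distinct type-(a)/(b) events. Concretely, take three nodes $u_1,u_2,u_3$ and no faults: after $u_1$ and $u_2$ interact ($u_2$ becomes a follower, both reach phase $1$), the leader $u_3$ at phase $0$ meets the follower $u_2$ and both jump to phase $2$; then $u_1$ (phase $1$) meets $u_2$ (phase $2$) and both jump to phase $3$; then $u_3$ meets $u_2$ again, and so on. Every step is a type-(c) escalation whose ``last lifting'' escalation is the previous type-(c) event, and $\phi^{*}$ grows by one per step for as long as the scheduler defers the $u_1$--$u_3$ interaction --- a deferral whose length the stated fairness condition does not bound, since no configuration repeats. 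So your claimed $\mathrm{poly}(n)$ bound on type-(c) events, and hence on $\phi^{*}$, is not established as sketched; you would need an additional argument limiting how many type-(c) escalations can occur between consecutive type-(a)/(b) events (which the scenario above shows cannot be done from the rules as given), or a different accounting altogether. You have in fact surfaced a real difficulty that the paper's $O(kn)$ count silently skips over.
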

\begin{proof}
	Initially all nodes are potential leaders, and they eliminate each other, moving to next phases at the same time. In the worst case, a single leader $u$ will eliminate every other leader, turning them into followers, thus in a failure-free setting the phase of $u$ becomes at most $n-1$.
	If we consider the case where crash faults may occur, each fault can result in notifying the whole population. This will happen if $u$ was adjacent to every other node by the time it crashed. Thus, all nodes increase their phase by one and become leaders again. In the worst case, a single leader eliminates all the other leaders, thus, after the first fault, the maximum phase will be increased by $n-2$.
	The maximum phase than can be reached is $\sum_{i=0}^{k}(n-i) = O(kn)$, where $k$ is the maximum number of faults that may occur ($k<n$).
	 Thus, each node is required to have $O(\log{n})$ bits of memory.
\end{proof}

\noindent \textbf{\textit{NET Restarting Protocol (with Notifications)}}.
We are now extending the \textit{PP Restarting Protocol} in order to handle any NET protocol $\Pi$ (with notifications). Call this new protocol $B$.
We store in the $C_1$ component of each node $u \in V$ a \textit{degree} variable, that is, whenever a connection is formed or deleted, $u$ increases or decreases the value of \textit{degree} by one respectively.
In addition, whenever the \textit{fault flag} of a node $u$ becomes one, it means that an adjacent node of it has crashed, thus it decreases \textit{degree} by one.
In the case of Network Constructors, the nodes cannot instantly restart the protocol $\Pi$ by setting their state to the initial one $q_0$. By Theorem \ref{theorem:global_restart2}, it is evident that we first need to remove all the edges in order to have a successful restart and eventually stabilize to a correct network.

We now define an intermediate phase, called \textit{Restarting Phase} $R$, where the nodes that need to be restarted enter by setting the value of a variable \textit{restart} to $1$ (stored in the $C_1$ component).
As long as their degree is more that zero, they do not apply the rules of the protocol $\Pi$ in their second component $C_2$, but instead they deactivate their edges one by one.
Eventually their degree reaches zero, and then they set \textit{restart} to $0$ and continue executing protocol $\Pi$. We can say that a node $u$, which is in phase $i$ ($\text{phase}(u)=i$), becomes available for interactions of $\Pi$ (in $C_2$) only after a successful restart. This guarantees that a node $u$ will not start executing the protocol $\Pi$ again, unless its degree firstly reaches zero.

The additional Restarting Phase does not interfere with the execution of the \textit{PP Restarting Protocol}, but it only adds a delay on the stabilization time.

\begin{lemma}\label{lemma:degree}
	The variable \textit{degree} of a node $u$ always stores its correct degree.
\end{lemma}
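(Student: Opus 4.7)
The plan is to prove the claim by induction on the length of the execution prefix, maintaining the invariant that after every configuration change, the stored value $\text{degree}(u)$ equals $|\{v : u \underset{t}{\sim} v\}|$ for every alive node $u$. The base case is immediate: in the initial configuration $C_0$ every edge is inactive, so the true degree of each node is $0$, which matches the initial value of the \textit{degree} variable.

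For the inductive step I would enumerate the three types of events permitted by the model (ordinary interactions, crash faults, and fault notifications processed via $\delta_2$) and argue each case separately. Assume the invariant holds up to configuration $C_{k-1}$ and consider the transition to $C_k$. If the transition is an ordinary interaction between $u$ and $v$ in which the joint state of the edge $uv$ changes from $0$ to $1$, both endpoints participate directly in the interaction and, by construction of $B$, each increments its own \textit{degree} variable by one; the actual degree of each also increases by one, so the invariant is preserved. The case of an edge changing from $1$ to $0$ is symmetric with a decrement. No other node's degree is affected by this transition, and no other \textit{degree} variable is changed, so the invariant trivially continues to hold for all nodes $\neq u,v$.

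If the transition is a crash fault on some node $w$, then the actual degree of every node $v$ that was adjacent to $w$ in $C_{k-1}$ drops by exactly one. By the notification mechanism of the model, each such $v$ receives fault flag $1$ at that same step, and the $\delta_2$ rule of $B$ for flag $1$ decrements its \textit{degree} by one, matching the true change. For the remaining alive nodes there are two subcases: if $w$ had at least one active edge, no other node is notified and no other true degree changes, so the invariant is maintained; if $w$ was isolated, then some arbitrary node $x$ receives flag $2$, but since $x$ was not adjacent to $w$ its true degree is unaffected, and the $\delta_2$ rule for flag $2$ does not touch \textit{degree}, again preserving the invariant.

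The main delicate point to verify carefully is the atomicity of fault handling: one must make sure that the fault notification, the degree decrement, and the resetting of the flag to $0$ all occur before any further interaction of the notified node can be scheduled; otherwise a subsequent interaction of $v$ could, in principle, enable or disable an edge while $v$'s \textit{degree} is momentarily stale, breaking the induction. This is guaranteed by the model's stipulation (Section \ref{nNET}) that $\delta_2$ is applied at the same step as the crash and that the flag becomes zero immediately afterwards, so each event is processed to completion before the next event is scheduled. Once this is made explicit, the induction closes and the lemma follows.
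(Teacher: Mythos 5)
Your proof is correct and follows essentially the same approach as the paper's: a case analysis over edge activations/deactivations, faults on adjacent nodes (flag $1$, decrement), and faults on isolated nodes (flag $2$, no change). You merely make the induction explicit and additionally spell out the atomicity of fault handling, which the paper leaves implicit but which is indeed guaranteed by the sequential-fault assumption and same-step application of $\delta_2$ described in Section \ref{nNET}.
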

\begin{proof}
	In a failure-free setting, whenever a node $u$ forms a new connection, it increases its \textit{degree} variable by one, and whenever it deactivates a connection, it decreases it by one.
	In case of a fault, all the adjacent nodes are notified, as their \textit{fault flag} becomes one. Thus, they decrease their \textit{degree} by one. In case of a fault with no adjacent nodes, a random node is notified, and its \textit{fault flag} becomes two. In that case, it leaves the value of \textit{degree} the same.
\end{proof}

\begin{theorem}
	For any NET protocol $\Pi$ (with notifications), it holds that $(B,\Pi)$ is a fault-tolerant version of $\Pi$.
\end{theorem}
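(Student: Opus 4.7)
The plan is to combine the two lemmas already established for the PP version (Leader Election and Final Restart) with the new edge-deactivation mechanism of the Restarting Phase, and then invoke Theorem~\ref{theorem:global_restart2} implicitly by showing that when $\Pi$ finally resumes, it does so from a configuration equivalent to a legitimate initial configuration of $\Pi$ (all nodes in $q_0$, no active edges).

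First I would argue that Lemmas~\ref{lemma:leader_election} and~\ref{lemma:final_restart} carry over verbatim to protocol $B$, because the phase/leader bookkeeping sits entirely in the $C_1$ component and is untouched by the introduction of the Restarting Phase $R$. In particular, after the last fault there is a unique stable leader $u_l$, and on or after the last leader-event, $u_l$ eventually reaches a phase strictly greater than that of every other alive node, after which the set $S$ of nodes that have entered the restart triggered by that phase monotonically grows until $S=V'$. The only modification is that a node now does not immediately resume executing $\Pi$ upon being restarted: it sets $\emph{restart}\leftarrow 1$ and enters phase $R$.

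Next I would show that phase $R$ terminates correctly for every node. By Lemma~\ref{lemma:degree}, the local variable $\emph{degree}(u)$ always equals the true degree of $u$, whether edges are deactivated by $u$ itself or destroyed by crash faults of neighbours. While $\emph{restart}(u)=1$, node $u$ refuses to execute any transition of $\Pi$ and instead uses each interaction with a still-connected neighbour to deactivate the joining edge, decrementing its degree accordingly. Under the fairness condition, every currently active incident edge is eventually selected, so $\emph{degree}(u)$ reaches $0$ in finitely many steps; at that moment $u$ sets $\emph{restart}\leftarrow 0$ and the $C_2$ component is in state $q_0$ with no active incident edges. A subtle point I would address is that a node $v$ which is not yet in $S$ might still be attempting to perform $\Pi$-transitions during this interval; however, any interaction between $v$ and a node $u\in S$ either raises $v$ to $u_l$'s phase (by the argument in Lemma~\ref{lemma:final_restart}) and puts $v$ into its own Restarting Phase, or, if $u$ is still in $R$, only deactivates the $uv$ edge without firing $\delta$ of $\Pi$. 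In both cases the joint configuration strictly progresses towards having $S=V'$ with every node in $R$ completed.

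Finally I would close the argument by observing that once the final leader-event has occurred and $S=V'$ and every node has left $R$, the global configuration restricted to $C_2$ is indistinguishable from a fresh initial configuration of $\Pi$ on the population $V'$ of $n-f$ alive nodes: every node is in $q_0$, every edge is inactive, and no further fault or leader-event will ever occur to trigger another restart. From that configuration onwards the composition $(B,\Pi)$ simply executes $\Pi$ on $V'$, and since $\Pi$ is a NET protocol for some language $L$, it stabilises to a graph $G\in L$ on $|V'|=n-f$ nodes, which is exactly the guarantee required of a fault-tolerant version of $\Pi$. The main obstacle I would need to be careful about is the interleaving described above: ensuring that stray $\Pi$-transitions executed by low-phase nodes before they are reached by the final restart wave cannot leave active edges that outlive the wave, which is precisely what the $\emph{degree}$-counting mechanism of phase $R$ and the phase-monotonicity of Lemma~\ref{lemma:final_restart} are designed to rule out.
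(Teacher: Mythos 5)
Your proposal is correct and follows essentially the same route as the paper's own proof: both reduce the claim to Lemmas~\ref{lemma:leader_election} and~\ref{lemma:final_restart} (arguing the Restarting Phase does not disturb the $C_1$ bookkeeping), use Lemma~\ref{lemma:degree} to guarantee each restarting node reaches degree zero before resuming $\Pi$, and handle the interaction between a node in $R$ and a not-yet-restarted node by noting the edge is deactivated and the lagging node will itself be pulled into the restart wave. Your write-up is somewhat more explicit than the paper's about why the final configuration is indistinguishable from a fresh initial configuration of $\Pi$ on the surviving population, but no new ideas are involved.
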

\begin{proof}
	Consider the case where a node $u$ (either leader or follower) needs to be restarted. It enters to the restarting phase in order to deactivate all of its enabled connections, and it will start executing $\Pi$ only after its degree becomes zero (by Lemma \ref{lemma:degree} this will happen correctly), thus, $\Pi$ always run in nodes with no spurious edges (edges that are the result of previous executions).
	Whenever two connected nodes $u \in R$ and $v \notin R$ interact with each other, they both decrease their \textit{degree} variable by one, and they delete the edge joining them. Obviously, this fact interferes with the execution of $\Pi$ in node $v$ (which is not in the restarting phase), but $v$ is surely in a previous phase than $u$ and will eventually also enter in $R$. This follows from the fact that a node in some phase $i$ can never start forming new edges before it has successfully deleted all of its edges before. New edges are only formed with nodes in the same phase $i$.

	The new \textit{Restarting Phase} does not interfere with the states of the \textit{PP Restarting Protocol}, thus the correctness of $B$ follows by Lemma \ref{lemma:leader_election} and Lemma \ref{lemma:final_restart}.
\end{proof}

\begin{lemma}\label{lemma:restart_memory2}
	The required memory in each agent for executing protocol $B$ is $O(\log{n})$ bits.
\end{lemma}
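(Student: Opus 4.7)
The plan is to argue that protocol $B$ only augments the state of protocol $A$ with a bounded-range counter and a constant-size flag, so the overall space remains logarithmic. First I would recall that, by Lemma \ref{lemma:restart_memory}, the $C_1$ component of protocol $A$ already fits in $O(\log n)$ bits per agent: the leader/follower bit and the fault flag are of constant size, while the \emph{phase} variable is bounded by $O(kn) = O(n^2)$ and therefore needs only $O(\log n)$ bits to encode. Thus the underlying restart machinery is logarithmic, and it suffices to bound the additional information introduced by $B$.

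Next I would enumerate the extra fields added in $B$. There are exactly two: the binary \emph{restart} flag used to mark that a node is currently in the Restarting Phase $R$, and the \emph{degree} counter used to track the number of currently active incident edges (updated on edge activations, edge deactivations, and fault notifications, as justified by Lemma \ref{lemma:degree}). The \emph{restart} flag costs a single bit. For the \emph{degree} counter, the key observation is that at any point during any execution an agent $u \in V$ can have at most $n - 1$ active neighbours, since the overlay graph is defined on the population of size $n$. Hence \emph{degree} takes values in $\{0, 1, \dots, n-1\}$ and can be stored using $\lceil \log_2 n \rceil = O(\log n)$ bits.

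Finally I would combine the two contributions. The total per-agent memory of $B$ consists of (i) the $O(\log n)$ bits required by $A$, (ii) the $O(\log n)$ bits for the \emph{degree} counter, (iii) $O(1)$ bits for the \emph{restart} flag, plus whatever memory is consumed by the $C_2$ component executing the underlying NET protocol $\Pi$ (which is assumed to fit the model's per-agent budget and is therefore accounted for separately). Summing the contributions of the restart machinery gives $O(\log n)$ bits per agent, establishing the claim.

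The main (and essentially only) obstacle is to make sure that the \emph{degree} variable really never needs to exceed the range $\{0,\dots,n-1\}$; this follows immediately from the fact that edges are only created between pairs of alive agents in the population, so no node can simultaneously hold more than $n-1$ active edges, regardless of how the phases and restarts interleave. Everything else is a routine sum of $O(\log n)$ and $O(1)$ terms.
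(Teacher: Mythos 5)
Your proof is correct and follows essentially the same route as the paper: both bound the \emph{degree} counter by $n-1$ (the paper via the maximum reachable degree $d \leq n-1$ of $\Pi$), invoke Lemma \ref{lemma:restart_memory} for the phase counter, and conclude that the combined state space (the paper counts $O(dkn) = O(n^3)$ states; you equivalently sum $O(\log n)$-bit fields) needs $O(\log n)$ bits per agent. The only cosmetic difference is that you additionally itemize the one-bit \emph{restart} flag, which the paper absorbs silently.
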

\begin{proof}
	The maximum value that the variable \textit{degree} can reach is the \textit{maximum reachable degree} (\textit{d}) of protocol $\Pi$. Thus, by Lemma \ref{lemma:restart_memory}, the states that each node is required to have is $O(dkn)$. Both $d$ and $k$ are less that $n-1$, thus, $O(n^3) \; \text{states} = O(\log{n})$ bits.
\end{proof}

\section{Conclusions and Open Problems}\label{conclusions}

A number of interesting problems are left open for future work. Our only exact characterization was achieved in the case of unbounded faults and no notifications. If faults are bounded, non-hereditary languages were proved impossible to construct without notifications but we do not know whether hereditary languages are constructible. Relaxations, such as permitting waste or partial constructibility were shown to enable otherwise impossible transformations, but there is still work to be done to completely characterize these cases. In case of notifications, we managed to obtain fault-tolerant universal constructors, but it is not yet clear whether the assumptions of waste and local coin tossing that we employed are necessary and how they could be dropped. Apart from these immediate technical open problems, some more general related directions are the examination of different types of faults such as random, Byzantine, and communication/edge faults. Finally, a major open front is the examination of fault-tolerant protocols for stable dynamic networks in models stronger than NETs.

\newpage

\bibliographystyle{alpha-abr}
{\normalsize \bibliography{bibliography}}

\newcommand{\etalchar}[1]{$^{#1}$}
\begin{thebibliography}{DGFGR06}

\bibitem[AAD{\etalchar{+}}06]{AADFP06}
D.~Angluin, J.~Aspnes, Z.~Diamadi, M.~J. Fischer, and R.~Peralta.
\newblock Computation in networks of passively mobile finite-state sensors.
\newblock {\em Distributed Computing}, 18[4]:235--253, March 2006.

\bibitem[AAER07]{AAER07}
D.~Angluin, J.~Aspnes, D.~Eisenstat, and E.~Ruppert.
\newblock The computational power of population protocols.
\newblock {\em Distributed Computing}, 20[4]:279--304, November 2007.

\bibitem[AAFJ08]{AAFJ08}
D.~Angluin, J.~Aspnes, M.~J. Fischer, and H.~Jiang.
\newblock Self-stabilizing population protocols.
\newblock {\em ACM Trans. Auton. Adapt. Syst.}, 3[4]:1--28, 2008.

\bibitem[BBB13]{BBB13}
J.~Beauquier, P.~Blanchard, and J.~Burman.
\newblock Self-stabilizing leader election in population protocols over
  arbitrary communication graphs.
\newblock In {\em International Conference on Principles of Distributed
  Systems}, pages 38--52. Springer, 2013.

\bibitem[CLV{\etalchar{+}}17]{CLV17}
C.~Cooper, A.~Lamani, G.~Viglietta, M.~Yamashita, and Y.~Yamauchi.
\newblock Constructing self-stabilizing oscillators in population protocols.
\newblock {\em Information and Computation}, 255:336--351, 2017.

\bibitem[DDG{\etalchar{+}}14]{DDGRS14}
Z.~Derakhshandeh, S.~Dolev, R.~Gmyr, A.~W. Richa, C.~Scheideler, and
  T.~Strothmann.
\newblock Brief announcement: amoebot--a new model for programmable matter.
\newblock In {\em Proceedings of the 26th ACM symposium on Parallelism in
  algorithms and architectures}, pages 220--222. ACM, 2014.

\bibitem[DDG{\etalchar{+}}18]{DDG18}
J.~J. Daymude, Z.~Derakhshandeh, R.~Gmyr, A.~Porter, A.~W. Richa,
  C.~Scheideler, and T.~Strothmann.
\newblock On the runtime of universal coating for programmable matter.
\newblock {\em Natural Computing}, 17[1]:81--96, 2018.

\bibitem[DGFGR06]{DFGR06}
C.~Delporte-Gallet, H.~Fauconnier, R.~Guerraoui, and E.~Ruppert.
\newblock When birds die: Making population protocols fault-tolerant.
\newblock In {\em IEEE 2nd Intl Conference on Distributed Computing in Sensor
  Systems (DCOSS)}, volume 4026 of {\em Lecture Notes in Computer Science},
  pages 51--66. Springer-Verlag, June 2006.

\bibitem[DIM93]{DIM93}
S.~Dolev, A.~Israeli, and S.~Moran.
\newblock Self-stabilization of dynamic systems assuming only read/write
  atomicity.
\newblock {\em Distributed Computing}, 7[1]:3--16, Nov 1993.

\bibitem[DLFI{\etalchar{+}}17]{DFI17}
G.~A. Di~Luna, P.~Flocchini, T.~Izumi, T.~Izumi, N.~Santoro, and G.~Viglietta.
\newblock Population protocols with faulty interactions: the impact of a
  leader.
\newblock In {\em International Conference on Algorithms and Complexity
  (CIAC)}, pages 454--466. Springer, 2017.

\bibitem[DLFS{\etalchar{+}}19]{DFS19}
G.~A. Di~Luna, P.~Flocchini, N.~Santoro, G.~Viglietta, and Y.~Yamauchi.
\newblock Shape formation by programmable particles.
\newblock {\em Distributed Computing}, pages 1--33, 2019.

\bibitem[Dol00]{Do00}
S.~Dolev.
\newblock {\em Self-stabilization}.
\newblock MIT Press, Cambridge, MA, USA, 2000.

\bibitem[DT01]{DB01}
B.~Ducourthial and S.~Tixeuil.
\newblock Self-stabilization with r-operators.
\newblock {\em Distributed Computing}, 14[3]:147--162, Jul 2001.

\bibitem[GK10]{GK10}
N.~Guellati and H.~Kheddouci.
\newblock A survey on self-stabilizing algorithms for independence, domination,
  coloring, and matching in graphs.
\newblock {\em Journal of Parallel and Distributed Computing}, 70[4]:406 --
  415, 2010.

\bibitem[GKR10]{GKR10}
K.~Gilpin, A.~Knaian, and D.~Rus.
\newblock Robot pebbles: One centimeter modules for programmable matter through
  self-disassembly.
\newblock In {\em Robotics and Automation (ICRA), 2010 IEEE International
  Conference on}, pages 2485--2492. IEEE, 2010.

\bibitem[GR09]{GR09}
R.~Guerraoui and E.~Ruppert.
\newblock Names trump malice: Tiny mobile agents can tolerate byzantine
  failures.
\newblock In {\em International Colloquium on Automata, Languages, and
  Programming (ICALP)}, pages 484--495. Springer, 2009.

\bibitem[MCS11]{MCS11-2}
O.~Michail, I.~Chatzigiannakis, and P.~G. Spirakis.
\newblock Mediated population protocols.
\newblock {\em Theoretical Computer Science}, 412[22]:2434--2450, May 2011.

\bibitem[Mic18]{Mi18}
O.~Michail.
\newblock Terminating distributed construction of shapes and patterns in a fair
  solution of automata.
\newblock {\em Distributed Computing}, 31[5]:343--365, 2018.

\bibitem[MOKY12]{MOKY12}
R.~Mizoguchi, H.~Ono, S.~Kijima, and M.~Yamashita.
\newblock On space complexity of self-stabilizing leader election in mediated
  population protocol.
\newblock {\em Distributed Computing}, 25[6]:451--460, 2012.

\bibitem[MS16]{MS16a}
O.~Michail and P.~G. Spirakis.
\newblock Simple and efficient local codes for distributed stable network
  construction.
\newblock {\em Distributed Computing}, 29[3]:207--237, 2016.

\bibitem[MS17]{MS17}
O.~Michail and P.~G. Spirakis.
\newblock Network constructors: A model for programmable matter.
\newblock In B.~Steffen, C.~Baier, M.~van~den Brand, J.~Eder, M.~Hinchey, and
  T.~Margaria, editors, {\em SOFSEM 2017: Theory and Practice of Computer
  Science}, pages 15--34, Cham, 2017. Springer International Publishing.

\bibitem[MSS19]{MSS18}
O.~Michail, G.~Skretas, and P.~G. Spirakis.
\newblock On the transformation capability of feasible mechanisms for
  programmable matter.
\newblock {\em Journal of Computer and System Sciences}, 102:18--39, 2019.

\bibitem[Pel09]{Pe09}
D.~Peleg.
\newblock As good as it gets: Competitive fault tolerance in network
  structures.
\newblock In R.~Guerraoui and F.~Petit, editors, {\em Stabilization, Safety,
  and Security of Distributed Systems}, pages 35--46, Berlin, Heidelberg, 2009.
  Springer Berlin Heidelberg.

\end{thebibliography}

\end{document}